
\documentclass[12pt,a4paper]{article} 
\usepackage[utf8]{inputenc}
\usepackage[left=3cm,right=3cm,top=3cm,bottom=3cm]{geometry}

\usepackage{authblk} 

\author[1]{Eva-Maria Maier}
\author[1,2]{Almond St\"ocker}
\author[3]{Bernd Fitzenberger}
\author[1]{Sonja Greven}
\affil[1]{Chair of Statistics, School of Business and Economics, Humboldt-Universit\"at zu Berlin, Germany}
\affil[2]{Institute of Mathematics, EPFL, Lausanne, Switzerland}
\affil[3]{IAB (Institute for Employment Research), Nuremberg, Germany}


\date{} 

\delimitershortfall-0.1pt

\usepackage[english]{babel}

\usepackage{fancyhdr}

\usepackage{enumerate}

\parindent0pt

\usepackage{amsmath}
\usepackage{mathtools}
\usepackage{amsfonts}
\usepackage{amssymb}
\usepackage{amsthm}
\usepackage{esvect}

\usepackage{nicefrac}

\usepackage{dsfont}
\usepackage{bbm}


\usepackage{float}
\usepackage[rflt]{floatflt}
\usepackage{graphicx}
\usepackage{xcolor}


\newcommand{\bpl}{B^p(\mu)}
\newcommand{\B}{B^2(\mu)}
\newcommand{\Bl}{B^2(\lambda)}
\newcommand{\Bd}{B^2(\delta^\bullet)}
\newcommand{\Ln}{L^2_0(\mu)}
\newcommand{\Lnl}{L^2_0(\lambda)}
\newcommand{\Lnd}{L^2_0(\delta^\bullet)}
\newcommand{\Ltwo}{L^2(\mu)}

\newcommand{\lpl}{L^p(\mu)}
\newcommand{\lpnl}{L^p_0(\mu)}
\newcommand{\fnu}{f_\nu}

\DeclareMathOperator{\clr}{clr}
\newcommand{\rangleb}{\rangle_{\B}}

\newcommand{\Vertb}{\Vert_{\B}}
\newcommand{\Vertbl}{\Vert_{\Bl}}
\newcommand{\Vertbd}{\Vert_{\Bd}}

\newcommand{\fh}{\hat{f}}

\newcommand{\gh}{\hat{g}}

\newcommand{\isb}{=_{\Bcal}}
\newcommand{\rhoy}{\rho_{y_i}}
\newcommand{\rhof}{\rho_{f_i}}
\newcommand{\Jc}{\iota_{\mathrm{c}}}
\newcommand{\Jd}{\iota_{\mathrm{d}}}
\newcommand{\fc}{f_{\mathrm{c}}}
\newcommand{\fd}{f_{\mathrm{d}}}
\newcommand{\Jtc}{\tilde{\iota}_{\mathrm{c}}}
\newcommand{\Jtd}{\tilde{\iota}_{\mathrm{d}}}
\newcommand{\ftc}{\tilde{f}_{\mathrm{c}}}
\newcommand{\ftd}{\tilde{f}_{\mathrm{d}}}

\newcommand{\Sfc}{\Scal_\lambda (f_{\mathrm{c}})}

\newcommand{\SfT}{\Scal_{\Tcal}(f)}
\newcommand{\SfTn}{\Scal_{\Tcal_0}(f)}

\newcommand{\ootimes}{\unitlength1ex
\begin{picture}(3,2)
\put(1.5,0.6){\circle{2}}\put(1.5,0.6){\makebox(0,0){$\otimes$}}
\end{picture}}

\newtheorem{thm}{Theorem}[section]

\newtheorem{prop}[thm]{Proposition}

\newtheoremstyle{cited}{}{}{\itshape}{}{
}{\textbf{.}}{.5em}{\thmname{\textbf{#1}}\thmnumber{ \textbf{#2}} #3%
}

\theoremstyle{cited}

\theoremstyle{definition}


\newcommand{\ra}{\rightarrow}

\newcommand{\Lra}{\Longrightarrow}
\newcommand{\Ra}{\Rightarrow}

\newcommand{\La}{\Leftarrow}

\newcommand{\Llra}{\Longleftrightarrow}
\newcommand{\Lera}{\Leftrightarrow}




\newcommand{\Ebb}{\mathbb{E}}

\newcommand{\Nbb}{\mathbb{N}}
\newcommand{\Pbb}{\mathbb{P}}

\newcommand{\Rbb}{\mathbb{R}}


\newcommand{\bfe}{\mathbf{b}}

\newcommand{\Cf}{\mathbf{C}}

\newcommand{\eh}{\hat{e}}

\newcommand{\hh}{\hat{h}}

\newcommand{\If}{\mathbf{I}}

\newcommand{\Pf}{\mathbf{P}}

\newcommand{\Qf}{\mathbf{Q}}

\newcommand{\yh}{\hat{y}}

\newcommand{\xf}{\mathbf{x}}
\newcommand{\Zf}{\mathbf{Z}}

\newcommand{\betah}{\hat{\beta}}
\newcommand{\epsh}{\hat{\varepsilon}}
\newcommand{\epst}{\tilde{\varepsilon}}

\newcommand{\gammaf}{\boldsymbol{\gamma}}

\newcommand{\gammafh}{\hat{\boldsymbol{\gamma}}}

\newcommand{\thetaf}{\boldsymbol{\theta}}

\newcommand{\zetaf}{\boldsymbol{\zeta}}

\newcommand{\zetafh}{\hat{\boldsymbol{\zeta}}}

\newcommand{\Id}{\mathbf{I}}

\newcommand{\ft}{\tilde{f}}




\newcommand{\dmu}{\mathrm{d}\mu}
\newcommand{\dmumu}{\mathrm{d}(\mu \otimes \mu)}
\newcommand{\dnu}{\mathrm{d}\nu}
\newcommand{\dlamb}{\mathrm{d}\lambda}
\newcommand{\ddel}{\mathrm{d}\delta}
\newcommand{\ddelb}{\mathrm{d}\delta^\bullet}
\newcommand{\dt}{\mathrm{d}t}

\newcommand{\dups}{\mathrm{d}\upsilon}

\newcommand{\Cov}{\mathrm{Cov}}

\newcommand{\SSE}{\mathrm{SSE}}

\newcommand{\rMSE}{\mathrm{relMSE}}




\DeclareMathOperator{\id}{id}

\DeclareMathOperator*{\argmin}{argmin}

\newcommand{\OR}{O\! R}

\newcommand{\Acal}{\mathcal{A}}
\newcommand{\Bcal}{\mathcal{B}}
\newcommand{\Ccal}{\mathcal{C}}
\newcommand{\Dcal}{\mathcal{D}}

\newcommand{\Mcal}{\mathcal{M}}

\newcommand{\Pcal}{\mathcal{P}}
\newcommand{\Scal}{\mathcal{S}}
\newcommand{\Tcal}{\mathcal{T}}

\newcommand{\Ycal}{\mathcal{Y}}







\usepackage[
	backend=biber, 
    giveninits=true, 
    style=authoryear,	
	autocite=inline,
	mincitenames=1, 
	maxcitenames=2, 
	maxbibnames=99, 
	uniquename=false,
	uniquelist=minyear, 
	labeldateparts=true, 
	doi=false,
	url=false,
	eprint=false,
	isbn=false,
	natbib=true, 
	date=year
	]{biblatex}
	
\addbibresource{References.bib}

\DeclareNameAlias{sortname}{last-first} 


%
%

\renewbibmacro{in:}{} 
\DeclareFieldFormat{pages}{#1}
\DeclareFieldFormat{pagetotal}{#1~pages}
\DeclareFieldFormat[article]{volume}{\mkbibbold{#1}} 
\DeclareFieldFormat[article,inbook,incollection,inproceedings,patent,thesis,unpublished]{title}{#1} 

\AtEveryBibitem{\clearfield{number}}
\AtEveryBibitem{\clearfield{issue}}
\numberwithin{equation}{section}
\numberwithin{table}{section}
\numberwithin{figure}{section}

\usepackage{hyperref}
\usepackage{multirow}
\usepackage{subcaption}
\usepackage{csquotes}

\newcounter{jasa}
\setcounter{jasa}{0}

\newcounter{aoas}
\setcounter{aoas}{1}


\begin{document}

\title{Additive Density-on-Scalar Regression in Bayes Hilbert Spaces with an Application to Gender Economics}

\maketitle

\begin{refsection}
\begin{abstract}
\noindent
Motivated by research on gender identity norms and the distribution of the woman’s share in a couple’s total labor income, we consider functional additive regression models for probability density functions as responses with scalar covariates. 
To preserve nonnegativity and integration to one under vector space operations, we formulate the model for densities in a Bayes Hilbert space, 
which allows to not only consider continuous densities, but also, e.g., discrete or mixed densities. 
Mixed ones occur in our application, as the woman’s income share is a continuous variable having discrete point masses at zero and one for single-earner couples. 
Estimation is based on a gradient boosting algorithm, allowing for potentially numerous flexible covariate effects and model selection. 
We develop properties of Bayes Hilbert spaces related to subcompositional coherence, yielding (odds-ratio) interpretation of effect functions and simplified estimation for mixed densities via an orthogonal decomposition. 
Applying our approach to data from the German Socio-Economic Panel Study (SOEP) shows a more symmetric distribution in East German than in West German couples after reunification and a smaller child penalty comparing couples with and without minor children. 
These West-East differences become smaller, but are persistent over time.

\vspace{0.5cm}
\noindent%
\textbf{Keywords:} Density Regression; Functional Additive Model; Gradient Boosting; Mixed Densities. 
\end{abstract}


\section{Introduction}

In the core of their discussion of economic consequences of gender identity, 
\citet{bertrand2015} consider 
the distribution of a wife's share 
in the total labor income of a wife-husband couple in the U.S., represented by the density. 
They focus on the hypothesis that the distribution exhibits a distinct drop at $0.5$, which is attributed to gender identity norms according to which a husband should earn more than his wife. 
Subsequent studies on couples in Germany show a mixed picture with respect to this drop (e.g., \citealp{sprengholz2020, kuehnle2021}),
while also indicating that distributions differ in West compared to East Germany. 
Furthermore, employment and earnings of female partners show a strong childhood penalty \citep{kleven2019,fitzenberger2013} while social norms change over time towards higher employment of females, with part-time employment becoming more prevalent, especially in the presence of children.
Thus, it is of great interest to take the child situation in the household and time trends into account.
This highlights the relevance of analyzing female share distributions depending on covariates, which is not done systematically so far -- potentially also because of a lack of convenient frameworks. 
We aim to fill this gap, introducing a regression approach to analyze probability densities given scalar covariates. 
%
\\ 
Densities $f_i$ reflecting distributions in different (sub-)populations $i=1,\dots, N$ preserve more information than scalar statistics like the mean, 
enabling more in-depth investigations and insights. 
In particular, they give a more fine-grained picture of often multi-modal income share distributions (Figure \ref{original_densities_south_east}, top) and show individual variability in the population, avoiding over-simplification. 
Understanding the density functions as genuine object of analysis, however, demands for suitable statistical methodology: 
We will model densities in dependence on scalar covariates, which we refer to as \emph{density-on-scalar regression} along the lines of function-on-scalar regression in functional data analysis \citep{morris2015}. 
Although densities have been modeled via traditional functional regression models in $L^2$ spaces in the past \citep[e.g.][]{park2012}, this is problematic, as it does not reflect the particular properties of density functions (nonnegativity and integration to one). 
Instead, we consider the $f_i$ as elements of a \emph{Bayes Hilbert space} \citep{egoz2006,vdb2014}, based on an alternative normed vector space structure for densities.
The additive density-on-scalar regression model framework we introduce extends the range of available covariate effects compared to the linear density-on-scalar model of \citet{talska2017} by non-linear effects and complements the 
additive regression model for general Hilbert space responses of \citet{jeon2020additive}, who utilize backfitting with a Nadaraya-Watson-type estimator for smooth main effects of continuous covariates, 
and do not provide a comparably modular and ready-to-use framework for statistical modeling as well as no implementation. 
Moreover, in contrast to earlier gender-economic and Bayes Hilbert space literature, we consider mixed continuous-discrete distributions where the densities $f_i$ are not solely with respect to the Lebesgue measure,  but have additional point masses, here at $0$ and $1$, where either the woman or the man has zero income.
\\  
Apart from the Bayes Hilbert space approach, analysis of densities (or respective distributions) has been based on different alternative mathematical representations. 
Wasserstein distances have been employed, e.g., by \citet{petersen2019} for linear density-on-scalar regression and by \citet{ghodrati2022distribution} for specialized density-on-density regression. 
The Fisher-Rao metric as another option was recently used by \citet{zhao2023density} for a similarly specialized density-on-density approach.
Log hazard and log quantile transformations have been proposed to represent a distribution in an $L^2$ space, which was used by \citet{han2020} to apply additive functional regression models for density-on-scalar regression. 
Compared to these alternatives, statistical analysis in Bayes Hilbert spaces, besides the mathematical convenience of modeling densities in a linear space, offers a major advantage: 
generalizing the Aitchison geometry \citep{aitchison1986} to infinite dimensions, 
we may expect \emph{subcompositional coherence}, a central principle in compositional data analysis, to carry over to analyses of density data.
Translated to probability distributions, the principle states that an analysis conditioning on a subdomain of the densities must not be contrary to analyzing the whole densities. 
For our application, for example, this translates to consistency of analyses for all and for double-earner couples only.
Besides being generally desirable, this is of practical relevance, as it allows to reduce results to smaller regions for a detailed interpretation, which, as we will show, corresponds to familiar odds-ratio interpretations in scalar logit regression.
Moreover, we show how restriction of the density to a subdomain can be viewed as an orthogonal projection, implying the property of \emph{subcompositional dominance}  known from compositional data analysis 
\citep[distances of densities should be smaller or equal when restricted to a subdomain;][]
{egozglahn2011BasicConceptsProcedures} 
to also hold in Bayes Hilbert spaces. These properties do not hold 
for the other approaches mentioned above. 
\\
There is a variety of less directly connected approaches in the literature which, 
instead of modeling a conditional mean density of a sample of density functions, model the conditional distribution of a scalar response variable beyond scalar mean regression. 
These include generalized additive models for location, scale and shape (GAMLSS) modeling multiple distribution parameters, also referred to as distributional regression \citep[e.g.,][]{rigby2005}, conditional transformation models \citep[e.g.,][]{hothorn2014}, 
quantile regression \citep[e.g.,][]{koenker2005} and distribution function regression \citep[e.g.,][]{hall1999}, 
as well as various approaches to conditional density estimation \citep[e.g.,][]{gu1995conditionallogsplinedensity,macEachern1999,li2021}.
Although related, they address a fundamentally different problem from the one we focus on here.
\\ 
The contributions of this paper go well beyond our motivating analysis of the female income share distribution to express gender-based income differences,
an important issue of major interest:
I.\ We establish the (estimated) female share distribution itself as object of statistical analysis beyond its previous descriptive use. 
II.\ For its analysis, we propose an additive density-on-scalar regression framework. Models are fitted via gradient boosting, which we III.\ formulate for responses in Bayes Hilbert spaces.
We integrate the approach and its implementation into the modular functional boosting framework provided by the \texttt{R} package \texttt{FDboost} \citep{brh2020}. 
The component-wise fitting facilitates specification of parameter-intense functional effects 
and avoids over-fitting via early stopping based on density-wise cross-validation. 
This also yields inherent model selection, which enables identifying relevant variables as an alternative to statistical testing.
IV.\ We consider continuous densities, discrete probability mass functions (compositional data), and, unlike previous work, also mixtures of both within one unified framework. 
This is motivated by the nature of the female share distribution and based on Bayes Hilbert spaces for general finite measures \citep{vdb2014}.  
V.\ We derive useful properties for Bayes Hilbert  spaces  related to the principle of subcompositional coherence, 
facilitating detailed analytic interpretations in such spaces that are relevant also beyond regression. These include point/interval-wise odds ratio interpretations of differences and model effects (Proposition \ref{thm:oddsratio}), conditional densities as orthogonal projections (Proposition \ref{thm_subcompositional_coherence}) and orthogonal decomposition of mixed densities into continuous and discrete components (Proposition B.1). 
Using I.-V.\, we then VI.\ investigate gender-specific income differences in German couples based on the Socio-Economic Panel (SOEP, \citealp{soep2019}), 
clearly illustrating different share distributions depending on the child status, and for East vs. West Germany, with some assimilation occurring after reunification but also differences persisting over time. 
Due to its history of two different political systems, the case of Germany is particularly interesting and nicely shows the usefulness of the proposed approach. 
A simulation study based on the SOEP data confirms good estimation quality.\\
%
We introduce our additive density-on-scalar regression approach in Section \ref{chapter_regression}.
In Section \ref{chapter_subcomp}, we discuss decomposability properties useful for model interpretation and partly also for estimation. 
We model female share distributions based on the SOEP data in Section \ref{chapter_application} and present a simulation study based thereon in Section \ref{chapter_simulation}, before a final discussion in Section \ref{chapter_conclusion}.

\section{Density-on-scalar regression}\label{chapter_regression}
To formulate regression models with probability densities~$f$ as response, we will consider $f$ 
as an element of a Bayes Hilbert space 
\citep{vdb2014}. 
Thus, we first briefly introduce Bayes Hilbert spaces in Section~\ref{chapter_bayes_hilbert_space}, before formulating our structured additive regression models therein in Section~\ref{chapter_regression_model} and presenting our boosting algorithm for estimation in Section~\ref{chapter_estimation_bayes}.

\subsection{The Bayes Hilbert space}\label{chapter_bayes_hilbert_space}


A Bayes Hilbert space $\B$ is constructed somewhat analogously to $L^2(\mu)$, 
but built on the alternative vector space structure of Bayes spaces \citep{vdb2010} grounded on relative rather than absolute differences. 
An isomorphism $\clr: B^2(\mu) \mapsto L^2_0(\mu)$ to the closed subspace $L^2_0(\mu) = \{ \ft \in L^2(\mu) \mid \int \ft \, \dmu = 0 \} \subset L^2(\mu)$ of square integrable functions integrating to zero allows carrying out many computations effectively in $L^2(\mu)$.
The formal construction is summarized in the following.
More detailed discussion and proofs are provided in appendix~A. 


Let $(\Tcal, \Acal)$ be a measurable space and $\mu$ a finite measure on it. 
E.g., for income share distributions analyzed in Section~\ref{chapter_application}, consider $\Tcal = [0,1]$, $\Acal$ its Borel $\sigma$-algebra, and $\mu = \lambda + \delta_0 + \delta_1$ with $\lambda$ the Lebesgue measure and $\delta_t$, $t\in\Tcal$, Dirac measures at $t$. 
In the set $\Mcal(\mu) = \Mcal(\Tcal, \Acal, \mu)$ of $\sigma$-finite measures 
with the same null sets as $\mu$, each measure possesses a $\mu$-almost every\-where ($\mu$-a.e.) positive and unique density $f$ with respect to $\mu$ (Radon-Nikodym derivative). For simplicity, we identify measures in $\Mcal(\mu)$ with their $\mu$-densities.
This notion of densities does not imply a fixed integral of one.
However, considering two densities $f_1, f_2 \in \Mcal (\mu)$ equivalent if they are 
proportional, $f_1 \propto f_2$, i.e., if there is a $c > 0$ with $f_1 = c \, f_2$ (here and in the following, pointwise identities have to be understood $\mu$-a.e.), 
in practice, we choose the probability density $f / \int_\Tcal f \, \dmu$ as representative of a $\propto$-equivalence class (if possible). 
The set $\Bcal(\mu) = \Bcal(\Tcal, \Acal, \mu)$ of $\propto$-equivalence classes, 
called the \emph{Bayes space (with reference measure $\mu$)}, is a real vector space with addition $\oplus$ and scalar multiplication $\odot$ defined as
$f_1 \oplus f_2 := f_1 \, f_2$ (\emph{perturbation}) and $\alpha \odot f_1 := (f_1)^\alpha$ (\emph{powering}) for $f_1, f_2 \in \Bcal (\mu)$ and $\alpha \in \Rbb$.
\footnote{We do not distinguish $f\in \Mcal(\mu)$ and its equivalence class $[f]\in \Bcal(\mu)$ in the notation, denoting both by $f$ in the following, but clarify its use whenever not clear from the context.}
To obtain probability densities, resulting representatives have to be re-normalized. 
The additive neutral element $0_\Bcal \in \Bcal(\mu)$ is the equivalence class of constant functions 
(containing the density of $\mu$), the additive inverse element is $\ominus f := 1 / f$, and the multiplicative neutral element is $1 \in \Rbb$.
For subtraction, we write $f_{1} \ominus f_{2} := f_{1} \oplus (\ominus f_{2})$. 
%

Analogously to $L^p$ spaces, $B^p$ spaces for $1 \leq p < \infty$ are defined as
$ 
B^p(\mu) 
= B^p(\Tcal, \Acal, \mu) 
:= \{ f \in \Bcal (\mu) ~|~ \int_{\Tcal} | \log f |^p \, \dmu < \infty \}.
$ 
Since $f \in B^p(\mu)$ is equivalent to $\log f \in L^p(\mu)$,
we have $B^q(\mu) \subset B^p(\mu)$ for $p,q \in \Rbb$ with $1\leq p<q$.
Note that for every $p \in \Rbb$ with $1 \leq p < \infty$, the space $\bpl$ is a vector subspace of $\Bcal (\mu)$, see \citet{vdb2014}. 
The \emph{centered log-ratio (clr) transformation of $f \in B^p(\mu)$} 
is
\begin{align}
\clr_{B^p(\Tcal, \Acal, \mu)} [f] 
:= \log f - \Scal_{B^p(\Tcal, \Acal, \mu)} (f), \label{def_clr} 
\end{align}
with $\Scal_{B^p(\Tcal, \Acal, \mu)} (f) := 1 / \mu (\Tcal) 
\, \int_{\Tcal} \log f \, \dmu$ the mean logarithmic integral.
We omit the indices $B^p(\Tcal, \Acal, \mu)$ 
or shorten them to $\mu$ or $\Tcal$, if the underlying space is clear from context. 

\begin{prop}[For $p = 1$ in \citealp{vdb2014}]
\label{clr_isomorphism}
For $1 \leq p < \infty$, 
$\clr : \bpl \ra \lpnl$ is an isomorphism with inverse $\clr^{-1} [\ft] = \exp \ft$.
\end{prop}

The space $B^2(\mu)$ with inner product 
$ \langle f_{1} , f_{2} \rangleb := \int_{\Tcal} \clr [f_{1}] \, \clr [f_{2}] \, \dmu$, where $f_1,f_2 \in B^2(\mu)$,
is called the \emph{Bayes Hilbert space (with reference measure $\mu$)} and indeed is a Hilbert space \citep{vdb2014}.
The induced norm on $\B$ is $\Vert f \Vertb := ({\langle f, f \rangleb})^{1/2}$.
By definition, we have
$ 
\langle f_1, f_2 \rangleb = \langle \clr [f_1] , \clr [f_2] \rangle_{L^2(\mu)},
$ 
which immediately implies that $\clr: \B \ra \Ln$ is isometric.

Bayes Hilbert spaces enable a variety of different applications.
Usually, $\Tcal \subset \Rbb$ with three common cases: 
The \emph{continuous case} denotes $\Tcal = I$ being a nontrivial interval with $\Acal = \mathfrak{B}$ the Borel $\sigma$-algebra restricted to $I$ and $\mu = \lambda$ the Lebesgue measure. 
The \emph{discrete case} refers to $\Tcal = \Dcal := \{ t_1, \ldots , t_D\} \subset \Rbb, ~ D \in \Nbb$,
with $\Acal = \mathcal{P}(\Tcal)$ the power set of $\Dcal$ and $\mu = \delta := \sum_{d = 1}^D w_d \, \delta_{t_d}$ a weighted sum of Dirac measures, where $w_d > 0$. 
The \emph{mixed case} is a mixture of both, with $\Tcal = I \cup \Dcal$, 
$\Acal$ being the smallest $\sigma$-algebra containing all closed subintervals of $I$ and all points of $\Dcal$ ($\Acal = \mathfrak{B}$ if $\Dcal \subset I$), 
and $\mu = \delta + \lambda$. 
Note that the mixed case contains the continuous and discrete cases as special cases, allowing either $\Dcal = \emptyset$ or $I = \emptyset$.
Our application in Section~\ref{chapter_application} gives an example for the mixed case.
The corresponding Bayes Hilbert spaces are also denoted as \emph{continuous}, \emph{discrete}, or \emph{mixed}.

Note that due to the construction of Bayes Hilbert spaces, 
$\lambda$ is no valid reference measure for densities on 
$\Tcal = \Rbb$ (with Borel $\sigma$-algebra $\mathfrak{B}_{\Rbb}$).
The probability measure corresponding to the standard normal distribution is an alternative \citep{vdb2014}.
Furthermore, Bayes Hilbert spaces only contain positive densities. 
If a density is not directly observed but estimated from an observed sample, density values of zero can be avoided by choosing a density estimation method that yields a positive density. 
For discrete sets $\Tcal$, one option is to replace observed density values of zero with small values 
(e.g., \citealp{pawlowsky2015}).

\subsection{Regression model}\label{chapter_regression_model} 
Density-on-scalar regression is motivated by and (at least in the continuous case)  closely related to function-on-scalar regression 
as the clr transformation of density-on-scalar models yields function-on-scalar models in $\Ln$.
%
Thus, analogously to the function-on-scalar model of \citet{brh2015}, where the response functions are elements of $L^2(I, \mathfrak{B}, \lambda)$, for data pairs~$(f_i, \xf_i) \in \B \times \Rbb^K, ~K \in \Nbb, ~i = 1, \ldots , N, ~ N \in \Nbb$,
we consider the structured additive density-on-scalar regression model
\begin{align}
f_i = h(\xf_i) \oplus \varepsilon_i = \bigoplus_{j=1}^J h_j (\xf_i) \oplus \varepsilon_i, \label{modelgleichung} 
\end{align}
where $\varepsilon_i \in \B$ are functional error terms with $\Ebb(\varepsilon_i) = 0_\Bcal \in \B$ and  $h_j(\xf_i) \in \B$ are $J \in \Nbb$ partial effects.
The expectations of the $\B$-valued random elements~$\varepsilon_i$ are defined via the Bochner integral (e.g., \citealp{hsing2015}). 
Each partial effect $h_j(\xf_i) \in \B$ 
models an effect of none, one or more covariates in $\xf_i$ and thus $J \neq K$ in general. 

\ifnum\value{aoas}=1
{
\begin{table}[h]
\caption{Partial effects for density-on-scalar regression ($x$ denoting scalar covariates, $\beta$ and $g(\,)$ densities in $\B$).
\label{possible_effects}
}
\begin{center}
\begin{tabular}{lll}
\hline
Covariate(s) & Type of effect & $h_j(\xf)$\\
\hline
None & Intercept & $\beta_0$ \\
One scalar covariate $x$ & Linear effect & $x \odot \beta$ \\
 & Flexible effect & $g(x)$ \\
Two scalar covariates $x_1, x_2$ & Linear interaction & $x_1 \odot (x_2 \odot \beta)$ \\
 & Functional varying coefficient & $ x_1 \odot g(x_2)$ \\
 & Flexible interaction & $g(x_1, x_2)$ \\[0.15cm]
Grouping variable $k$ & Group-specific intercepts & $\beta_k$ \\
Grouping variable $k$ and scalar $x$ & Group-specific linear effects & $x \odot \beta_k$ \\
 & Group-specific flexible effects & $g_k(x)$ \\
 \hline
\end{tabular}
\end{center}
\end{table}
} \else
{
\begin{table}[h]
\begin{center}
\begin{tabular}{l|l|l}
Covariate(s) & Type of effect & $h_j(\xf)$\\
\hline
None & Intercept & $\beta_0$ \\
One scalar covariate $x$ & Linear effect & $x \odot \beta$ \\
 & Flexible effect & $g(x)$ \\
Two scalar covariates $x_1, x_2$ & Linear interaction & $x_1 \odot (x_2 \odot \beta)$ \\
 & Functional varying coefficient & $ x_1 \odot g(x_2)$ \\
 & Flexible interaction & $g(x_1, x_2)$ \\
\hline
Grouping variable $k$ & Group-specific intercepts & $\beta_k$ \\
Grouping variable $k$ and scalar $x$ & Group-specific linear effects & $x \odot \beta_k$ \\
 & Group-specific flexible effects & $g_k(x)$ \\
\end{tabular}
\end{center}
\caption{Partial effects for density-on-scalar regression: Scalar covariates are denoted by $x$, densities in $\B$ by $\beta$ and $g(\,)$.
\label{possible_effects}}
\end{table}
\vspace{-0.4cm}
} \fi

Table~\ref{possible_effects} gives an overview of possible partial effects, inspired by Table~1 in \citet{brh2015}.
The upper part shows effects for up to two different scalar covariates. 
In the lower part, group-specific effects for categorical variables are presented.
Interactions of the given effects are possible as well.
%
Note that constraints are necessary to obtain identifiable models. 
For a model with an intercept $\beta_0$, this is obtained by centering the partial effects:
\begin{align}
\frac1{N} \odot \bigoplus_{i=1}^N h_j(\xf_i) = 0. \label{identifiability_constraint}
\end{align}
This constraint can be included based on \citet[Section 1.8.1]{wood2017} as in appendix~A of \citealp{brh2015} for function-on-scalar regression models.
Similarly, interaction effects can be centered around the main effects \citep[see appendix~A of][]{stoecker2019}.

\subsection{Estimation by Gradient Boosting}\label{chapter_estimation_bayes}

To estimate the function $h (\xf_i) \in \B$ in Equation~(\ref{modelgleichung}), the aim is to minimize the sum of squared errors
\begin{align}
\SSE(h) := 
\sum_{i=1}^N \Vert \varepsilon_i \Vertb^2 = 
\sum_{i=1}^N \Vert f_i \ominus h(\xf_i) \Vertb^2
= \sum_{i=1}^N \rho_{f_i}\left(h(\xf_i)\right).
\label{sse}
\end{align}
Here, $\rho_{f_i}: \B \ra \Rbb, \, f \mapsto \Vert f_i \ominus f \Vertb^2$ is the quadratic loss functional.
We consider a basis representation for each partial effect:
\begin{align}
h_j(\xf_i) = \left( \bfe_j(\xf_i) \ootimes \bfe_Y \right)^\top \thetaf_j 
= \bigoplus_{n=1}^{K_j} \bigoplus_{m=1}^{K_Y} b_{j, n}(\xf_i) \odot b_{Y, m} \odot \theta_{j, n, m}
, \label{basisfunktionen_kronecker}
\end{align}
where $\bfe_j = (b_{j, 1}, \ldots , b_{j, K_j})^\top: \Rbb^K \ra \Rbb^{K_j}$ is a vector of basis functions describing the covariate effect, e.g., splines for smooth non-linear effects, and $\bfe_Y = ( b_{Y, 1}, \ldots, b_{Y, K_Y})^\top$ $\in \B^{K_Y}$ is a vector of basis functions in the response space.
A suitable choice of this tensor product basis thus allows to linearize flexible covariate effects on the response densities.
With~$\ootimes$, we denote the Kronecker product of a real-valued 
with a $\B$-valued matrix.
It is defined like the Kronecker product 
of two real-valued matrices, using 
$\odot$ instead of the usual multiplication.
Similarly, matrix multiplication of a real-valued 
with a $\B$-valued matrix is defined 
by replacing sums with 
$\oplus$ and products with 
$\odot$ in the usual matrix multiplication.
Our goal is to estimate the coefficient vector~$\thetaf_j = (\theta_{j, 1, 1}, \ldots , 
\theta_{j, K_j, K_Y}) \in \Rbb^{K_j \, K_Y}$.
To allow sufficient flexibility for $h_j$, the product $K_j \, K_Y$ can be chosen to be large.
The necessary regularization can then be accomplished with a Ridge-type penalty term $\thetaf_j^\top \Pf_{j,Y}\thetaf_j$. 
For a basis representation as in equation~(\ref{basisfunktionen_kronecker}), an anisotropic penalty matrix 
$
\Pf_{j,Y} = \lambda_j (\Pf_j \otimes \If_{K_Y}) + \lambda_Y (\If_{K_j} \otimes \Pf_Y)
$
can be used. 
Here, $\Pf_j \in \Rbb^{K_j \times K_j}$ and $\Pf_Y \in \Rbb^{K_Y \times K_Y}$ are suitable penalty matrices for $\bfe_j$ and $\bfe_Y$, respectively,
and $\lambda_j,\lambda_Y \geq 0$ are smoothing parameters in the respective directions. 
Alternatively, a simplified isotropic penalty matrix
$
\Pf_{j,Y} = \lambda_j ( (\Pf_j \otimes \If_{K_Y}) + (\If_{K_j} \otimes \Pf_Y) ) 
$
with only one smoothing parameter is possible \citep{brh2020}.
The penalized basis representation allows for very flexible modeling of effects, in analogy to established additive models for scalar data \citep{wood2017}.

We fit model~(\ref{modelgleichung}) using a component-wise gradient boosting algorithm, where the (empirical) expected loss is minimized step-wise along the steepest gradient descent.
It is an adaption of the algorithm presented in \citet{brh2015}, which was built on that in \citet{hothorn2014}.
Advantages of this approach are that it can deal with a large number of covariates, it performs variable selection, and includes regularization. \citet{Buehlmann2003L2boosting} discuss theoretical properties of gradient boosting with respect to sum of squares errors, which is typically referred to as $L^2$-Boosting, for scalar responses. They show -- simplifying to a single learner -- that bias decays exponentially fast while estimator variance increases in exponentially small steps over the boosting iterations, which supports the general practice of stopping the algorithm early before it eventually reaches the standard (penalized) least squares estimate. \citet{LutzBuehlmann2006BoostingHighMultivariate} show consistency of component-wise $L^2$-Boosting for linear regression with both high-dimensional multivariate response and predictors. Similar to these predecessors, our $L^2$-Boosting algorithm for Bayes Hilbert spaces simplifies to repeated re-fitting of residuals -- which, however, present densities in our case.

\ifnum\value{jasa}=1 {\vspace{-0.3cm}} \else{} \fi
\subsubsection*{Algorithm: Bayes space $L^2$-Boosting for density-on-scalar models}
\begin{enumerate}
\item\ifnum\value{jasa}=1 {\vspace{-0.28cm}} \else{} \fi 
Select vectors of basis functions~$\bfe_Y, \bfe_j$, the starting coefficient vector $\thetaf_j^{[0]} \in \Rbb^{K_j \, K_Y}$, 
and penalty matrices~$\Pf_{j,Y}, j = 1, \ldots, J$. 
Choose the step-length $\kappa \in (0,1)$ and the stopping iteration $m_{\text{stop}}$ and set the iteration number $m$ to zero.
We comment on a suitable selection of these quantities below.
\item\ifnum\value{jasa}=1 {\vspace{-0.28cm}} \else{} \fi 
Calculate the negative gradient of the empirical risk with respect to the Fr\'{e}chet differential (see appendix~B 
for the proof of this equation)
\begin{align}
U_i := 
\ominus \nabla \rhof (f) \big|_{{f = \hh^{[m]}(\xf_i)}}
= 2 \odot \left( f_i \ominus \hh^{[m]}(\xf_i)\right) 
, \label{gradient}
\end{align}
where $\hh^{[m]}(\xf_i) = \bigoplus_{j=1}^J \left( \bfe_j(\xf_i)^\top \ootimes \bfe_Y^\top \right) \thetaf_j^{[m]}$. 
Fit the base-learners
\begin{align}
\gammafh_j 
&= \argmin_{\gammaf \in \Rbb^{K_j \, K_Y}} \sum_{i=1}^N \bigl\Vert U_i \ominus\left( \bfe_j(\xf_i)^\top \ootimes \bfe_Y ^\top \right) \gammaf \bigr\Vert_{\B}^2 + \gammaf^\top \Pf_{jY} \gammaf  \label{fit_baselearner}
\end{align}
for $j = 1, \ldots , J$ and select the best base-learner
\begin{align}
j^\ast 
= \argmin_{j = 1, \ldots , J} \sum_{i=1}^N \bigl\Vert U_i \ominus\left( \bfe_j(\xf_i)^\top \ootimes \bfe_Y ^\top \right) \gammafh_j \bigr\Vert_{\B}^2.  \label{select_baselearner}
\end{align}
\item\ifnum\value{jasa}=1 {\vspace{-0.28cm}} \else{} \fi 
The coefficient vector corresponding to the best base-learner is updated, the others stay the same: $\thetaf_{j^\ast}^{[m+1]} := \thetaf_{j^\ast}^{[m]} + \kappa \, \gammafh_{j^\ast}, ~\thetaf_j^{[m+1]} := \thetaf_j^{[m]} \quad \text{for}~j \neq j^\ast$.
\item\ifnum\value{jasa}=1 {\vspace{-0.28cm}} \else{} \fi 
While $m<m_{\text{stop}}$, increase $m$ by one and go back to step~2. Stop otherwise.
\end{enumerate}

The resulting estimator of model~(\ref{modelgleichung}) is
$
\hat{f}_i = \hat{\Ebb}(f_i ~|~ 
\xf_i) = \bigoplus_{j=1}^J \hh_j^{[m_{\text{stop}}]} (\xf_i),
$
with $\hh_j^{[m_{\text{stop}}]} (\xf_i) = ( \bfe_j(\xf_i)^\top \ootimes \bfe_Y ^\top ) \thetaf_j^{[m_{\text{stop}}]}$.
%
In the following, we discuss the selection of parameters in step 1, see also \citet{brh2015, brh2020}.

The choice of vectors of basis functions 
$\bfe_j$ and penalty matrices 
$\Pf_j$ depends on the desired partial effect $h_j(\xf)$. 
A suitable choice for flexible nonlinear effects is, e.g., B-splines with a difference penalty. 
For a linear effect of one covariate, 
set $\bfe_j = (1, \id): \Rbb \ra \Rbb^2, ~x \mapsto (1, x)$, yielding the design matrix of a simple linear model, with, e.g., a Ridge penalty, $\Pf_j = \Id_2$.
%
A basis $\bfe_Y \in \B^{K_Y}$ can be obtained from a suitable basis $\bar{\bfe}_Y \in L^2(\mu)^{K_Y + 1}$ 
by transforming $\bar{\bfe}_Y$
to $\Ln^{K_Y}$ 
(see appendix~C 
for 
details)
and applying the inverse clr transformation component-wise. 
%
For the continuous case, a reasonable choice for $\bar{\bfe}_Y \in L^2(\lambda)^{K_Y + 1}$ is a B-spline basis with a difference penalty, allowing flexible modeling of the response densities. 
For the discrete case, a suitable selection is $\bar{\bfe}_Y = (\mathbbm{1}_{\{t_1\}}, \ldots, \mathbbm{1}_{\{t_D\}}) \in L^2 (\sum_{d=1}^D w_d \, \delta_{t_d})^D$, where $\mathbbm{1}_{A}$ is the indicator function of $A \in \Acal$.
Again, a difference penalty can be used to control variability of the estimates, if smoothness across $t_1, \ldots , t_D$ is a reasonable assumption.
The mixed case is not as straightforward. We show in Section~\ref{chapter_mixed_model} that it can be decomposed into a continuous and a discrete component.
I.e., it is not necessary to explicitly select basis functions $\bfe_Y \in B^2(\mu)^{K_Y}$ for the mixed case, as they can be obtained by concatenating the basis functions of the continuous and the discrete components.

Selecting the smoothing parameters is also important for regularization.
They are specified such that the degrees of freedom  are equal for all base-learners, to ensure a fair base-learner selection in each iteration of the algorithm. 
Otherwise, selection of more flexible base-learners is more likely than that of less flexible ones, see~\citet{hofner2011}.
However, the effective degrees of freedom of an effect after $m_{\text{stop}}$ iterations will in general differ from those preselected for the base learners in each single iteration.
They are successively adapted to the data.
The starting coefficient vectors $\thetaf_j^{[0]}$ are usually all set to zero, enabling variable selection as an effect that is never selected stays at zero. 
Like in functional regression, a suitable offset can be used for the intercept to improve the convergence rate of the algorithm, e.g., the mean density of the responses in $\B$.
Note that a constant scalar offset, which is another common choice in functional regression, equals zero $0_\Bcal$ in the Bayes Hilbert space and thus corresponds to no offset.
The optimal number of boosting iterations $m_{\text{stop}}$ can be found with cross-validation, sub-sampling or bootstrapping, with samples generated on the level of elements of $\B$. 
The early-stopping avoids overfitting.
Finally, the value $\kappa = 0.1$ for the step-length is suitable in most applications for 
a quadratic loss function \citep{brh2020}. 
A smaller step-length usually requires a larger value for~$m_{\text{stop}}$. While the in-bag risk reduction provides a variable importance measure, further validation out-of-sample is straight-forwardly possible via an outer cross-validation or bootstrap. 
\\
Note that the estimation problem can also be solved in $\Ln$ based on the clr transformed model, with the estimates in $\B$ obtained by applying the inverse clr transformation, as proposed by~\citet{talska2017} for functional linear models on closed intervals. 
For our functional additive models, gradient boosting can be performed in $\Ln$ 
analogously to the algorithm described above.
The results of both algorithms are equivalent via the clr transformation, which we show in appendix~D. 
In the continuous case, this yields the functional boosting algorithm of \citet{brh2015} with the modification that the basis functions $\bfe_Y$ are constrained to be elements of $L^2_0(\lambda)$ instead of $L^2(\lambda)$. 

\section{Divide and conquer: subcompositional coherence and related properties}
\label{chapter_subcomp}
Understanding the whole density as genuine object of interest is fundamental to object oriented data analysis \citep{marron2021object}. Being able to focus on 
parts of the density in a way coherent with the overall analysis, in analogy to the analysis of subvectors in Euclidean spaces, 
is however a major advantage for interpretation and potentially for computations. 
In this section, we discuss different properties of Bayes Hilbert spaces that allow to focus analysis of densities on selected parts of interest and aid in interpretations. 
All properties are related to the principle of \emph{subcompositional coherence} \citep[e.g.,][]{pawlowsky2015}, which (translated directly from compositional data analysis) states that 
any analysis of densities $f_1, \dots, f_N \in B^2(\Tcal, \Acal, \mu)$ should be coherent with a corresponding analysis of $f_1|_{\tilde{\Tcal}}, \dots, f_N|_{\tilde{\Tcal}}$ restricted to a subset $\tilde{\Tcal} \in \Acal$ of the domain $\Tcal$. 
From a probabilistic perspective, we may think of the restriction as probability density $f_i( \cdot \mid \tilde{\Tcal}) \propto f_i|_{\tilde{\Tcal}}$ conditional on the event $\tilde{\Tcal}$. 
Accordingly, a probabilistic principle of subcompositional coherence can be phrased as:
\emph{Comparison of two probability distributions conditional on an event $\tilde{\Tcal}$ should not depend on their distribution outside of $\tilde{\Tcal}$.}
This is desirable for at least two reasons: 
1) In many data scenarios, observed and analyzed distributions are in fact restricted to a certain part of a potential set of outcomes due to practical feasibility. 
Their analysis should be compatible with a potential more comprehensive study. 
2) For detailed analysis, one might want to focus on certain aspects, reducing the attention to parts of the domain.
This should be compatible with the whole analysis.  
E.g., in the setting of our application on income share distributions (Section~\ref{chapter_application}),  an analysis only considering double-income households should yield compatible results to an analysis additionally including single-earner households.

In the following, we make more precise in which sense Bayes Hilbert spaces feature subcompositional coherence. 
We show how differences between densities in a Bayes Hilbert space are naturally understood in terms of odds ratios (Section~\ref{chapter_oddsratio_differences}) and how this allows
for local model interpretation (Section~\ref{chapter_interpretation}).
Then, we show how restriction to a subdomain $\tilde{\Tcal}$ can be interpreted as a projection onto a 
subspace (Section \ref{chapter_subcomp_projection}) as in compositional data analysis.
Such a projection is used for decomposing a mixed density into its discrete and continuous parts, discussed in Section~\ref{chapter_mixed_model} and later used to simplify estimation in the analysis of mixed female income share densities in Section \ref{chapter_application}.
All proofs are provided in appendix~B. 
 


\subsection{Odds ratio interpretation of differences}\label{chapter_oddsratio_differences}
The distance induced by the norm on $\B$ as defined in Section~\ref{chapter_bayes_hilbert_space}
can (similar to \citet{egoz2006}, but written in terms of odds ratios) also be formulated as
\begin{equation*}
	\| f_1 \ominus f_2 \|_{B^2(\Tcal)} = \big( \frac{1}{2 \, \mu(\Tcal)} \int_{\Tcal} \int_{\Tcal} \big(\log \frac{f_1(s) / f_1(t)}{f_2(s) / f_2(t)} \big)^2 \, \dmu(s) \, \dmu(t) \big)^{1/2},
\end{equation*}

which reveals the strong connection of the Bayes Hilbert space geometry and odds ratios. 
The distance essentially aggregates (infinitesimal) odds ratios $\OR(s,t) := \frac{f_1(s) / f_1(t)}{f_2(s) / f_2(t)}$ of
the odds for observing values at $s$ versus at $t$ according to $f_1$ 
over the corresponding odds according to $f_2$. 
Accordingly, the distance is similarly locally driven to $L^2$-distances, only that it is based on the relation between two points $s$ and $t$.
Due to their relative nature, odds ratios can be easily restricted to $\OR|_{\tilde{\Tcal} \times \tilde{\Tcal}}$ when considering (re-normalized) densities $f_1|_{\tilde{\Tcal}}, f_2|_{\tilde{\Tcal}}$ on a subset $\tilde{\Tcal} \subset \Tcal$.
As well-established tool for comparison of probabilities, well-known e.g. from logistic regression, odds ratios can thus serve as a key tool for subcompositionally coherent interpretation of differences $f_1 \ominus f_2$ between densities (or probability distributions), also in our application in Section~\ref{chapter_application}, quantifying local differences including 
direction.

To make this more precise, we point out the relation of $\OR(s,t)$ to usual odds ratios formulated for probabilities rather than densities, 
where $\mathbb{P}_1$ and $\mathbb{P}_2$ denote the probability measures corresponding to $f_1$ and $f_2$, respectively.
In the discrete case as introduced in Section~\ref{chapter_bayes_hilbert_space}, 
the correspondence is immediate and $\OR(s,t) = \frac{\mathbb{P}_1(\{s\}) / \mathbb{P}_1(\{t\})}{\mathbb{P}_2(\{s\}) / \mathbb{P}_2(\{t\})} = \frac{\mathbb{P}_1(\{s\} \mid \{s,t\}) / (1 - \mathbb{P}_1(\{s\} \mid \{s,t\}))}{\mathbb{P}_2(\{s\} \mid \{s,t\}) / (1 - \mathbb{P}_2(\{s\} \mid \{s,t\}))}$ is the odds ratio 
for two (of potentially more) outcomes, corresponding also to the most common binary odds ratio when conditioning the outcome on being either $s$ or $t$.
%
In a general mixed Bayes Hilbert space (including discrete and continuous ones as special case),
$\OR(s,t)$ can be interpreted as limit of usual odds ratios in the vicinity of $s$ and $t$, and provides bounds for odds ratios for general events $A, B \in \Acal$, as summarized in the proposition below.

\begin{prop}
	\label{thm:oddsratio}
	Let $B^2(\mu)$ be a mixed Bayes Hilbert space (compare Section~\ref{chapter_bayes_hilbert_space}) 
    and $\Acal^+ := \{ A\in\Acal ~|~\mu(A)>0 \}$. Then,
	
	\begin{enumerate}[(a)]
		\item\label{thm:odds_probabilities_inequality} for all $A, B \in \Acal^+$, $\inf_{s\in A, t \in B} \OR(s,t) \leq  \frac{\Pbb_1 (A) \, / \, \Pbb_1 (B)}{\Pbb_2 (A) \, / \, \Pbb_2 (B)}  \leq \sup_{s\in A, t \in B} \OR(s,t),$
		\item\label{thm:odds_ratio_continuous_mixed} for ($\mu$-almost) all $s,t \in \Tcal$ and for $A_n, B_n \in \Acal^+$ nested sequences of intervals centered at $s$ and $t$, respectively, with $\bigcap_{n \in \Nbb} A_n = \{s\}$ and $\bigcap_{n \in \Nbb} B_n = \{t\}$,
		\begin{align*}
			\OR(s,t)=
			\lim_{n\rightarrow\infty} \frac{\Pbb_1 (A_n) \, / \, \Pbb_1 (B_n)}{\Pbb_2 (A_n) \, / \, \Pbb_2 (B_n)} . 
		\end{align*}
	\end{enumerate}

\end{prop}

Point (\ref{thm:odds_probabilities_inequality}) in particular entails that if $\OR(s,t) > 1$ for all $s\in A, t \in B$, then ${\Pbb_1 (A) \, / \, \Pbb_1 (B)} > {\Pbb_2 (A) \, / \, \Pbb_2 (B)}$,
which analogously holds when conditioning on any event $\tilde{\Tcal} \supset A \cup B$, illustrating the subcompositional coherence of the odds ratio.
When considering, by contrast, $\Pbb_1 (A) > \Pbb_2 (A)$, we cannot infer that $\Pbb_1 (A \mid \tilde{\Tcal}) > \Pbb_2 (A\mid \tilde{\Tcal})$.
By conditioning on outcomes in $A$ or $B$, $\OR(s,t) > 1$ can, however, be translated to an inequality of probabilities $\Pbb_1 (A \mid A\cup B) > \Pbb_2 (A\mid A\cup B)$. 
Note that the limit in (\ref{thm:odds_ratio_continuous_mixed}) 
is even well-defined and meaningful for comparison between points with $\mu(\{s\}) = 0$ mass and positive mass $\mu(\{t_d\}) = w_d > 0$ in mixed densities, since $\mu(A_n)/\mu(B_n)$ cancels out. 

\subsection{Odds ratio interpretation of additive effects}\label{chapter_interpretation}

Such an odds ratio interpretation of differences is naturally employed for a subcompositionally coherent interpretation of an effect 
in an additive model 
as introduced in Section \ref{chapter_regression_model}. 
For simplicity and without loss of generality, consider a model 
$f_i = h \oplus \varepsilon_i 
= h_0 \oplus h_1 \oplus \varepsilon_i$ with two effects $h_j: \Rbb^K \rightarrow B^2(\Tcal)$, $j \in \{0,1\}$, suppressing the dependence on the covariates $\mathbf{x}_i \in \Rbb^K$ in the notation.
Here, $h_1 = h \ominus h_0$ makes up the difference between the full predictor $h$ and all other effects in the model $h_0$ and 
determines their odds ratios 
\begin{align*}
	\OR_1(s,t) := \frac{(h_0(s)\oplus h_1(s))/(h_0(t) \oplus h_1(t))}{h_0(s)/h_0(t)} = h_1(s)/h_1(t) && \text{where}~s,t\in\Tcal.
\end{align*}
Clearly, $\OR_1(s,t)$ is independent of $h_0$, and hence allows for ceteris paribus interpretation as in usual linear models. 
On $\clr$ level, it might be tempting to interpret $\clr[h_1](s)>0$ as increasing effect on the overall density $h(s)$ at $s$, which is however not valid.
Instead, an appropriate relative interpretation is again obtained via odds ratios by simply using that $\log \OR_1(s,t) = \clr[h_1](s) - \clr[h_1](t)$, such that vertical differences in plots translate into log odds and in particular their sign. 
Further ideas for interpreting effects 
are developed in appendix~E, including the interpretation of our model as a family of scalar-on-scalar logistic models. 
The interpretation via odds ratios is illustrated in our application in Section~\ref{chapter_application}.


\subsection{Conditioning as projection and subcompositional dominance}
\label{chapter_subcomp_projection}
For a coherent regression approach, it is necessary that linear problems may be restricted onto subsets of the domain consistently with the geometry of the underlying space.
In the following, we show that this applies to Bayes Hilbert spaces, since restriction corresponds to orthogonal projection.
This result will in particular be used in Section~\ref{chapter_mixed_model} to simplify estimation in the mixed case.

From the definition of the norm in Section \ref{chapter_bayes_hilbert_space},
it is immediately evident that for two densities $f_1,f_2 \in B^2(\Tcal) := B^2(\Tcal, \Acal, \mu)$, the distance $\|f_1 \ominus f_2\|_{B^2({\Tcal})} \geq \| f_1|_{\tilde{\Tcal}} \ominus f_2|_{\tilde{\Tcal}} \|_{B^2({\tilde{\Tcal}})}$ is greater or equal to the distance between densities on a subdomain, $B^2(\tilde{\Tcal}) := B^2(\tilde{\Tcal}, \Acal \cap \tilde{\Tcal}, \mu)$. 
This property is referred to as \emph{subcompositional dominance} in compositional data analysis 
and already indicates that restriction/conditioning of the densities behaves similar to a projection in Bayes Hilbert spaces. 
The following proposition shows how $f\mid_{\tilde{\Tcal}}$ can indeed be understood as orthogonal projection of $f \in B^2(\Tcal)$, by first introducing a canonical embedding that enables us to identify the Bayes Hilbert space $B^2(\tilde{\Tcal})
$ with a closed subspace of $B^2(\Tcal)$. 

\begin{prop}\label{thm_subcompositional_coherence}
	For any $\tilde{\Tcal} \in \Acal$ with $\mu(\tilde{\Tcal})>0$, the space $B^2(\tilde{\Tcal}) = B^2(\tilde{\Tcal}, \Acal \cap \tilde{\Tcal}, \mu)$ is a closed subspace of $B^2(\Tcal) = B^2(\Tcal, \Acal, \mu)$ with respect to the embedding
	\begin{align*}
		\iota: B^2(\tilde{\Tcal}) \hookrightarrow B^2(\Tcal), &&
		\tilde{f} \mapsto
		\begin{cases}
			\tilde{f} & \text{on}~\tilde{\Tcal} \\
			\exp \Scal_{\tilde{\Tcal}} (\tilde{f}) & \text{on}~\Tcal \setminus \tilde{\Tcal}
		\end{cases} \, ,
	\end{align*}
	where $\Scal_{\tilde{\Tcal}} (\tilde{f})$ is the mean logarithmic integral as defined in \eqref{def_clr}.\footnote{Note that $\exp \Scal_{\tilde{\Tcal}}(\ft)$ corresponds to the geometric mean of $\ft$ on $\tilde{\Tcal}$ using the natural generalization of the usual definition of the geometric mean over a discrete set: For $\Tcal = \{s_1, \ldots , s_L\}$ and $g \in B^2 (\Tcal, \Pcal(\Tcal), \sum_{l = 1}^L \delta_{s_l})$, the geometric mean of 
    $g$ on $\Tcal$ is $(\prod_{l = 1}^L g(s_l))^{1/L}
    = \exp \Scal_{B^2 (\Tcal, \Pcal(\Tcal), \sum_{l = 1}^L \delta_{s_l})}(g)$. 
}\label{footnote_geometric_mean}
	This means that $\iota$ is linear and preserves the norm.
	The orthogonal projection onto this closed subspace is given by
	\begin{align*}
		P: B^2(\Tcal) \ra B^2(\Tcal), && f \mapsto \iota(f|_{\tilde{\Tcal}}),
	\end{align*}
	where $f|_{\tilde{\Tcal}} \in B^2(\tilde{\Tcal})$ denotes the function $f$ restricted to $\tilde{\Tcal}$.
	In particular, this means,
		$P^2 = P$, $P^\ast = P$ (self-adjointness),
  and
		$\Vert P \Vert := \sup_{f \neq 0} \frac{\Vert P(f) \Vert_{B^2(\Tcal)}}{\Vert f \Vert_{B^2(\Tcal)}} = 1$.
\end{prop}



\subsection{Estimation in the mixed case using projections}\label{chapter_mixed_model}
Prop.~\ref{thm_subcompositional_coherence} is particularly useful for a mixed Bayes Hilbert space $\B$ as introduced in Section~\ref{chapter_bayes_hilbert_space}.
Due to the mixed reference measure, the specification of suitable basis functions $\bfe_Y \in B^2(\mu)^{K_Y}$ as required in Section~\ref{chapter_estimation_bayes} is not straightforward.
We simplify this by tracing the estimation problem back to two separate estimation problems~--~one continuous and one discrete.
For the continuous one, consider the Bayes Hilbert space $\Bl = B^2\left( \Ccal, \mathfrak{B} \cap \Ccal, \lambda \right)$, where $\Ccal := I \setminus \Dcal \in \mathfrak{B}$.
Remarkably, its orthogonal complement in $\B$ is not the Bayes Hilbert space $B^2\left( \Dcal, \mathfrak{B} \cap \Dcal, \delta \right)$. 
Instead, an additional arbitrary discrete value $t_{D+1} \in \Rbb \setminus \Dcal$ is required, which can be considered the discrete summary of $\Ccal$.
Thus, an intuitive choice is some $t_{D+1}\in \Ccal$. 
Then, the orthogonal complement of $\Bl$ in $\B$ is the Bayes Hilbert space $\Bd = B^2\left( \Dcal^\bullet, \Pcal \left( \Dcal^\bullet \right), \delta^\bullet \right)$, where $\Dcal^\bullet := \Dcal \cup \{t_{D + 1}\}$ and $\delta^\bullet := \sum_{d = 1}^{D + 1} w_d \, \delta_{t_d}$ with $w_{D + 1} := \lambda(I)$. 
%
The embeddings to consider $\Bl$ and $\Bd$ as subspaces of $\B$ are
$\iota_{\mathrm{c}} : \Bl \hookrightarrow \B$, which is the embedding defined in Proposition~\ref{thm_subcompositional_coherence} for $\tilde{\Tcal} = \Ccal$, and $\iota_{\mathrm{d}}: \Bd \hookrightarrow \B$ with 
$\iota_{\mathrm{d}} (\fd) = \fd \left( t_{D+1} \right)$ on $\Ccal$ 
and $\iota_{\mathrm{d}} (\fd) = \fd$ on $\Dcal$.
%
For $f \in \B$, the unique functions $f_{\mathrm{c}} \in \Bl, f_{\mathrm{d}} \in \Bd$ such that $f = \Jc(f_{\mathrm{c}}) \oplus \Jd (f_{\mathrm{d}})$ are given by
\begin{align}
&\fc: \Ccal \ra \Rbb , \quad
t \mapsto f(t) ,
&&\fd: \Dcal^\bullet \ra \Rbb, \quad
t \mapsto 
\begin{cases} 
1 , & t = t_{D + 1} \\
\frac{f(t)}{\exp \Scal_\Ccal(f)} , & t \in \Dcal .
\end{cases} \label{decomposition} 
\end{align} 
See Proposition~B.1 
in appendix~B 
for the proof that the orthogonal complement of $\Bl$ in $\B$ is $\Bd$, including \eqref{decomposition}.
Then, we obtain $\Vert f \Vertb^2 = \Vert f_{\mathrm{c}} \Vertbl^2 + \Vert f_{\mathrm{d}} \Vertbd^2$ implying that 
minimizing the sum of squared errors~(\ref{sse}) is equivalent to minimizing its discrete and continuous components separately, greatly simplifying the model fitting, and then combining the solutions $\hh_{\mathrm{c}}$ and $\hh_{\mathrm{d}}$ in the overall solution $\hh = \Jc(\hh_{\mathrm{c}}) \oplus \Jd(\hh_{\mathrm{d}})$. 

Equivalently, we can decompose the Hilbert space $L^2_0 \left(\mathcal{T}, \mathcal{A}, \mu \right)$ such that embeddings and clr transformations commute.
See Proposition~B.2 
in appendix~B 
for details and proof.
\section{Application}\label{chapter_application}


We use our approach to analyze the distribution of the women's share in a couple's total labor income in Germany depending on covariates. 
Note that for simplicity we use the terms East/West Germany 
also after reunification.

\subsection{Background and hypotheses}\label{background}
While there is no consensus in the literature regarding a discontinuous drop of the female income share at $0.5$ (as in \citealp{bertrand2015} for the U.S.) for Germany, there is a larger share fraction below $0.5$ reflecting the gender pay gap \citep{sprengholz2020,kuehnle2021}.
The employment and earnings of female partners show a strong childhood penalty \citep{kleven2019,fitzenberger2013}. The social norm in West Germany used to be that mothers should stay at home with their children. Institutionalized child care was scarce and there are strong financial incentives for part-time work for the second earner. Together, this results in part-time employment increasing strongly for women after having their first child. We thus expect that the female income share is lower in the presence of children, reflecting a childhood penalty.

Due to changing social norms, female employment increases strongly over time. However, occupational segregation by gender is persistent \citep{cortes2018} with men being more likely to work in better paying occupations. Still, occupations with a higher share of women seem to benefit from technological change \citep{black2010}. Thus, the income share of female partners without children is predicted to grow over time.

Ex ante reasoning suggests an ambiguous effect on 
the childhood penalty. On the one hand, the incentives for part-time work especially for female partners with young children may 
prevent an increase in the income share. Thus, the childhood penalty in the income share may even grow over time. On the other hand, growing female employment may actually increase the female income share, especially among female partners with older children.

Turning to the comparison between East and West Germany, the literature emphasizes that social norms are likely to differ between the two parts of the country \citep{beblo2018}. Before reunification, it was basically mandatory for women to work in East Germany and comprehensive institutionalized child care was available. This suggests that the female income share in East Germany is higher than in West Germany. 
After reunification, social norms have been converging between the East and the West. In East Germany, female employment may have fallen more strongly than for males due to the strong economic transformation and the lower mobility of female partners after job loss. Part-time employment is likely to become more prevalent in East Germany, and over time mothers more often drop out of the labor force. While we expect the childhood penalty to be lower in East Germany than in West Germany, it is ex ante ambiguous whether the East-West gap in the childhood penalty decreases over time, a question of interest.
\\
To investigate these hypotheses without restricting the attention a priori to a scalar summary statistic, we investigate the female share distributions as introduced by \cite{bertrand2015} as object of interest, using  comprehensive representative German data.   

\subsection{Data and descriptive evidence on response densities}\label{chapter_soep_data}

Our data set derived from the German Socio-Economic Panel (see appendix~F 
for details) contains $154,924$ observations of couples of opposite sex living together in a household, where at least one partner reports positive labor income. 
We include cohabitating couples in addition to married ones as there is a strong tax incentive to get married in case of unequal incomes, leading to a bias. 
The women's 
\emph{share} in the couple's total gross labor income together with the household's sample \emph{weight} (to ensure representativeness 
for the German population) yields the response densities. 
Four variables serve as covariates.
First, the binary covariate \emph{West\_East} specifies whether the couple lives in \emph{West} 
or in \emph{East} Germany (including Berlin). A second finer disaggregation distinguishes six \emph{regions} (two in \emph{East} 
and four in \emph{West} Germany, see appendix~F.1). 
The third covariate \emph{c\_age} is a categorical variable for the age range (in years) of the couple's youngest child living in the household: \emph{0-6}, \emph{7-18}, and \emph{other} (i.e., couples without minor children). 
Finally, \emph{year} ranges from 1984 (\emph{West} Germany)/1991 (\emph{East} Germany) to 2016.

We construct an empirical response density $f_{\text{\emph{region, c\_age, year}}}: [0, 1] \ra \Rbb^+ 
$
of the woman's income share $s$
for each combination of covariate values (note that \emph{region} determines \emph{West\_East}). 
In total, this yields 552 response densities. 
Often, we just write~$f$ and omit the indices.
Before elaborating on the estimation, we determine a suitable underlying Bayes Hilbert space $\B = B^2(\Tcal, \Acal, \mu)$.
Since $s$ denotes a share, we consider $\Tcal 
= [0, 1]$ with $\Acal = \mathfrak{B}$. 
The Lebesgue measure is no appropriate reference, as the boundary values~$0$ and~$1$ correspond to single-earner households and thus have positive probability mass (see appendix~F.2 
for exemplary barplots).
A suitable reference measure respecting this structure is $\mu := \delta_0 + \lambda + \delta_1$, i.e., the mixed case with $D = 2,~ t_1 = 0, ~t_2 = 1$, and $w_1 = 1 = w_2$, see Section~\ref{chapter_bayes_hilbert_space}.
The values $f(0)$ and $f(1)$ are
the (weighted) relative frequencies for shares of $0$ and $1$, denoted by $p_0$ and $p_1$, respectively.
To estimate $f$ on~$(0, 1)$, we compute continuous densities
based on dual-earner households, and multiply them by $p_{(0, 1)} = 1 - p_0- p_1$. 
For this purpose, weighted kernel density estimation with beta-kernels \citep{chen1999} is used to preserve the support $(0, 1)$ and include sample weights, see appendix~F.3 
for details.
\\
The response densities are very similar in the different \emph{regions} within \emph{West} and \emph{East} Germany, respectively.
Thus, 
Figure~\ref{original_densities_south_east} 
exemplarily shows the 
\emph{regions} \emph{west} (North Rhine-Westphalia) for \emph{West} Germany and \emph{east} (Saxony-Anhalt, Thuringia, Saxony) for \emph{East} Germany. 
See Figure~F.7 
in appendix~F.4 
for the full figure for all six \emph{regions}, with additional illustration of the relative frequencies $p_0,\, p_{(0, 1)},\, p_1$ 
over time.
Figure~\ref{original_densities_south_east} depicts the response densities for all \emph{years} by \emph{c\_age} for the \emph{regions} \emph{west} and \emph{east}, with a color gradient and different line types distinguishing the \emph{year}.
The density values $f(0)$ and~$f(1)$ 
are represented as dashes, shifted slightly outwards for better visibility.
%
Consider the continuous parts ($s \in (0,1)$)
: In \emph{west} (first row), the densities differ between couples with (\emph{0-6} and~\emph{7-18}) and without minor children (\emph{other}), with the latter having more probability mass to the right reflecting lower female shares in the presence of children. In \emph{east}, the shapes are more egalitarian and vary much less with the age of the youngest child. In all cases, the fraction of couples with a share less than $0.5$ exceeds the fraction with a share larger than $0.5$. Over time, the probability mass for a small share increases and that of non-working women declines, reflecting the increase in female part-time employment. This highlights the importance of considering both single- and double-earner couples and thus mixed densities to obtain a full picture.
%
The shares of dual-earner households and non-working women evolve in opposite direction over time, while the share of single-earner women remains small. 
\ifnum\value{jasa}=1
{
\vspace{-0.87cm}
} \else
{
} \fi
\begin{figure}[t] 
\begin{minipage}{0.89\textwidth} 
\includegraphics[width=\textwidth]{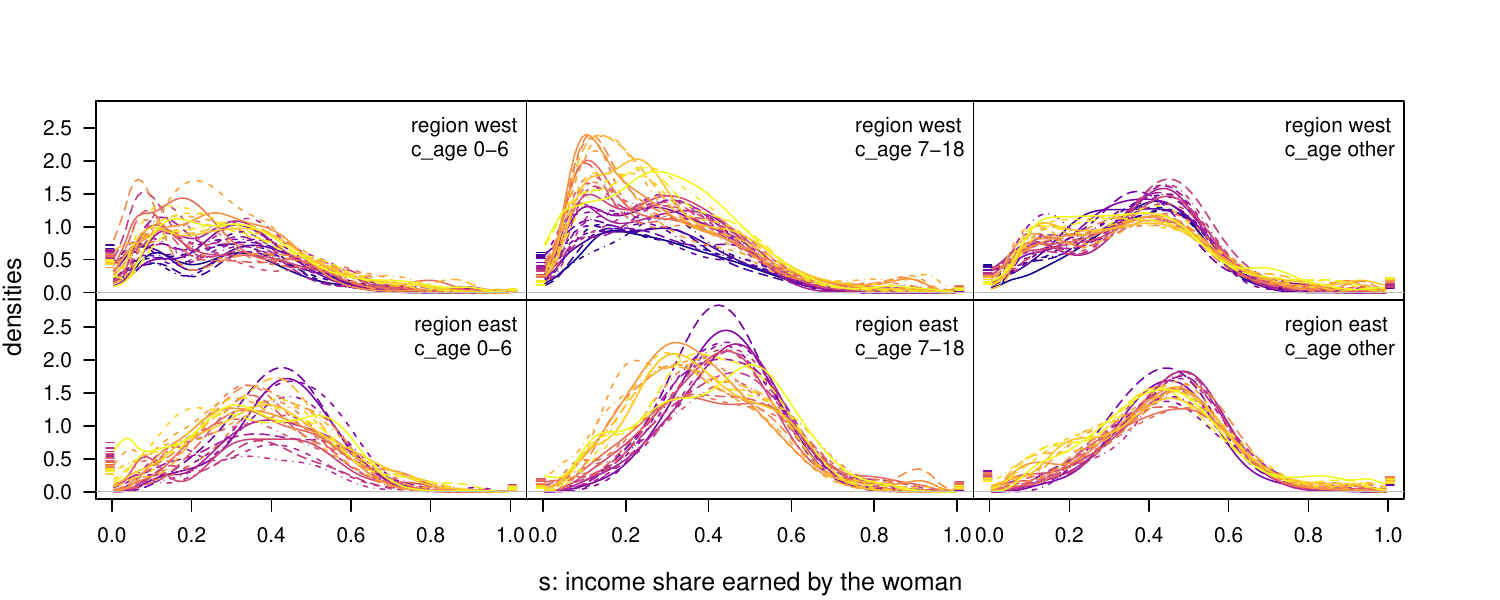} 
\end{minipage}
\hspace{-0.6cm}
\begin{minipage}{0.109\textwidth} 
\includegraphics[width=1.3\textwidth]{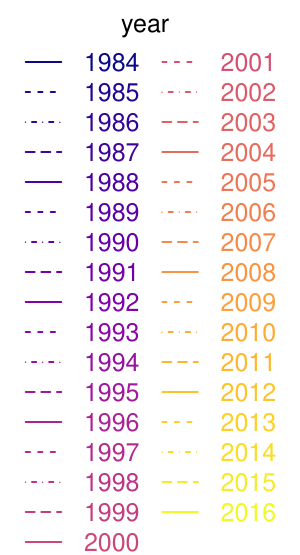}
\end{minipage}
\vspace{-0.2cm}
\caption{Response densities 
for \emph{regions} \emph{west} and \emph{east} [rows] for all three values of \emph{c\_age} [columns]. 
\label{original_densities_south_east}}
\end{figure}

\subsection{Model specification}\label{chapter_soep_model}

We estimate the model
\begin{align}
f_{\text{\emph{region, c\_age, year}}} &= \beta_0 \oplus \beta_{\text{\emph{West\_East}}} \oplus \beta_{region} \oplus \beta_{c\_age} \oplus \beta_{\text{\emph{c\_age, West\_East}}} \notag \\
&\hspace{0.45 cm} \oplus g(year) \oplus g_{\text{\emph{West\_East}}} (year) \oplus g_{c\_age} (year) \notag \\
&\hspace{0.45 cm} \oplus g_{\text{\emph{c\_age, West\_East}}}(year)
\oplus \varepsilon_{\text{\emph{region, c\_age, year}}}, \label{soep_model}
\end{align}
based on the empirical response densities $f_{\text{\emph{region, c\_age, year}}}$.
All summands are densities of the share~$s \in [0, 1]$ and elements of the Bayes Hilbert space $B^2 (\mu)$.
The model is reference coded with reference categories $West\_East = West, \, c\_age = other$, and $year = 1991$.
The corresponding effect for the reference is given by the intercept $\beta_0$.
The effect for the six regions $\beta_{region}$ is centered around the respective $\beta_{\text{\emph{West\_East}}}$.
The smooth year effect $g(year)$ describes the deviation for each \emph{year} from the reference 1991 (for \emph{West} Germany and \emph{c\_age} \emph{other}).
Finally, several interaction terms are included with a group-specific intercept density $\beta_{\text{\emph{c\_age, West\_East}}}$ as well as group-specific flexible terms $g_{West\_East} (year)$, $g_{c\_age} (year)$, and $g_{\text{\emph{c\_age, West\_East}}}(year)$.
They are constrained to be orthogonal to the respective main effects using a similar constraint as~(\ref{identifiability_constraint}) to ensure identifiability.
Due to reference coding, all partial effects for the reference categories are zero. 

As described in Section~\ref{chapter_mixed_model}, we decompose the Bayes Hilbert space $\B$ into two orthogonal subspaces $\Bl = B^2((0, 1), \mathfrak{B} \cap (0, 1), \lambda)$ and $\Bd = B^2(\Dcal^\bullet, \Pcal(\Dcal^\bullet), \delta^\bullet)$, where $\Dcal^\bullet = \{t_1, t_2, t_3\}$ and $\delta^\bullet = \sum_{d=1}^3 \delta_{t_d}$
with $t_3 := 1/2$ chosen as additional discrete value.
For every $f$ we generate the unique functions $f_{\mathrm{c}} \in \Bl$ and $f_{\mathrm{d}} \in \Bd$ as in~\eqref{decomposition}. 
As proposed in Section~\ref{chapter_estimation_bayes}, we choose transformed cubic B-splines as basis functions $\bfe_Y$ for the continuous component ($K_Y = 53$) and a transformed basis of indicator functions for the discrete component.
The remaining specification is identical in both model components.
We use an anisotropic penalty without penalizing in direction of the share, i.e., $\lambda_Y = 0$, to ensure the necessary flexibility towards the boundaries.
For the flexible nonlinear effects, the selected basis functions $\bfe_j$ are cubic B-splines with penalization of second order differences.
We set the degrees of freedom in covariate direction (per iteration) to 2 for all effects but 
$\beta_0$ and $\beta_{\text{\emph{West\_East}}}$, as these only allow for a maximum value of 1.
Regarding base-learner selection, $\beta_{\text{\emph{West\_East}}}$ thus is at a slight disadvantage compared to other main effects.
However, in a sensitivity check imposing equal degrees of freedom for all base-learners, 
we do not observe large deviations in the selection frequencies, while the fit to the data is better with unequal degrees of freedom, see appendix~F.4. 
Note that the intercept as well as the interaction effects are separated from the main effects due to the orthogonalizing constraints, ensuring a fair selection for the remaining base-learners.
The starting coefficients are set to zero in every component and we set the step-length $\kappa$ to $0.1$.
Based on $25$ bootstrap samples, we obtain a stopping iteration value of $262$ 
for the continuous model and~$731$ 
for the discrete model, respectively.

\subsection{Regression Results}\label{chapter_soep_results}

All effects in model (\ref{soep_model}) are selected (see appendix~F.5). 
In total $R^2=47\%$ of the variance is explained by the covariate effects in the continuous model component, even $69\%$ in the discrete model component, using in-sample residuals from the model fit on the whole data.
As expected, we obtain slightly lower explained average variances of $40\%$ (ranging from $31\%$ to $50\%$) for the continuous and $64\%$ ($56\%$ to $70\%$) for the discrete model, considering out-of-sample errors from the 25 bootstrap samples instead.  
Due to early stopping, the in-sample $R^2$ is slightly over-optimistic, while the out-of-sample $R^2$ is somewhat pessimistic since it is based on effectively smaller training samples. 
The high explained variance is also reflected by predictions mostly showing a close fit (Fig.~F.8 
in appendix~F.4). 
Most of the explained variance is due to the main effects $\betah_{\mathrm{c\_age}}$ ($31\%$ in the continuous component of the density, $50\%$ in the discrete one; see also Fig.~ F.4 
in appendix~F.4), 
$\hat{g}(\mathrm{year})$ (continuous $39\%$, discrete $31\%$) and $\betah_{\mathrm{West\_East}}$ (continuous $10\%$, discrete $7\%$).
Percentages are computed based on the component-wise risk-reduction. 
%
In the following, we discuss the key findings, focusing on our hypotheses. All effects are illustrated in appendix~F.5 with quantitative example interpretations via (log) odds ratios provided for further main effects.

\begin{figure}
\begin{center}
\includegraphics[width=0.49\textwidth]{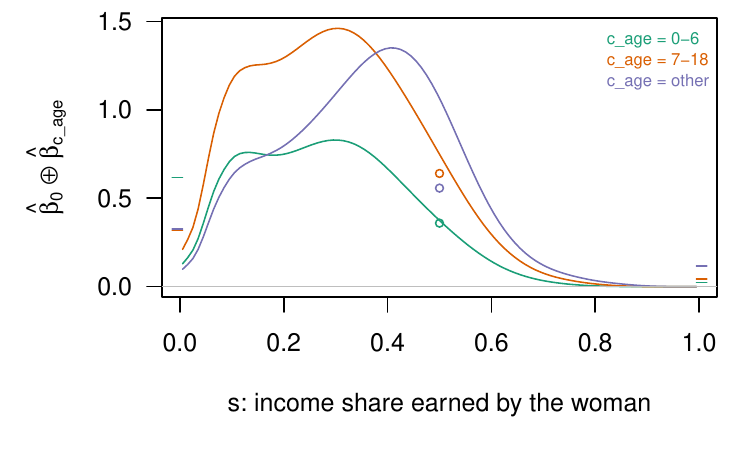}
\includegraphics[width=0.49\textwidth]{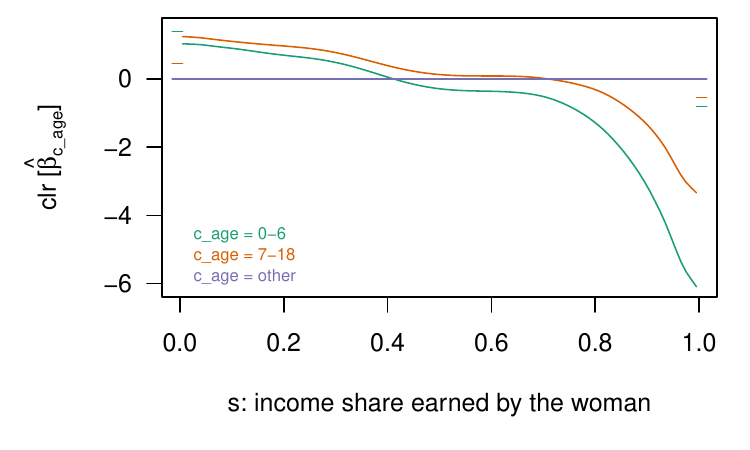}
\end{center}
\vspace{-0.5cm}
\caption{Expected densities for couples living in \emph{West} Germany in 1991 for all three values of \emph{c\_age} [left] and clr transformed estimated effects of \emph{c\_age} for ceteris paribus log odds ratio interpretations [right].
\label{estimated_child_group}}
\end{figure}


The left part of Figure~\ref{estimated_child_group} shows 
the expected densities for couples without minor children (\emph{c\_age other}), for couples with children aged \emph{0-6}, and for couples with children aged \emph{7-18} living in \emph{West} Germany in 1991.
The circles at~$0.5$ represent 
the expected relative frequency of dual-earner households.
Our main finding is that the expected density on $(0, 1)$ for \emph{c\_age other} is unimodal with a maximum above 0.4, while the densities for \emph{c\_age 0-6} and \emph{7-18} are bimodal with both maxima to the left of 0.4. The latter show a similar shape, but are scaled differently.
%
The relative frequencies of dual-earner households (circles at $0.5$) and the two types of single-earner households (dashes at $0$, $1$) are similar for couples with children aged \emph{7-18} years and couples without minor children, respectively. In contrast, the relative frequency of non-working women is much higher and the relative frequency of dual-earner households is much lower for couples with children aged \emph{0-6}.
The right part of the figure shows the clr transformed effect for interpretation via (log) odds ratios, see Section~\ref{chapter_interpretation}.
As \emph{c\_age=other} is the reference category, we have 
$\clr [\betah_{other}] = 0$. 
The clr transformed effects of \emph{c\_age 0-6} and \emph{7-18} 
again show similar shapes on $(0, 1)$, but shifted vertically.
As the log odds ratio of $\betah_k$ and $\betah_{other}$ for $s$ compared to $t$ corresponds to vertical differences within $\clr [\betah_k]$ at $s$ and $t$, $k \in \{\emph{0-6}, \emph{7-18}\}$, the log odds ratio of $\betah_{\emph{0-6}}$ and $\betah_{other}$ is similar to the one of $\betah_{\emph{7-18}}$ and $\betah_{other}$,
implying similar impact on the shape of the density. 
%
Due to the monotonicity of both effect functions, both log odds ratios are always negative for $s > t \in (0, 1)$ (e.g., $-4.2$ for $\betah_{\emph{0-6}}$ and $-3.4$ for $\betah_{\emph{7-18}}$ for $s = 0.9, t = 0.1$), i.e., the odds for any larger 
versus any smaller income share are always smaller for couples with than for couples without minor children (by factor $\exp(-4.2) \approx 0.01$ for $\betah_{\emph{0-6}}$ and $\exp(-3.4) \approx 0.03$ for $\betah_{\emph{7-18}}$ for $s = 0.9, t = 0.1$), reflecting the strong childhood penalty in \emph{West} Germany in 1991.


\begin{figure}
\begin{center}
\includegraphics[width=1.05\textwidth]{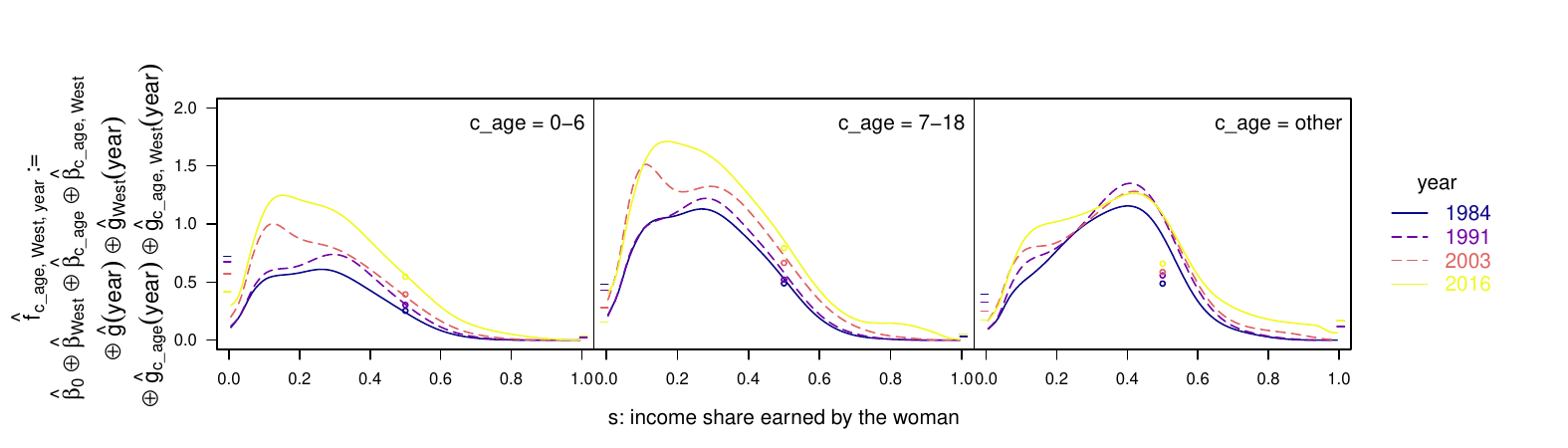}
\end{center}
\vspace{-0.5cm}
\caption{Expected densities in the \emph{years} 1984, 1991, 2003, and 2016 for \emph{West} Germany for couples whose youngest child is aged \emph{0-6} [left], \emph{7-18} [middle] and couples without minor children [\emph{c\_age} = \emph{other}, right]. \label{estimated_year_old_cgroup}}
\end{figure}
\begin{figure}
\begin{center}
\includegraphics[width=1.05\textwidth]{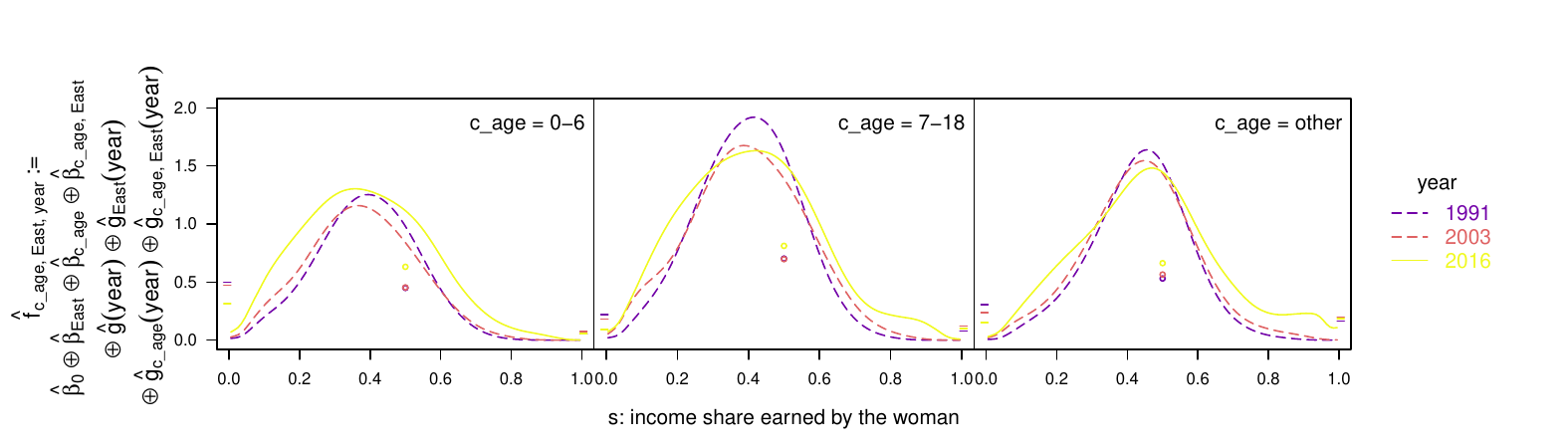}
\end{center}
\vspace{-0.5cm}
\caption{Expected densities in the \emph{years} 1991, 2003, and 2016 for \emph{East} Germany for couples whose youngest child is aged \emph{0-6} [left], \emph{7-18} [middle] and couples without minor children [\emph{c\_age} = \emph{other}, right]. \label{estimated_year_new_cgroup}}
\end{figure}

Figure~\ref{estimated_year_old_cgroup} shows the expected densities for \emph{West} Germany for four selected \emph{years}, separately for couples with and without minor children (see Figure~ F.16 
in appendix~F.5 
for all \emph{years}). For \emph{other}, the frequency of non-working women ($s = 0$) falls 
over time and the density becomes more dispersed with a lower maximum around 0.4 in 2016 than in 1993 and 2003 (which was even lower in 1984). In fact, by 2016 the expected density tends to have a second maximum further left, 
most likely due to the 
growth of part-time employment even among women without minor children. Furthermore, the frequency of single-earner women ($s = 1$) increases to a level 
similar to the frequency of non-working women and the continuous density has a heavier tail on the right. 
For \emph{0-6} and \emph{7-18}, we also observe a fall in the frequency of non-working women and a stronger concentration around the larger mode until 1991. However, up to 2016 the distributions show more probability mass for small shares, likely reflecting the even larger growth of part-time employment among women with minor children. 

Figure~\ref{estimated_year_new_cgroup} shows the expected densities in \emph{East} Germany for selected \emph{years} (see Figure~F.16 
in appendix~F.5 
for all \emph{years}). In all three cases, the share distribution has a unique mode at or above 0.4. The distribution becomes more dispersed over time, with more probability mass moving to the left and a growing right tail. The frequency of non-working women is falling over time. 
While showing a similar trend as in \emph{West} Germany, 
in \emph{East} Germany, the frequency of non-working women for couples with minor children remains much lower and the shape of the distribution shows no trend towards a second maximum at a low share. Hence, there remains a considerable West-East gap in the childhood penalty, a main question of interest. 


To quantify this West-East gap in the childhood penalty for $year\in \{1991, 2016\}$, we make use of the additive model structure and calculate it by 
the difference-in-differences (DiD) effect: 
$
DiD_{\text{\emph{c\_age, year}}}= 
(\hat{f}_{\text{\emph{c\_age, West, year}}} \ominus \hat{f}_{\text{\emph{other, West, year}}} )  \ominus ( \hat{f}_{\text{\emph{c\_age, East, year}}} \ominus \hat{f}_{\text{\emph{other, East, year}}} )  
$ for  $\text{\emph{c\_age}} \in \{\text{\emph{0-6}}, \text{\emph{7-18}}\}$.
Figure~\ref{estimated_penalty} shows the corresponding log odds 
$LO_{\text{\emph{c\_age, year}}}(s, t) := \log \left([ DiD_{\text{\emph{c\_age, year}}}](s) / [ DiD_{\text{\emph{c\_age, year}}}](t)\right) 
$
for $s,t \in [0,1]$, see 
Sec.~\ref{chapter_interpretation}, as heat maps. 
We omit the index $\text{\emph{c\_age, year}}$ in the following. 
The log odds for $s,t \in (0,1)$ are shown in the inner quadrant, those involving the two mass points $0$ and $1$ in the encircling bands, with inner bands comparing $0$, $1$ to shares in $(0, 1)$ and outer (constant) bands to the event dual-earner household $(0<s,t<1)$. Corners correspond to log odds comparing single-earner couples.
%
A positive [negative] value implies that the log odds for shares $s$ versus $t$ are higher [lower] in the \emph{West} than in the \emph{East}. Thus, $LO(s,t)>0$ for $s<t$ implies that the child penalty (lower share $s$ is more likely relative to $t$ in the presence of children) is more pronounced (stronger) in the \emph{West} than in the \emph{East}. 
%
For 1991, the vertical band for $s=0$ to the left of the heatmap is quite red ($LO(0,t)>0$), implying that it is much more likely that women in the \emph{West} compared to the \emph{East} stop working in the presence of a child, relative to all other shares. This holds for both child ages \emph{0-6} (top panel) and \emph{7-18} (bottom panel). However, the entire heatmap shows positive [negative] values above [below] the 45-degree-line implying that the shift to lower shares compared to higher shares in the presence of children is stronger in the \emph{West} than in the \emph{East}, with an even larger  West-East gap in the child penalty for ages \emph{7-18}.




\begin{figure}
\begin{center}
\begin{minipage}{0.92\textwidth}
\includegraphics[width=0.9\textwidth]{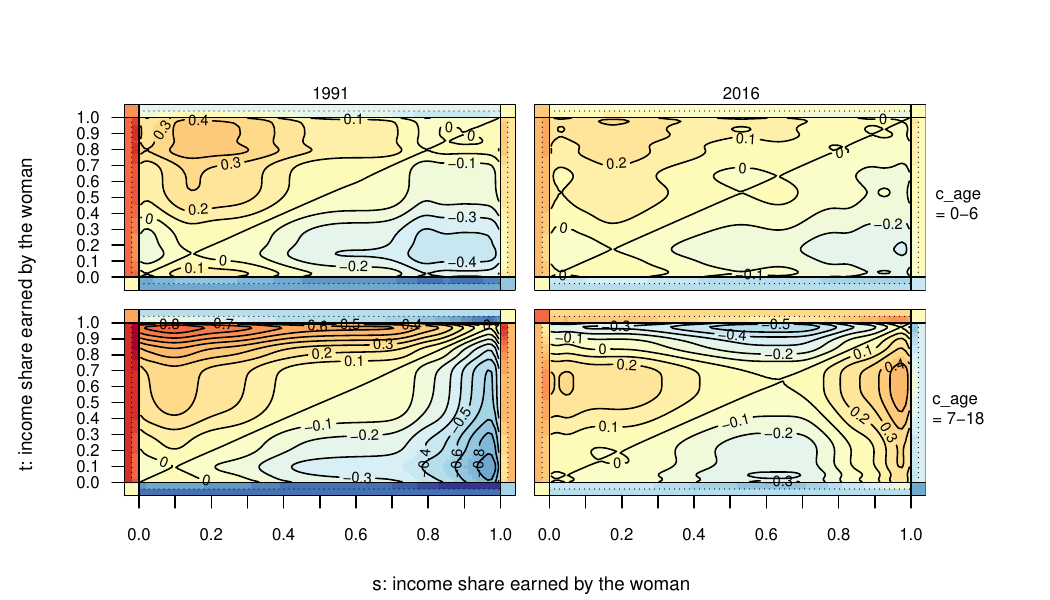}
\end{minipage}
\begin{minipage}{0.069\textwidth}
\includegraphics[width=\textwidth]{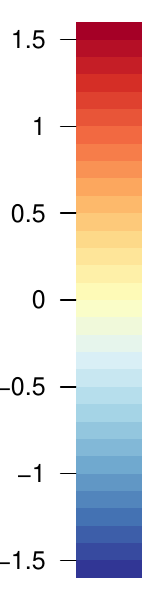}
\end{minipage}
\end{center}
\vspace{-0.5cm}
\caption{Log Odds $LO_{\text{\emph{c\_age, year}}}(s, t)$ of the West-East gap in the childhood penalty (DiD effects) for child age (\emph{c\_age}) \emph{0-6} [top] and \emph{7-18} [bottom] for the \emph{years} 1991 [left] and 2016 [right]. \label{estimated_penalty}}
\end{figure}

The comparison between the two years is informative about the change in the West-East gap in the childhood penalty over time. In 2016, the childhood penalty remains larger in the \emph{West} compared to the \emph{East} over almost the entire share distribution -- only for child ages \emph{7-18} is there  a reversal for very large shares compared to medium share levels. However, since the absolute log odds  have become much smaller, especially for non-working women, the West-East gap in the childhood penalty has decreased considerably over time.

Summarizing our main findings, the frequency of non-working women and women with a lower income share is higher in \emph{West} Germany than in \emph{East} Germany and these differences are larger for couples with children. Over time, the share of non-working women decreased. Among dual-earner households the dispersion of the share distribution increased over time with both a growing lower and higher tail. Despite persistent East-West differences in the share distributions and the child penalty until the end of the observation period, the West-East gap in the childhood penalty fell considerably over time.


\section{Simulation study}\label{chapter_simulation}
The gradient boosting approach has already been tested extensively in several simulation studies for scalar and functional data (e.g., \citet{
brh2015} and references therein). 
For completeness and to validate our modified approach for density-on-scalar models, we present a small simulation study for this case.
It is based on the results of our analysis 
in Section~\ref{chapter_application}.
The predictions obtained there serve as true mean response densities for the simulation and are denoted by $F_i \in \B, ~ i = 1, \ldots , 552$, where each $i$ corresponds to one combination of values for the covariates \emph{region}, \emph{c\_age}, and \emph{year} and $\B$ is the Bayes Hilbert space from Section~\ref{chapter_application}.
To simulate data, we perform a functional principal component (PC) analysis \citep[e.g.][]{ramsay2005} on the clr transformed functional residuals $\clr [\hat{\varepsilon}_i] = \clr [f_i \ominus F_i] = \clr [f_i] - \clr [F_i]$, with $f_i \in \B$ the response densities from the application. 
Let $\psi_m$ denote the PC functions corresponding to the descending ordered eigenvalues $\xi_m$ and let $\rho_{im}$ denote the PC scores for $i = 1, \ldots , 552$ and $m \in \Nbb$.
Then, the truncated Karhunen-Lo\`eve expansion for $M \in \Nbb$ yields an approximation of the functional residuals: $\clr [\epsh_i] \approx 
\sum_{m = 1}^M {\rho}_{im} \psi_m$.
The PC scores can be viewed as realizations of uncorrelated random variables $\rho_m$ with zero-mean and covariance $\Cov(\rho_{m} , \rho_{n}) = \xi_m \delta_{mn}$, where $\delta_{mn}$ denotes the Kronecker delta and $m, n = 1, \ldots , M$.
We simulate residuals $\epst_{i}$ by drawing uncorrelated random
$\tilde{\rho}_{im}$ from 
mean zero normal distributions with variance $\xi_m$  
and applying the inverse clr transformation to the truncated Karhunen-Lo\`eve expansion, 
$
\epst_{i} 
= \clr^{-1} 
[\,\sum_{m = 1}^M \tilde{\rho}_{im} \psi_m \,] 
= \bigoplus_{m = 1}^M \tilde{\rho}_{im} \odot \clr^{-1} \left[\psi_m\right]
.$
Adding these to the mean response densities yields the simulated data: $\ft_i = F_i \oplus \epst_{i},~ i = 1, \ldots , 552$.
Using these as observed response densities, we then estimate model~\eqref{soep_model} and denote the resulting predictions with $\fh_i \in \B, ~i = 1, \ldots , 552$.
We replicate this approach 200 times with $M = 102$, which is the maximal possible value due to the number of available grid points per density.

\begin{figure}[h] 
\begin{center}
\includegraphics[width=0.49\textwidth]{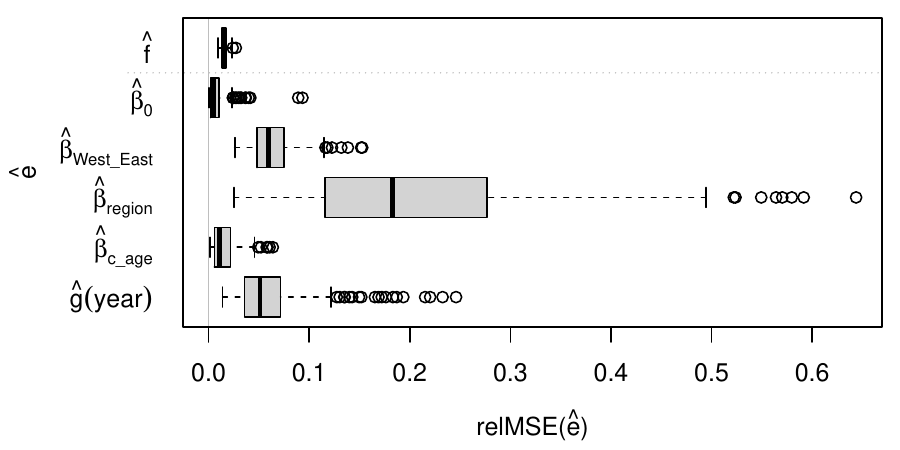}
\end{center}
\vspace{-0.5cm} 
\caption{RelMSE 
for prediction $\fh$ [top] and main effects [bottom]. \label{relMSE_main_effects}}
\end{figure}

To evaluate the goodness of the estimation results, we use the relative mean squared error (relMSE; defined in appendix~G.1) 
motivated by \citet{brh2015}, 
standardizing the mean squared error with respect to the global variability of the true density. 
%
%
Figure~\ref{relMSE_main_effects} shows the boxplots of the 
relMSEs (200 each) of the predictions and the main effects. 
All effects are illustrated in appendix~G.2. 
The distribution of $\rMSE (\fh)$ over the 200 simulation runs shows good estimation quality, 
with a median of $1.55\%$. 
Regarding the main effects, the relMSEs are the smallest for $\betah_0$ and $\betah_{\text{\emph{c\_age}}}$ with medians of $0.48\%$ and $1.1\%$, respectively.
For $\betah_{\text{\emph{West\_East}}}$ and $\gh(year)$, the values tend to be slightly larger (medians: $5.96\%$ and $5.12\%$) while they are clearly larger for $\betah_{region}$ (median: $18.28\%$).
However, the larger relative values, especially for $\betah_{region}$, arise from the variability of the true effects being small, not from the 
mean squared errors being large. 
This is also the case for the interaction effects,
see appendix~G.2. 
%
Regarding model selection, the main effects are all selected in each simulation run, while the smaller interaction effects are not, see appendix~G.3 
for details.
%
Overall, the estimates capture the true means $F_i$ and all effects that are pronounced very well. 
Small effects in the model are estimated well in absolute, but badly in relative terms.


\section{Conclusion}\label{chapter_conclusion}
We presented a flexible framework for density-on-scalar regression models, formulating them in a Bayes Hilbert space $B^2(\mu)$, which 
respects the nature of probability densities 
and allows for a unified treatment of arbitrary finite measure spaces.
This covers in particular the common discrete, continuous, and mixed density cases.
To estimate the covariate effects in $\B$, we introduced a gradient boosting algorithm.
Furthermore, we developed several properties of Bayes Hilbert spaces related to subcompositional coherence, which are helpful for interpretation and highlight the consistency of (different possible sub-analyses within) our framework. 
We used our approach to analyze the distribution of the woman's share in a couple's total labor income, an example of the challenging mixed case, for which we developed a decomposition into a continuous and a discrete estimation problem.
We observe strong differences between West and East Germany and between couples with and without children.
Among dual-earner households the dispersion of the share distribution increased over time. Despite persistent East-West differences in the share distributions and the child penalty until the end of the observation period, the West-East gap in the childhood penalty fell considerably over time.
Finally, we performed a small simulation study justifying our approach in a setting motivated by our application.

Density regression has particular advantages in terms of interpretation compared to approaches considering equivalent functions like quantile functions (e.g., \citealp{yang2018, koenker2005}) or distribution functions (CTMs, e.g., \citealp{hothorn2014}; distribution regression, e.g., \citealp{cherno2013}), as shifts in probability masses or bimodality are easily visible in densities.
Odds-ratio-type interpretations of effect functions further add to the interpretability of our model.
A crucial part in our approach is played by the clr transformation, which simplifies among other things estimation, as gradient boosting can be performed equivalently on the clr transformed densities in $L_0^2(\mu)$.
\ifnum\value{jasa}=1
{This allows taking advantage of and extending existing implementations for function-on-scalar regression like the \texttt{R} add-on package \texttt{FDboost} \citep{FDboost}, see also our vignette in the supplementary material.} 
\else
{This allows taking advantage of and extending existing implementations for function-on-scalar regression like the \texttt{R} add-on package \texttt{FDboost} \citep{FDboost}, see the github repository \href{https://github.com/boost-R/FDboost}{\emph{FDboost}} for our enhanced version of the package and in particular our vignette ``density-on-scalar\_birth''.} 
\fi
The idea to transform the densities to (a subspace of) the well-known $L^2$ space with its metric is also used by other approaches.
Besides the clr transformation, the square root velocity transformation \citep{srivastava2007} as well as the log hazard and log quantile density transformations (e.g., \citealp{
han2020}) are popular choices.
The approach of \citet{petersen2019} does not use a transformation, but also computes the applied Wasserstein metric via the $L^2$ metric.
What is special about the clr transformation based Bayes Hilbert space approach, is the embedding of the untransformed densities in a Hilbert space structure.
It is the extension of the well-established Aitchison geometry \citep{aitchison1986}, which provides an appropriate framework for compositional data -- the discrete equivalent of densities -- fulfilling appealing properties like subcompositional coherence.
The clr transformation helps to conveniently interpret covariate effects via ratios of density values (odds-ratios), which approximate or are equal to ratios of probabilities in three common cases (discrete, continuous, mixed).
Modeling those three cases in a unified framework is a novelty to the best of the authors' knowledge, and a contribution of our approach to the literature on density regression.

In this work, we only considered scalar covariates, motivated by our application, but extensions to further model terms e.g. for 
functional covariates should be possible building 
on \citet{brh2015}. 
Due to the gradient boosting algorithm used for estimation, our method includes variable selection and regularization, while it can deal with 
numerous covariates.
%
However, like all gradient boosting approaches, it is limited by not naturally yielding inference -- unlike some existing approaches (e.g., \citealp{petersen2019}).
This might be developed using a bootstrap-based approach or selective inference \citep{rugamer2020} in the future.
Alternatively, other estimation methods for our proposed models allowing for formal inference could be derived.

The (current) definition of Bayes Hilbert spaces, which only allows finite reference measures, does not cover the interesting case of the measurable space $(\Rbb, \mathfrak{B}_{\Rbb})$ with Lebesgue measure~$\lambda$.
Though $(\Rbb, \mathfrak{B}_{\Rbb})$ can still be considered using, e.g., the probability measure corresponding to the standard normal distribution \citep{vdb2014} as reference, it would be desirable to extend Bayes Hilbert spaces to $\sigma$-finite reference measures, allowing for $B^2(\Rbb, \mathfrak{B}_{\Rbb}, \lambda)$.
Moreover, Bayes Hilbert spaces include only ($\mu$-a.e.) positive densities. 
While in the continuous case, values of zero can in many cases be avoided using a suitable density estimation method, they are often replaced with small values in the discrete case (see \citealp{pawlowsky2015}).
In contrast, the square root velocity transformation \citep{srivastava2007} allows density values of zero and may be an alternative in such cases, at the price of loosing the Hilbert space structure for the untransformed densities and subcompositional coherence. 

Finally, while in practice densities are sometimes directly reported
, one often does not observe the response densities directly, but has to first estimate them from individual data to enable the use of density-on-scalar regression.
This can cause two problems.
First, when treating estimated densities as observed, like also in other approaches such as~\citet{petersen2019, han2020}, estimation uncertainty is not accounted for in the analysis.
Second, the number of individual observations for each covariate value combination which is available for density estimation can limit the number of covariates that can be included in the model. 
In the future, we thus aim to extend our approach to also model conditional densities for individual observations, transferring our flexibility of covariate effects to allow flexible density regression without requiring  restrictive parametric  assumptions such as a particular distribution family in GAMLSS \citep{rigby2005}.

%
%
%
%

\addcontentsline{toc}{section}{References}
\printbibliography[heading=bibliography]
\end{refsection}

\pagebreak
\appendix
\begin{center}
    {\LARGE\bf APPENDIX}
\end{center}
\begin{refsection}
\section{Bayes Hilbert space fundamentals}\label{appendix_bayes_hilbert_space}


We briefly introduce Bayes spaces and summarize their basic vector space properties for a $\sigma$-finite reference measure as described in~\citet{vdb2010}.
Refining these to Bayes Hilbert spaces \citep{vdb2014}, we have to restrict ourselves to finite reference measures.
We provide proofs for all theorems for completness, 
taking a slightly different point of view compared to \citet{vdb2010, vdb2014}.


Let $(\Tcal, \Acal)$ be a measurable space and $\mu$ a $\sigma$-finite measure on it, the so-called \emph{reference measure}.
Consider the set $\Mcal(\Tcal, \Acal, \mu)$, or short $\Mcal(\mu)$, of $\sigma$-finite measures 
with the same null sets as $\mu$.
Such measures are mutually absolutely continuous to each other, i.e., by Radon-Nikodyms' theorem, the $\mu$-density of $\nu$ or Radon-Nikodym derivative of $\nu$ with respect to $\mu$, $f_\nu := \dnu / \dmu 
: \Tcal \ra \Rbb$, exists for every $\nu \in \Mcal(\mu)$. 
It is $\mu$-almost every\-where ($\mu$-a.e.) positive and unique.
We write $\fnu \cong \nu$ for a measure $\nu \in \Mcal (\mu)$ and its corresponding $\mu$-density~$\fnu$.
For measures $\nu_1, \nu_2 \in \Mcal (\mu)$, let the equivalence relation $\isb$ be given by $\nu_1 =_{\Bcal} \nu_2$, iff there is a $c > 0$ such that $\nu_1 (A) = c \,\nu_2 (A)$ for every $A \in \Acal$, where $c \, (+ \infty) = + \infty$.
Respectively, we define $f_{\nu_1} \isb f_{\nu_2}$, iff $f_{\nu_1} = c \, f_{\nu_2}$ for some $c > 0$.
Here and in the following, pointwise identities have to be understood $\mu$-a.e. 
Both definitions of 
$\isb$ are compatible with the Radon-Nikodym identification $\fnu \cong \nu$. 
The set of $(=_{\Bcal})$-equivalence classes is called the \emph{Bayes space (with reference measure $\mu$)}, denoted by $\Bcal(\mu) = \Bcal(\Tcal, \Acal, \mu)$.
For equivalence classes containing finite measures, we choose the respective probability measure as representative in practice. 
Then, the corresponding $\mu$-density is a probability density. 
However, mathematically it is more convenient to use a non-normalized representative.
For better readability, we omit the index $\Bcal$ in $\isb$ and the square brackets denoting equivalence classes in the following. 
For $f_{\nu_1} \cong \nu_1, f_{\nu_2} \cong \nu_2 \in \Bcal(\mu)$, the addition or \emph{perturbation} is given by the equivalent definitions
\begin{align*}
(\nu_1 \oplus \nu_2) (A) &:= \int_A \frac{\dnu_1}{\dmu} \, \frac{\dnu_2}{\dmu} \, \dmu ,  & & 
%
%
f_{\nu_1} \oplus f_{\nu_2} := f_{\nu_1} \, f_{\nu_2}. 
\end{align*}
For $f_{\nu} \cong \nu \in \Bcal (\mu)$ and $\alpha \in \Rbb$, the scalar multiplication or \emph{powering} is defined by
\begin{align*}
(\alpha \odot \nu ) (A) &:= \int_A \left( \frac{\dnu}{\dmu} \right)^\alpha \, \dmu , & & 
%
%
\alpha \odot f_\nu := (f_\nu)^\alpha. 
\end{align*}

\begin{thm}[{\citealp{vdb2010}}]\label{B_vectorspace} 
The Bayes space $\Bcal(\mu)$ with perturbation $\oplus$ and powering $\odot$ is a real vector space with additive neutral element $0 := \mu \cong 1$, additive inverse element $\ominus \nu := \int_A \dmu / \dnu 
\, \dmu \cong 1 / f_\nu 
$ for $\nu \in \Bcal(\mu)$, and multiplicative neutral element $1 \in \Rbb$.
\end{thm}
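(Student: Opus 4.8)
The plan is to carry out the argument entirely at the level of $\mu$-densities, using the identification $\fnu \cong \nu$, and to reduce every vector-space axiom to an elementary $\mu$-a.e. pointwise identity among $\mu$-a.e. positive, $\mu$-a.e. finite real-valued functions. Once the operations are known to be well defined, the axioms are immediate.

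\textbf{Well-definedness.} First I would check that $\oplus$ and $\odot$ genuinely define operations on $\Bcal(\mu)$, which splits into three points. (i) \emph{Closure in $\Mcal(\mu)$:} choosing $\mu$-a.e. finite representatives of the Radon--Nikodym derivatives, the product $f_{\nu_1} f_{\nu_2}$ and the power $(\fnu)^\alpha$ (including $\alpha < 0$, since $\fnu > 0$ $\mu$-a.e.) are again $\mu$-a.e. positive and $\mu$-a.e. finite, so $\nu_1 \oplus \nu_2$ and $\alpha \odot \nu$ have exactly the $\mu$-null sets; $\sigma$-finiteness then follows by covering $\Tcal$, up to a $\mu$-null set, by countably many sets $B \cap \{\, f_{\nu_1} f_{\nu_2} \le k \,\}$ with $\mu(B) < \infty$ and $k \in \Nbb$, each of finite $(\nu_1 \oplus \nu_2)$-measure, and analogously for $\alpha \odot \nu$. (ii) \emph{Compatibility of the two definitions:} the $\mu$-density of $A \mapsto \int_A \frac{\dnu_1}{\dmu}\frac{\dnu_2}{\dmu}\,\dmu$ is $f_{\nu_1} f_{\nu_2}$, and that of $A \mapsto \int_A (\frac{\dnu}{\dmu})^\alpha\,\dmu$ is $(\fnu)^\alpha$, so the measure-level and density-level formulas agree. (iii) \emph{Descent to the quotient:} if $f_{\nu_i} = c_i f_{\tilde{\nu}_i}$ with $c_i > 0$, then $f_{\nu_1} f_{\nu_2} = (c_1 c_2)\, f_{\tilde{\nu}_1} f_{\tilde{\nu}_2}$ and $(\fnu)^\alpha = c^\alpha\,(f_{\tilde{\nu}})^\alpha$ with $c_1 c_2 > 0$ and $c^\alpha > 0$, so $\oplus$ and $\odot$ respect $\isb$. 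The strict positivity of the constant in the definition of $\isb$ is precisely what makes this step work.

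\textbf{The axioms.} Each remaining axiom now becomes a $\mu$-a.e. identity of positive functions that holds by the ordinary arithmetic of $(0,\infty)$: commutativity and associativity of $\oplus$ are $f_{\nu_1} f_{\nu_2} = f_{\nu_2} f_{\nu_1}$ and $(f_{\nu_1} f_{\nu_2}) f_{\nu_3} = f_{\nu_1} (f_{\nu_2} f_{\nu_3})$; neutrality of $0 = \mu \cong 1$ is $\fnu \cdot 1 = \fnu$; that $\ominus\nu \cong 1/\fnu$ (a legitimate element by (i)) is the additive inverse amounts to $\fnu \cdot (1/\fnu) = 1$, i.e. $\nu \oplus (\ominus\nu) = \mu = 0$; neutrality of $1 \in \Rbb$ for $\odot$ is $(\fnu)^1 = \fnu$; the mixed associativity $\alpha \odot (\beta \odot \nu) = (\alpha\beta)\odot\nu$ is $((\fnu)^\beta)^\alpha = (\fnu)^{\alpha\beta}$; and the two distributive laws are $(f_{\nu_1} f_{\nu_2})^\alpha = (f_{\nu_1})^\alpha (f_{\nu_2})^\alpha$ and $(\fnu)^{\alpha+\beta} = (\fnu)^\alpha (\fnu)^\beta$.

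\textbf{Main obstacle.} The substance lies in the well-definedness bookkeeping rather than in the axioms. The two points that genuinely use the structure of the space are: preserving $\sigma$-finiteness --- not merely the same-null-sets property --- under $\oplus$ and $\odot$, for which one works with $\mu$-a.e. finite representatives of the densities so that the $+\infty$ values permitted in $\isb$ cause no trouble; and verifying that $\oplus$ and $\odot$ descend to $\isb$-classes, where positivity of the scaling constant enters. Everything else is the observation that the logarithm turns the multiplicative structure on $(0,\infty)$-valued densities into the additive structure of a real vector space of functions --- the viewpoint the paper later exploits through the clr transformation.
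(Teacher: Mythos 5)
Your proposal is correct and takes essentially the same route as the paper: both reduce the theorem to closure of $\oplus$ and $\odot$ at the level of $\mu$-densities (preservation of $\sigma$-finiteness and of the $\mu$-null sets), dispose of the vector-space axioms as elementary $\mu$-a.e.\ pointwise identities of positive functions, and then pass to $(=_{\Bcal})$-classes, where your direct check that the operations respect the equivalence relation is equivalent to the paper's packaging of this step as the quotient of $\Mcal(\mu)$ by the subspace $[\mu]$. The only differences in emphasis are that you sketch the $\sigma$-finiteness covering argument that the paper delegates to the cited reference (van den Boogaart et al., 2010), while conversely you take as a standard fact that a $\mu$-a.e.\ positive density yields a measure with exactly the $\mu$-null sets, which the paper proves explicitly via its implication~(A.1).
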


\begin{proof}
	This theorem equals~\citet[Theorem~5]{vdb2010}, where a brief proof is provided in the appendix.
	We give an alternative proof showing first that $\Mcal(\mu)$ is a vector space with perturbation and powering analogously defined.
	For this purpose, let $\nu_1, \nu_2 \in \Mcal(\mu)$ be measures
	and let $\alpha \in \Rbb$ be a scalar.
	The vector space axioms, i.e., $\Mcal(\mu)$ is an Abelian group with respect to $\oplus$, distributivity of $\oplus$ and $\odot$, associativity of $\odot$, and $1 \odot \nu = \nu$ for all $\nu \in \Mcal(\mu)$, are straightforward calculations.
	Thus, we content ourselves with showing that $\nu_1 \oplus \nu_2, \alpha \odot \nu \in \Mcal(\mu)$, which requires some more work.
	To see this, two properties have to be verified: the resulting measures have to be $\sigma$-finite and have the same null sets as $\mu$.
	The former is shown in the proof of Theorem 4 in appendix A of \citet{vdb2010}.
	To show that both $\nu_1 \oplus \nu_2$ and $\alpha \odot \nu$ have the same null sets as $\mu$, we first show  that for every $A \in \Acal$ and every $f: \Tcal \ra \Rbb_0^+$, the implication
	\begin{align}
		\bigl( f > 0 
		~ \wedge ~ \int_A f \, \dmu = 0 \bigr) ~ \Ra ~ \mu (A) = 0 \label{positive_fuction_integral_zero_then_measure_zero}
	\end{align}
	is true.
	Let $f$ be a function that fulfills the properties on the left side of the implication and let $A \in \Acal$.
	For the sets $A_0 := \left\{f \geq 1 \right\} \cap A$ and $A_n := \{ \frac{1}{n+1} \leq f < \frac1{n} \} \cap A$, we get $A = \bigsqcup_{n \in \Nbb_0} A_n$.
	Moreover, for every $n \in \Nbb_0$, we have
	\begin{align}
		\int_{A_n} f \, \dmu \geq \int_{A_n} \frac1{n+1} \, \dmu = \frac1{n+1} \, \mu (A_n).
		\label{estimate_integral}
	\end{align}
	Now, assume that $\mu (A) \neq 0$, i.e., $\mu (A) > 0$. Then, there exists an $m \in \Nbb_0$ such that $\mu (A_{m}) >0$, because $\mu (A) = \sum_{n \in \Nbb_0} \mu (A_n)$. Thus, the inequality
	\begin{align*}
		\int_A f\, \dmu \geq \int_{A_{m}} f \, \dmu  \overset{(\ref{estimate_integral})}{\geq} \frac1{m+1} \, \mu (A_{m}) > 0
	\end{align*}
	holds.
	This is a contradiction to the hypothesis that $\int_A f \, = 0$, which proves implication~(\ref{positive_fuction_integral_zero_then_measure_zero}).
	
	Thereby, it is easy to show that $\nu_1 \oplus \nu_2$ and $\alpha \odot \nu$ have the same null sets as $\mu$:
	Let $A \in \Acal$ such that $0 = (\nu_1 \oplus \nu_2) (A) = \int_A f_{\nu_1} \, f_{\nu_2} \, \dmu$.
	We have $f_{\nu_1} \, f_{\nu_2} > 0$. 
	Using Equation~(\ref{positive_fuction_integral_zero_then_measure_zero}), we immediately get $\mu (A) = 0$.
	Analogously, we have $\left(f_{\nu}\right)^\alpha > 0$ for every $\alpha \in \Rbb$. 
	With Equation~(\ref{positive_fuction_integral_zero_then_measure_zero}) it follows $\mu (A) = 0$, if $(\alpha \odot \nu)(A) = 0$ for all $A \in \Acal$.
	The converse implications are trivial in both cases.
	This proves that $\nu_1 \oplus \nu_2, \alpha \odot \nu \in \Mcal(\mu)$ and thus, $\Mcal(\mu)$ is a real vector space with operations~$\oplus$ and~$\odot$.
	
	It remains to prove that also $\Bcal(\mu)$ is a real vector space. One easily shows that the set $[\mu]$ is a vector subspace of $\Mcal (\mu)$. Furthermore, the relation~$\isb$ defines an equivalence relation on $\Mcal(\mu)$ satisfying $\nu_1 \ominus \nu_2 \in [\mu]$ if and only if $\nu_1 \isb \nu_2$ for $\nu_1, \nu_2 \in \Mcal(\mu)$. By elementary linear algebra
	it follows that $\Bcal(\mu) = \nicefrac{\Mcal(\mu)}{[\mu]}$ is a vector space with respect to the evident quotient operations $\oplus$ and $\odot$.
\end{proof}

For subtraction, we write $\nu_1 \ominus \nu_2 := \nu_1 \oplus (\ominus \nu_2)$ and $f_{\nu_1} \ominus f_{\nu_2} := f_{\nu_1} \oplus (\ominus f_{\nu_2})$. 


From now on, we restrict the reference measure $\mu$ to be finite, progressing to Bayes Hilbert spaces.
This is similar to \citet{vdb2014} with some details different.
In the style of the well-known $L^p$ spaces, $B^p$ spaces for $1 \leq p < \infty$ are defined as
\begin{align*}
B^p(\mu) 
= B^p(\Tcal, \Acal, \mu) 
:= \Bigl\{ \nu \in \Bcal (\mu) ~\Big|~ \int_{\Tcal} \bigl| \log \frac{\dnu}{\dmu}\bigr|^p \, \dmu < \infty \Bigr\}.
\end{align*}
We also say $f_\nu \in B^p(\mu)$ for $f_\nu \cong \nu \in \bpl$.
This is equivalent to $\log f_\nu \in L^p(\mu)$,
which gives us $B^q(\mu) \subset B^p(\mu)$ for $p,q \in \Rbb$ with $1\leq p<q$.
Note that for every $p \in \Rbb$ with $1 \leq p < \infty$, the space $\bpl$ is a vector subspace of $\Bcal (\mu)$, see \citet{vdb2014}. 
For $f_\nu \cong \nu \in B^p(\mu)$, the \emph{centered log-ratio (clr) transformation of $\nu$} 
is given by
\begin{align*}
\clr_{B^p(\Tcal, \Acal, \mu)} [\nu] 
= \clr_{B^p(\Tcal, \Acal, \mu)} [\fnu] 
:= \log f_\nu - \Scal_{B^p(\Tcal, \Acal, \mu)} (\fnu), 
\end{align*}
with $\Scal_{B^p(\Tcal, \Acal, \mu)} (\fnu) := 1 / \mu (\Tcal) 
\, \int_{\Tcal} \log \fnu \, \dmu$ the mean logarithmic integral.
We omit the indices $B^p(\Tcal, \Acal, \mu)$ 
or shorten them to $\mu$ or $\Tcal$, if the underlying space is clear from context. 


\begin{prop}[For $p = 1$ shown in \citealp{vdb2014}]
\label{clr_isomorphism}
For $1 \leq p < \infty$, 
$\clr : \bpl \ra \lpnl := \{ \ft \in \lpl ~|~ \int_{\Tcal} \ft \, \dmu = 0 \}$ is an isomorphism with inverse trans\-formation $\clr^{-1} [\ft] = \exp \ft$.
\end{prop}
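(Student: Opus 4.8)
The plan is to verify, in order, that the map is well defined into $\lpnl$, that it is linear with respect to perturbation/powering on the domain and ordinary addition/scalar multiplication on the codomain, that it is bijective, and to exhibit the inverse. First I would check \emph{well-definedness}: for $\fnu \in \bpl$ we have $\log \fnu \in \lpl$ by the very definition of $\bpl$, hence the constant $\Scal_\mu(\fnu) = \mu(\Tcal)^{-1}\int_\Tcal \log\fnu\,\dmu$ is finite (Jensen or Cauchy--Schwarz/H\"older against the finite measure $\mu$), so $\clr[\fnu] = \log\fnu - \Scal_\mu(\fnu) \in \lpl$; and $\int_\Tcal \clr[\fnu]\,\dmu = \int_\Tcal \log\fnu\,\dmu - \mu(\Tcal)\,\Scal_\mu(\fnu) = 0$, so indeed $\clr[\fnu]\in\lpnl$. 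I would also note that $\clr$ is well defined on $(=_\Bcal)$-equivalence classes: replacing $\fnu$ by $c\,\fnu$ adds the constant $\log c$ to $\log\fnu$ and the same constant to $\Scal_\mu(\fnu)$, so the difference is unchanged.

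Next, \emph{linearity}: for $\fnu,f_\eta\in\bpl$ one has $\log(\fnu\oplus f_\eta)=\log(\fnu f_\eta)=\log\fnu+\log f_\eta$ and $\Scal_\mu(\fnu f_\eta)=\Scal_\mu(\fnu)+\Scal_\mu(f_\eta)$ by linearity of the integral, so $\clr[\fnu\oplus f_\eta]=\clr[\fnu]+\clr[f_\eta]$; similarly $\log(\alpha\odot\fnu)=\log((\fnu)^\alpha)=\alpha\log\fnu$ and $\Scal_\mu((\fnu)^\alpha)=\alpha\Scal_\mu(\fnu)$, giving $\clr[\alpha\odot\fnu]=\alpha\,\clr[\fnu]$. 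All identities are pointwise $\mu$-a.e., which is the convention already fixed in the text.

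For \emph{bijectivity} I would simply construct the inverse explicitly and check both compositions. Define $\Psi:\lpnl\to\bpl$ by $\Psi[\ft]=\exp\ft$ (i.e.\ the measure with $\mu$-density $\exp\ft$); first check $\Psi$ maps into $\bpl$, namely $\int_\Tcal|\log\exp\ft|^p\,\dmu=\int_\Tcal|\ft|^p\,\dmu<\infty$ since $\ft\in\lpl$, so $\exp\ft\in\bpl$. Then for $\ft\in\lpnl$, $\clr[\exp\ft]=\log\exp\ft-\Scal_\mu(\exp\ft)=\ft-\mu(\Tcal)^{-1}\int_\Tcal\ft\,\dmu=\ft-0=\ft$, using the defining constraint $\int_\Tcal\ft\,\dmu=0$. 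Conversely, for $\fnu\in\bpl$, $\exp(\clr[\fnu])=\exp(\log\fnu-\Scal_\mu(\fnu))=\exp(-\Scal_\mu(\fnu))\cdot\fnu$, which equals $\fnu$ in $\bpl$ since $\exp(-\Scal_\mu(\fnu))>0$ is a positive scalar and we work modulo $=_\Bcal$. Hence $\clr$ is a bijection with $\clr^{-1}=\Psi$, and together with the linearity above it is an isomorphism of real vector spaces.

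The only genuinely delicate point — and the one I would be most careful about — is the bookkeeping around $=_\Bcal$-equivalence classes versus honest functions: one must be consistent about the fact that $\bpl$ consists of classes while $\lpnl$ consists of (a.e.-classes of) functions, so the positive-scalar ambiguity on the density side is exactly absorbed by subtracting the mean $\Scal_\mu$, and conversely the mean-zero constraint on the codomain is exactly what makes $\exp$ a two-sided inverse rather than an inverse only up to a multiplicative constant. Everything else is a routine consequence of $\log$ turning products into sums and powers into multiples, combined with finiteness of $\mu$ to ensure $\Scal_\mu$ is well defined; no inequalities beyond H\"older are needed.
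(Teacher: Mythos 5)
Your proposal is correct and follows essentially the same route as the paper's proof: well-definedness on $(=_{\Bcal})$-classes and membership in $\lpnl$ (finiteness via $\log\fnu\in\lpl$ plus the constant term on a finite measure space), linearity from $\log$ turning $\oplus,\odot$ into $+,\cdot$, and bijectivity by checking both compositions with $\exp$, where $\exp(\clr[\fnu])=\fnu/\exp(\Scal(\fnu))=\fnu$ modulo the positive-scalar equivalence. Your explicit remark that $\exp\ft\in\bpl$ and the bookkeeping about equivalence classes are slightly more detailed than, but fully consistent with, the paper's argument.
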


\begin{proof}
	This proposition is proven in \citet[Propositions 2, 4 and 5]{vdb2014} in the case $p = 1$.
	We show the statements for arbitrary $1 \leq p < \infty$, because we need them in particular for $p=2$.
	
	Let $1 \leq p < \infty$ and let $\nu \in \bpl$ be a measure.
	The integral $\int_{\Tcal} \log f_{\nu} \, \dmu$ exists because of $\log f_{\nu} \in \lpl$.
	Furthermore, it is straightforward to show that for every $\nu_2 \in \bpl$ with $\nu_2 \isb \nu$ the clr images are equal, $\clr [\nu] = \clr [\nu_2]$.
	Hence, the clr image of $[\nu]$ is well-defined on $B^p(\mu)$.
	Next, we show that $\clr [\nu] \in \lpnl$, which is the case if $\clr [\nu] \in \lpl$ and $\int_{\Tcal} \clr [\nu] \, \dmu = 0$.
	The first statement corresponds to $\int_{\Tcal} | \clr [\nu] |^p \, \dmu < \infty$, which is equivalent to $\Vert \clr [\nu] \Vert_{\lpl}  < \infty$.
	Using the Minkowski inequality, we get
	\begin{align*}
		\Vert \clr [\nu] \Vert_{\lpl}
		&= \left\Vert \log f_{\nu} - \Scal (f_{\nu}) \right\Vert_{\lpl}
		\leq \left\Vert \log f_{\nu} \right\Vert_{\lpl} + \left\Vert \Scal (f_{\nu}) \right\Vert_{\lpl}.
	\end{align*}
	As $\nu \in \bpl$, we have $\log f_{\nu} \in \lpl$ and therefore the first term is finite.
	For the second term, the function in the norm is a constant, thus it is an element of $\lpl$ since $\mu$ is finite.
	Together, we get $\Vert \clr [\nu] \Vert_{\lpl} < \infty$. Moreover,
	\begin{align*}
		\int_{\Tcal} \clr [\nu] \, \dmu
		&= \int_{\Tcal} \log f_{\nu} - \Scal (f_{\nu}) \, \dmu 
		= \mu(\Tcal) \, \Scal (f_{\nu}) - \mu(\Tcal) \, \Scal (f_{\nu})
		= 0.
	\end{align*}
	Hence, it follows that $\clr [\nu] \in \lpnl$.
Furthermore, the clr transformation is linear:
\begin{align*}
	\clr \left[ \alpha \odot f_{\nu} \oplus f_{\nu_2} \right]
	&= \log \left( (f_{\nu})^\alpha \, f_{\nu_2} \right) - \Scal \left( (f_{\nu})^\alpha \, f_{\nu_2} \right) \\
	&= \alpha \left( \log f_{\nu} - \Scal (f_{\nu}) \right) + \log f_{\nu_2} - \Scal (f_{\nu_2})
	= \alpha \, \clr [f_{\nu} ]+ \clr [f_{\nu_2} ].
\end{align*}
It remains to show that it is bijective. For $\ft \in \lpnl$, we have
\begin{align*}
	\clr \left[ \exp \ft\right]
	&= \log \left( \exp \ft \right) - \Scal \left( \exp \ft \right)
	= \ft - \frac1{\mu (\Tcal)} \int_{\Tcal} \ft \, \dmu = \ft ,
\end{align*}
using that the last integral is zero since $\ft \in \lpnl$. Conversely, for $ f \in \bpl$, we get
\begin{align*}
	\exp \left( \clr [f] \right)
	&= \exp \left( \log f - \Scal (f_{\nu}) \right)
	= \frac{f}{\exp\left( \Scal (f) \right)}
	= f
\end{align*}
and therefore, the clr transformation is bijective.
\end{proof}


Note that $\lpnl$ is a closed subspace of $\lpl$.
The space $B^2(\mu)$ is called the \emph{Bayes Hilbert space (with reference measure $\mu$)}.

\begin{prop}\label{inner_product_thm}
The transformation 
\begin{align*}
&\langle \nu_1 , \nu_2 \rangleb := \langle f_{\nu_1} , f_{\nu_2} \rangleb := \int_{\Tcal} \clr [f_{\nu_1}] \, \clr [f_{\nu_2}] \, \dmu, && , f_{\nu_1} \cong \nu_1, f_{\nu_2} \cong \nu_2 \in \B ,
\end{align*}
is an inner product on $\B$.
\end{prop}

\begin{proof}
	Linearity of $\langle \, , \, \rangleb$ follows from the linearity of the clr transformation, see Proposition~\ref{clr_isomorphism}, and basic calculation rules. Symmetry is obvious by the commutativity of multiplication in $\Rbb$. It remains to show that $\langle \, , \, \rangleb$ is positive definite. For this purpose, let $f_{\nu} \in \B$ be a density.
	\begin{itemize}
		\item
		We have $\langle f_{\nu} , f_{\nu} \rangleb = \int_{\Tcal} (\clr [f_{\nu}] )^2 \, \dmu \geq 0$ because the integrand is nonnegative.
		\item
		We need to show that $\langle f_{\nu} , f_{\nu} \rangleb = 0 ~\Llra~ f_{\nu} = 0$.
		\begin{itemize}
			\item[``$\Ra$'']
			If $\langle f_{\nu} , f_{\nu} \rangleb = \int_{\Tcal} (\clr [f_{\nu}] )^2 \, \dmu = 0$, then $\clr [f_{\nu}] = 0$ must hold. This is equivalent to
			$
			\log f_{\nu} = \Scal (f_{\nu}) ~ \mu\text{-almost everywhere}
			$,
			which means $\log f_{\nu}$ is a constant function. Then, $f_{\nu}$ is constant as well and thus $f_{\nu} = 0$.
			\item[``$\La$'']
			If otherwise $f_{\nu} = 0 $, then $\clr [f_{\nu}] = 0$ by linearity of the clr transformation, see Proposition~\ref{clr_isomorphism}, and therefore
			$
			\langle f_{\nu} , f_{\nu} \rangleb
			= 0.
			$ \qedhere
		\end{itemize}
	\end{itemize}
\end{proof}

The inner product induces a norm on $\B$ by $\Vert \nu \Vertb := \Vert \fnu \Vertb := \sqrt{\langle \fnu, \fnu \rangleb}$ for $f_\nu \cong \nu \in \B$.
By definition, we have
$ 
\langle f_{\nu_1}, f_{\nu_2} \rangleb = \langle \clr [f_{\nu_1}] , \clr [f_{\nu_2}] \rangle_{L^2(\mu)},
$ 
which immediately implies that $\clr: \B \ra \Ln$ is isometric.
We now formulate the main statement of this section:

\begin{thm}[{\citealp{vdb2014}}]\label{bayes_hilbert_space_thm} 
The Bayes Hilbert space $\B$ is a Hilbert space.
\end{thm}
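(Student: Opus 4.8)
The plan is to obtain the Hilbert space property by transporting it from $L^2(\mu)$ along the clr transformation. The form $\langle \cdot, \cdot \rangleb$ is already an inner product (Proposition~\ref{inner_product_thm}; alternatively, symmetry and bilinearity are inherited from $\langle \cdot, \cdot \rangle_{L^2(\mu)}$ via the identity $\langle f_{\nu_1}, f_{\nu_2} \rangleb = \langle \clr[f_{\nu_1}], \clr[f_{\nu_2}] \rangle_{L^2(\mu)}$, while positive definiteness follows from injectivity of $\clr$). Hence the only remaining point is completeness of $\B$ with respect to $\Vert \cdot \Vertb$.

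First I would record that $\clr \colon \B \ra \Ln$ is an \emph{isometric isomorphism}: it is a linear isomorphism with inverse $\exp$ by Proposition~\ref{clr_isomorphism} applied with $p = 2$, and it is isometric by the displayed identity above; consequently $\clr^{-1} = \exp$ is an isometric isomorphism as well. Second, I would check that $\Ln$ is itself complete: the map $\ft \mapsto \int_\Tcal \ft \, \dmu$ is a bounded linear functional on $L^2(\mu)$ (boundedness because $\mu$ is finite, so Cauchy--Schwarz applies to $\ft$ against the constant function $1$), so $\Ln$, being its kernel, is a closed subspace of the complete space $L^2(\mu)$ and therefore complete.

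Third, I would transfer completeness back to $\B$. Given a Cauchy sequence $(\nu_n)_n$ in $\B$, isometry of $\clr$ makes $(\clr[\nu_n])_n$ a Cauchy sequence in $\Ln$, which therefore converges to some $g \in \Ln$. Setting $\nu := \exp g \in \B$ and using linearity together with isometry of $\clr$, one gets $\Vert \nu_n \ominus \nu \Vertb = \Vert \clr[\nu_n] - g \Vert_{L^2(\mu)} \to 0$, so $\nu_n \to \nu$ in $\B$. Thus $\B$ is complete, and a complete inner product space is by definition a Hilbert space.

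I do not expect a genuine obstacle here: this is a routine transport of structure along an isometric isomorphism. The two facts that must be in place — and that are available from the preceding results — are that $\clr$ maps \emph{onto} all of $\Ln$, so that the limit $g$ admits a preimage in $\B$, and that $\Ln$ is \emph{closed} in $L^2(\mu)$, so that it inherits completeness; if either failed, the argument would break down.
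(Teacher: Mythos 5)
Your proposal is correct and follows essentially the same route as the paper's proof: identify $\Ln$ as a closed (hence complete) subspace of $L^2(\mu)$ and transfer completeness to $\B$ via the isometric isomorphism $\clr$ with inverse $\exp$. You merely spell out the details (closedness of $\Ln$ as the kernel of a bounded functional, and surjectivity of $\clr$ to pull the limit back) that the paper leaves implicit.
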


\begin{proof}
	
	We provide an alternative proof to~\citet{vdb2014}:
	It is a known fact from functional analysis that $L^2(\mu)$ is a Hilbert space. 
	As a closed subspace, $\Ln$ is a Hilbert space as well. As the clr transformation $\clr : \B \ra \Ln$ is isometric, it follows that also $\B$ is a Hilbert space.
\end{proof}

Note that under very modest assumptions on the measure space $(\Tcal, \Acal, \mu)$, the Hilbert spaces $L^2(\mu)$ and $L_0^2(\mu)$ are separable, see~\citet[Korollar~2.29]{elstrodt2011}. 
This was used in the pioneering work of~\citet{egoz2006} to construct the Bayes Hilbert space and show its separability.


\section{Proofs}\label{appendix_proofs}

\begin{proof}[Proof of Equation~(6)] 
This proof requires knowledge about differential calculus for real functionals. A review can be found in~\citet[Section~1.3]{funcana}.

We want to show that the negative gradient of the loss functional
\begin{align*}
\rho_{y_i} : \B \ra \Rbb, & & f_1 \mapsto \Vert y_i \ominus f_1 \Vertb^2
\end{align*}
at $f_1 \in \B$ for $y_i \in \B$ exists and determine it.
First, we show that $\rho_{y_i}$ is Fr\'echet differentiable at $f_1 \in \B$, i.e., that there exists $A \in (\B)'$ such that
\begin{align}
\lim_{\Vert f_2\Vertb \ra 0} \frac{\rhoy (f_1 \oplus f_2) - \rhoy (f_1) - A(f_2)}{\Vert f_2\Vertb} = 0, \label{frechet}
\end{align}
where $(\B)' := \{ A: \B \ra \Rbb ~|~ A~\text{linear and continuous}\}$ is the topological dual of $\B$.
Consider $A = A_{y_i, f_1}: \B \ra \Rbb, f_2 \mapsto \langle \ominus 2 \odot (y_i \ominus f_1), f_2\rangleb$.
Then $A \in (\B)'$ and for $f_1, f_2 \in \B$, we have
\begin{align*}
\rhoy (f_1 \oplus f_2) - \rhoy (f_1) - A(f_2)
&= \Vert y_i \ominus (f_1 \oplus f_2)\Vertb^2 - \Vert y_i \ominus f_1 \Vertb^2 \\
&\hspace{0.5cm} - \langle \ominus 2 \odot (y_i \ominus f_1), f_2\rangleb \\
&= \Vert y_i \ominus f_1 \Vertb^2 - 2 \langle y_i \ominus f_1 , f_2 \rangleb + \Vert f_2 \Vertb^2 \\
&\hspace{0.5cm} - \Vert y_i \ominus f_1 \Vertb^2
+ 2 \langle y_i \ominus f_1 , f_2 \rangleb \\
&= \Vert f_2 \Vertb^2.
\end{align*}
This implies that the limit in~(\ref{frechet}) is zero.
Thus, $\rhoy$ is Fr\'echet differentiable at $f_1 \in \B$ with differential $\mathrm{d} \rhoy (f_1) = A = A_{y_i, f_1}$.
As $\B$ is a Hilbert space, Riesz' Representation Theorem holds and the gradient of $\rhoy$ at $f_1$ is $\nabla \rhoy (f_1) = \ominus 2 \odot (y_i \ominus f_1)$.
\end{proof}


\begin{proof}[Proof of Proposition~3.1] 
\begin{enumerate}[(a)]
\item 
Let $A, B \in \Acal^+$, $m := \inf_{s\in A, t \in B} \OR(s,t)$, and $ M := \sup_{s\in A, t \in B} \OR(s,t)$. 
Then, for all $s \in A, t \in B$, we have
$m \leq \OR(s,t) = \frac{f_1(s) / f_1(t)}{f_2(s) / f_2(t)} 
\leq M$ 
and thus,
$
m \, f_1(t) f_2(s) \leq f_1(s) f_2(t)
$
and
$
f_1(s) f_2(t) \leq M \, f_1(t) f_2(s)
$.
Integrating over $A \times B$ yields
\begin{align*}
m  \int_{A \times B} f_1(t) f_2(s) \, \dmumu (s, t)
\leq \int_{A \times B} f_1(s) f_2(t) \, \dmumu (s, t)
\intertext{and}
\int_{A \times B} f_1(s) f_2(t) \, \dmumu (s, t)
\leq M \int_{A \times B} f_1(t) f_2(s) \, \dmumu (s, t).
\end{align*}
By Tonelli's Theorem all integrals factorize and we get
$m \, \Pbb_1(B) \Pbb_2(A) \leq \Pbb_1(A) \Pbb_2(B)$
and
$\Pbb_1(A) \Pbb_2(B) \leq M \, \Pbb_1(B) \Pbb_2(A)$, i.e.,
$m \leq \frac{\Pbb_1 (A) \, / \, \Pbb_1 (B)}{\Pbb_2 (A) \, / \, \Pbb_2 (B)} \leq M$.
\item 
Let $s \in \Tcal$ and $A_n \in \Acal^+$ be intervals such that $A_n$ is centered at $s$ for all $n \in \Nbb$, $\bigcap_{n \in \Nbb} A_n = \{s\}$ and $A_{n+1} \subset A_n,$ 
for $n \in \Nbb$.
It is sufficient to show
\begin{align}
\lim_{n \ra \infty} \frac{\Pbb_j (A_n)}{\mu (A_n)} = f_j (s) && \text{for}~ j \in \{ 1, 2 \}. 
\end{align}
\begin{enumerate}[i)]
\item\label{proof_approximate_probability_continuous}
In the continuous case, i.e., $\Dcal = \emptyset$, we have
\begin{align*}
\lim_{n \ra \infty} \frac{\Pbb_j (A_n)}{\lambda (A_n)}
= \lim_{n \ra \infty} \frac{1}{\lambda(A_n)} \int_{A_n} f_j \, \dlamb
= f_j (s) 
\end{align*}
using Lebesgue's Differentiation Theorem \citep[Theorem~7.2]{wheeden2015} in the last equation.
Note that the equation holds for all $s$ in the interior of $I$ (not only $\mu$-a.e.), if $f_j$ is continuous.\footnote{In practice, this is the case, when choosing continuous basis functions $\bfe_Y$ like B-splines (for the continuous component).}
Extending $f_j$ outside of $I$ by~$0$ also yields the statement for the boundary values of $I$.
\item
In the mixed case, we have
\begin{align*}
\lim_{n \ra \infty} \frac{\Pbb_j (A_n)}{\mu (A_n)}
&= \lim_{n \ra \infty} \frac{\sum_{d = 1}^D w_d \, \delta_{t_d}(A_n) f_j (t_d) + \int_{A_n} f_j \, \dlamb}{\sum_{d = 1}^D w_d \, \delta_{t_d}(A_n) + \lambda(A_n)}.
\end{align*}
If $s \in \Dcal = \{t_1, \ldots , t_D\}$, 
the term simplifies to the discrete case.
Otherwise, the term simplifies to the continuous case.
In both cases, the limit is $f_j(s)$. 
\end{enumerate}
%
\end{enumerate}
\end{proof}

\begin{proof}[Proof of Proposition~3.2] 
%
It is straightforward to show that $\iota$ is well-defined and linear.
Let $\tilde{f} \in B^2(\tilde{\Tcal})$ and $g \in B^2(\Tcal)$.
Preservation of the norm is implied by the more general preservation of the inner product,
$\langle \iota(\tilde{f}), g \rangle_{B^2(\Tcal)} = \langle \tilde{f}, g|_{\tilde{\Tcal}} \rangle_{B^2(\tilde{\Tcal})}$,
considering the special case $g = \iota(\tilde{f})$.
As we need the preservation of the inner product later, we show this more general property instead of just preservation of the norm.
We have
\begin{align*}
\langle \iota(\tilde{f}), g \rangle_{B^2(\Tcal)}
&= \int_{\Tcal} \clr \left[ \iota(\tilde{f}) \right]
\left( \left(\log g - \Scal_{\tilde{\Tcal}} (g|_{\tilde{\Tcal}}) \right) + \left( \Scal_{\tilde{\Tcal}} (g|_{\tilde{\Tcal}}) - \Scal_{\Tcal} (g)\right) \right) \, \dmu, 
\end{align*}
where the last term $\Scal_{\tilde{\Tcal}} (g|_{\tilde{\Tcal}}) - \Scal_{\Tcal} (g)$ is constant.
Thus, it does not contribute to the integral as $\clr \left[ \iota(\smash{\tilde{f}}) \right] \in L^2_0(\Tcal)$.
By the additivity of $\mu$ we get
\begin{align}
\Scal_{\Tcal} \left(\iota(\tilde{f})\right)
&= \frac{1}{\mu(\Tcal)} \left( \int_{\tilde{\Tcal}} \log \tilde{f} \, \dmu + \int_{\Tcal \setminus \tilde{\Tcal}} \Scal_{\tilde{\Tcal}} (\tilde{f}) \, \dmu \right) 
%
= \Scal_{\tilde{\Tcal}} (\tilde{f}) \label{proof_subcoh_S_embedding}
\end{align}
and thus
\begin{align}
\langle \iota(\tilde{f}), g \rangle_{B^2(\Tcal)}
&= \int_{\Tcal} \left( \log \iota(\tilde{f}) - \Scal_{\tilde{\Tcal}} (\tilde{f}) \right)  \left( \log g - \Scal_{\tilde{\Tcal}} (g|_{\tilde{\Tcal}}) \right) \, \dmu. \notag
\end{align}
Note that the first factor of the integrand is zero on $\Tcal \setminus \tilde{\Tcal}$ as $\iota(\ft) = \exp \Scal_{\tilde{\Tcal}} (\tilde{f})$ on this set.
This leaves us with
\begin{align}
\langle \iota(\tilde{f}), g \rangle_{B^2(\Tcal)}
%
= \int_{\tilde{\Tcal}} \clr_{\tilde{\Tcal}} \left[ \tilde{f} \right] \, \clr_{\tilde{\Tcal}} \left[ g|_{\tilde{\Tcal}} \right] \, \dmu 
&= \langle \tilde{f}, g|_{\tilde{\Tcal}} \rangle_{B^2(\tilde{\Tcal})}, \label{proof_subcoh_orthogonality}
\end{align}
i.e., $\iota$ preserves the inner product.
In particular, $\iota$ preserves the norm and is an embedding.
Being a Hilbert space, $B^2(\tilde{\Tcal})$ is complete and thus is a closed subspace of $B^2(\Tcal)$.
For
$ 
P: B^2(\Tcal) \ra B^2(\Tcal), 
~
f \mapsto \iota(f|_{\tilde{\Tcal}}),
$ 
we show
\begin{enumerate}[(a)]
\item
$P^2 = P$,
\item
$\Vert P \Vert := \sup_{f \neq 0} \frac{\Vert P(f) \Vert_{B^2(\Tcal)}}{\Vert f \Vert_{B^2(\Tcal)}} = 1$,
 \item 
 $P^\ast = P$.
\end{enumerate}
Proofs of (a)-(c):
\begin{enumerate}[(a)]
\item\label{proof_subcoh_squared_projection}
On $\tilde{\Tcal}$, the embedding $\iota$ is the identity and thus $P\left( P(f) \right) = P(f)$ holds. 
\item
Let $f \in B^2(\Tcal)$.
First, we show 
$\Vert P(f) \Vert_{B^2(\Tcal)}^2 \leq \Vert f \Vert_{B^2(\Tcal)}^2$.
We have
\begin{align*}
\Vert f \Vert_{B^2(\Tcal)}^2
&= \int_{\tilde{\Tcal}} \left(\clr_{\tilde{\Tcal}} [f] + \left( \SfTn - \SfT \right) \right)^2 \, \dmu
+ \int_{\Tcal \setminus \tilde{\Tcal}} \left( \clr_{\Tcal} [f] \right)^2 \, \dmu .
\end{align*}
The first term is bounded from below by $\Vert f|_{\tilde{\Tcal}} \Vert_{B^2(\tilde{\Tcal})}^2$ since $\clr_{\tilde{\Tcal}} [f]$ is orthogonal to the constant $\SfTn - \SfT$ and the square integral of the latter is nonnegative.
Furthermore, the last term is nonnegative, i.e., $\Vert f \Vert_{B^2(\Tcal)}^2 \geq \Vert f|_{\tilde{\Tcal}} \Vert_{B^2(\tilde{\Tcal})}^2$.
As $\iota$ preserves the norm, this implies the claim.
Since $P(f) \in B^2(\Tcal)$ saturates the inequality because of~\eqref{proof_subcoh_squared_projection}
we get
$\Vert P \Vert 
= 1$.
\item
Let $f, g \in B^2(\Tcal)$.
Then, using the symmetry of the inner product, we have
\begin{align*}
\left\langle P(f), g \right\rangle_{B^2(\Tcal)}
&\overset{\eqref{proof_subcoh_orthogonality}}{=} \left\langle f|_{\tilde{\Tcal}}, g|_{\tilde{\Tcal}} \right\rangle_{B^2(\tilde{\Tcal})}
\overset{\eqref{proof_subcoh_orthogonality}}{=}
\left\langle f, P(g) \right\rangle_{B^2(\Tcal)}.
\end{align*}
\end{enumerate}
In particular, $P$ is an orthogonal projection.
\end{proof}


\begin{prop}\label{thm_decomposition_bayes}
Consider a mixed Bayes Hilbert space $\B = B^2\left( \Tcal, \Acal, \mu\right)$, i.e., $\Tcal = I \cup \Dcal$, where $I \subset \Rbb$
is a nontrivial interval and $\Dcal = \{t_1, \ldots , t_D\} \subset \Rbb$, $\Acal$ is the smallest $\sigma$-algebra containing all closed subintervals of $I$ and all points of $\Dcal$, and $\mu = \delta + \lambda$, where $\delta = \sum_{d = 1}^D w_d \, \delta_{t_d}$ with 
$w_d > 0$.
For $\Ccal := I \setminus \Dcal$, the orthogonal complement of the Bayes Hilbert space $\Bl = B^2\left( \Ccal, \mathfrak{B} \cap \Ccal, \lambda \right)$ in $\B$ is $\Bd = B^2\left( \Dcal^\bullet, \Pcal \left( \Dcal^\bullet \right), \delta^\bullet \right)$, where $\Dcal^\bullet := \Dcal \cup \{t_{D + 1}\}$ with $t_{D + 1} \in \Rbb \setminus \Dcal$ and $\delta^\bullet := \sum_{d = 1}^{D + 1} w_d \, \delta_{t_d}, ~ w_{D + 1} := \lambda(I)$.
The embeddings to consider $\Bl$ and $\Bd$ as subspaces of $\B$ are
\begin{align*}
&\Jc : \Bl \hookrightarrow \B && \fc \mapsto
\begin{cases}
\fc & \mathrm{on}~ \Ccal \\
\exp \Scal_\Ccal(\fc) & \mathrm{on}~ \Dcal
\end{cases} \\
&\Jd: \Bd \hookrightarrow \B && \fd \mapsto
\begin{cases}
\fd \left( t_{D+1} \right) & \mathrm{on}~ \Ccal\\
\fd & \mathrm{on}~ \Dcal
\end{cases} \, ,
\end{align*}
where $\exp \Scal_\Ccal(\fc) 
$ is the geometric mean of $\fc$, see Proposition~3.2. 
This means, for every $\alpha \in \Rbb, ~\fc, g_{\mathrm{c}} \in \Bl, ~\fd, g_{\mathrm{d}} \in \Bd$:
\begin{enumerate}[(a)]
\item\label{proof_decomposition_linearity}
$\Jc(\alpha \odot \fc \oplus g_{\mathrm{c}}) = \alpha \odot \Jc (\fc) \oplus \Jc (g_{\mathrm{c}})$ and $\Jd(\alpha \odot \fd \oplus g_{\mathrm{d}}) = \alpha \odot \Jd (\fd) \oplus \Jd (g_{\mathrm{d}})$ (Linearity),
\item\label{proof_decomposition_preservation_norm}
$\Vert \Jc(\fc) \Vertb = \Vert \fc\Vertbl$ and $\Vert \Jd(\fd) \Vertb = \Vert \fd\Vertbd$ (Preservation of the norm),
\item
$\langle \Jc(\fc), \Jd(\fd) \rangleb = 0$ (Orthogonality).
\item
For every $f \in \B$, there exist unique functions $\fc \in \Bl, \fd \in \Bd$ such that $f = \Jc(\fc) \oplus \Jd (\fd)$, given by
\begin{align}
&\fc: \Ccal \ra \Rbb , \quad
t \mapsto f(t) ,
&&\fd: \Dcal^\bullet \ra \Rbb, \quad
t \mapsto
\begin{cases}
1 , & t = t_{D + 1} \\
\frac{f(t)}{\exp \Scal_\lambda(f)} , & t \in \Dcal .
\end{cases} \label{proof_decomposition_decomposition} 
\end{align}
\end{enumerate}
\end{prop}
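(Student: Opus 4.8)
The plan is to split the work into three pieces: reduce everything concerning $\Jc$ to Proposition~\ref{thm_subcompositional_coherence}, verify the statements about $\Jd$ by a direct computation with the clr transformation, and then glue the two together into an orthogonal decomposition of $\B$. First I would note that since $\Dcal\cap\Ccal=\emptyset$, the restriction of $\mu=\delta+\lambda$ to $\mathfrak{B}\cap\Ccal$ is exactly $\lambda$, so $\Bl=B^2(\Ccal,\mathfrak{B}\cap\Ccal,\mu)$ and Proposition~\ref{thm_subcompositional_coherence} applies verbatim with $\Tcal=I$ and $\Tcal_0=\Ccal$. This already delivers the $\Jc$-halves of~\ref{proof_decomposition_linearity} and~\ref{proof_decomposition_preservation_norm}, the fact that $\Jc(\Bl)$ is a closed subspace, the identity $\Scal_\mu(\Jc(\fc))=\Scal_\lambda(\fc)$ from~\eqref{proof_subcoh_S_embedding} (so that $\clr_\mu[\Jc(\fc)]$ coincides with $\clr_\lambda[\fc]$ on $\Ccal$ and vanishes identically on $\Dcal$, since there $\Jc(\fc)=\exp\Scal_\lambda(\fc)$), and the explicit orthogonal projection $P_{\mathrm c}(f)=\Jc(f|_\Ccal)$ onto $\Jc(\Bl)$.

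Next I would handle $\Jd$ by hand, the key numerical observation being $\mu(I)=\sum_{d=1}^D w_d+\lambda(I)=\delta^\bullet(\Dcal^\bullet)$, which is exactly why one adjoins the extra point $t_{D+1}$ with weight $w_{D+1}=\lambda(I)$. Linearity of $\Jd$ is immediate, since $\Jd$ is the identity on $\Dcal$ and sends $\fd$ to the constant $\fd(t_{D+1})$ on $\Ccal$, while $\oplus$ and $\odot$ act pointwise. For norm preservation I would compute $\int_I\log\Jd(\fd)\,\dmu=\sum_{d=1}^D w_d\log\fd(t_d)+\lambda(I)\log\fd(t_{D+1})=\int_{\Dcal^\bullet}\log\fd\,\ddelb$, hence $\Scal_\mu(\Jd(\fd))=\Scal_{\delta^\bullet}(\fd)$; therefore $\clr_\mu[\Jd(\fd)]$ agrees with $\clr_{\delta^\bullet}[\fd]$ on $\Dcal$ and equals the constant $\clr_{\delta^\bullet}[\fd](t_{D+1})$ on $\Ccal$, and splitting $\Vert\Jd(\fd)\Vertb^2=\int_I(\clr_\mu[\Jd(\fd)])^2\,\dmu$ over $\Dcal$ and $\Ccal$ (using $\lambda(\Ccal)=\lambda(I)=w_{D+1}$) reproduces $\sum_{d=1}^{D+1}w_d(\clr_{\delta^\bullet}[\fd](t_d))^2=\Vert\fd\Vertbd^2$. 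Orthogonality then follows immediately: in $\langle\Jc(\fc),\Jd(\fd)\rangleb=\int_I\clr_\mu[\Jc(\fc)]\,\clr_\mu[\Jd(\fd)]\,\dmu$ the $\Dcal$-part vanishes because $\clr_\mu[\Jc(\fc)]=0$ there, and the $\Ccal$-part is $\clr_{\delta^\bullet}[\fd](t_{D+1})\int_\Ccal\clr_\lambda[\fc]\,\dlamb=0$ since $\clr_\lambda[\fc]\in\Lnl$ by Proposition~\ref{clr_isomorphism}.

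Finally, to obtain the decomposition in~(d), it remains to show $\Jc(\Bl)^\perp\subseteq\Jd(\Bd)$ (the reverse inclusion being the orthogonality just proved). By Proposition~\ref{thm_subcompositional_coherence}, $\Jc(\Bl)^\perp=\ker P_{\mathrm c}=\{f\in\B:f|_\Ccal\text{ is }\lambda\text{-a.e.\ constant}\}$; for such an $f$ with constant value $c>0$ on $\Ccal$, the function $\fd\in\Bd$ given by $\fd(t_{D+1}):=c$ and $\fd(t_d):=f(t_d)$ on $\Dcal$ satisfies $\Jd(\fd)=f$ (here $f(t_d)>0$ because $\{t_d\}$ is a $\mu$-atom, and $\log\fd\in L^2(\delta^\bullet)$ automatically as $\Dcal^\bullet$ is finite), so $\Jc(\Bl)^\perp=\Jd(\Bd)$ and $\B=\Jc(\Bl)\oplus\Jd(\Bd)$ orthogonally. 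Existence and uniqueness of $\fc,\fd$ follow, and the formulas~\eqref{proof_decomposition_decomposition} come from reading off the two projections: $\Jc(\fc)=P_{\mathrm c}(f)=\Jc(f|_\Ccal)$ forces $\fc=f|_\Ccal$ (which lies in $\Bl$ since $\int_\Ccal|\log f|^2\,\dlamb\le\int_I|\log f|^2\,\dmu<\infty$), and $\Jd(\fd)=f\ominus\Jc(\fc)$ gives $\fd(t_{D+1})\isb 1$ on $\Ccal$ and $\fd(t_d)\isb f(t_d)/\exp\Scal_\lambda(f)$ on $\Dcal$, i.e.~\eqref{proof_decomposition_decomposition} after normalizing the representative so that $\fd(t_{D+1})=1$.

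I expect the main obstacle to be conceptual rather than computational: one must notice that $\Bd$ is \emph{not} a Bayes space on a measurable subset of $I$, so Proposition~\ref{thm_subcompositional_coherence} cannot be applied to $\Jd$ directly; instead the auxiliary atom $t_{D+1}$ with weight $w_{D+1}=\lambda(I)$ has to be introduced precisely so that $\mu(I)=\delta^\bullet(\Dcal^\bullet)$, which is what makes the mean logarithmic integrals — hence the clr transformations and the induced norms — transfer consistently between $\B$, $\Bl$, and $\Bd$. Everything else is routine bookkeeping of which parts of the integrals over $\Dcal$ and $\Ccal$ contribute.
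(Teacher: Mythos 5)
Your argument is correct, and for parts (a)--(c) it coincides with the paper's proof: both reduce the continuous embedding $\Jc$ to Proposition~\ref{thm_subcompositional_coherence} (including the identity $\Scal_\mu(\Jc(\fc))=\Scal_\lambda(\fc)$), verify $\Jd$ by the direct computation $\Scal_\mu(\Jd(\fd))=\Scal_{\delta^\bullet}(\fd)$ exploiting $\mu(I)=\delta^\bullet(\Dcal^\bullet)$ with $w_{D+1}=\lambda(I)$, and obtain orthogonality from the fact that $\Jd(\fd)|_{\Ccal}$ is constant (the paper routes this through \eqref{proof_subcoh_orthogonality}, you expand the clr integral directly -- the same computation). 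Where you genuinely diverge is part (d): the paper writes down the candidate pair \eqref{proof_decomposition_decomposition}, checks by hand the integrability conditions guaranteeing $\fc\in\Bl$ and $\fd\in\Bd$, verifies $\Jc(\fc)\oplus\Jd(\fd)=f$ pointwise, and only then invokes (a)--(c) for uniqueness; you instead use the orthogonal projection $P_{\mathrm c}$ from Proposition~\ref{thm_subcompositional_coherence} structurally, characterize $\Jc(\Bl)^\perp=\ker P_{\mathrm c}$ as the classes that are $\lambda$-a.e.\ constant on $\Ccal$, show this kernel lies in $\Jd(\Bd)$ (membership being automatic since $\Dcal^\bullet$ is finite), and then read the explicit formulas off the two projections. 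Your route gets existence and uniqueness for free from Hilbert-space theory and explains where the formulas come from rather than verifying a guess, at the cost of relying on the closedness of $\Jc(\Bl)$ and the projection identities (b), (c) of Proposition~\ref{thm_subcompositional_coherence); the paper's route is more computational and self-contained, making the integrability of $\log\fc$ and $\log\fd$ explicit. Both are complete proofs, and you correctly identify the key design point, namely that $\Bd$ is not a Bayes space over a subset of $I$, so the auxiliary atom $t_{D+1}$ with weight $\lambda(I)$ is exactly what makes the mean logarithmic integrals, and hence the clr transforms and norms, match up.
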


\begin{proof}
We have
$\Bl = B^2\left( \Ccal, \mathfrak{B} \cap \Ccal, \lambda \right)
= B^2\left( \Ccal, \mathfrak{B} \cap \Ccal, \mu \right)$, per definition of $\mu$.
Since $\Ccal \in \mathfrak{B}$, we obtain from Proposition~3.2 
that $\Jc$ is well-defined and fulfills~\eqref{proof_decomposition_linearity} and~\eqref{proof_decomposition_preservation_norm}.
For $\Jd$, well-definedness is obvious.


\begin{enumerate}[(a)]
\item
For $\Jd$, this is straightforward by definition.
%
\item
Let $\fd \in \Bd$.
With $\mu (\Tcal) = \delta (\Dcal) + \lambda (I) = \delta^\bullet (\Dcal^\bullet)$ we have
\begin{align}
\Scal_\mu\left( \Jd(\fd) \right)
&= \frac1{\mu(\Tcal)} \left( \int_{\Dcal} \log \fd(t_d) \, \ddel + \lambda (I) \log \fd \left( t_{D + 1} \right) \right)
%
= \Scal_{\delta^\bullet}(\fd). \label{S2}
\end{align}
This yields
\begin{align*}
\Vert \Jd(\fd) \Vertb^2
%
&= 
\int_{\Dcal} \left( \log \fd - \Scal_{\delta^\bullet}(\fd) \right)^2 \, \ddel
+ \lambda(I) \left( \log \fd (t_{D+1}) - \Scal_{\delta^\bullet}(\fd) \right)^2 \\
&=\int_{\Dcal^\bullet} \left( \log \fd - \Scal_{\delta^\bullet}(\fd) \right)^2 \, \ddelb
= \Vert \fd\Vertbd^2.
\end{align*}
\item
For $\fc \in \Bl, \fd \in \Bd$, we have
\begin{align*}
\langle \Jc(\fc), \Jd(\fd) \rangleb
&\hspace{-0.19cm}\overset{\eqref{proof_subcoh_orthogonality}}{=}
\langle \fc, \Jd(\fd)|_{\Ccal} \rangle_{\Bl} 
%
= 0,
\end{align*}
as $\Jd(\fd)|_{\Ccal}$ is a constant and thus $0 \in \Bl$.
\item
For $f \in \B$ consider $\fc$ and $\fd$ as in~\eqref{proof_decomposition_decomposition}.
With $f \in \B$, we have $\int_{\Dcal} \left( \log f \right)^2 \ddel + \int_{I} \left( \log f\right)^2 \dlamb = \int_{\Tcal} \left( \log f\right)^2 \, \dmu < \infty$, thus all terms on the left side have to be finite, as well.
Looking at the second term, we get $\fc \in \Bl$ since the Lebesgue integral yields the same results on $I$ and $\Ccal$.
Moreover,
$\fc \in \Bl \subset B^1(\lambda)$ implies $\Scal_\lambda(f) = \Scal_\lambda (\fc) < \infty$.
Similarly, from $\int_{\Dcal} \left( \log f \right)^2 \ddel < \infty$ it follows $\Scal_\delta(f) < \infty$.
Then, we get
\begin{align*}
\int_{\Dcal^\bullet} \left( \log \fd \right) ^2 \ddelb
&= \int_{\Dcal} \left( \log f - \Scal_\lambda(f) \right)^2 \, \ddel + \lambda (I) \\
&= \int_{\Dcal} \left( \log f \right)^2  \, \ddel
- 2 \delta (\Dcal) \Scal_\delta(f) \Scal_\lambda(f) + \delta (\Dcal) \Scal_\lambda(f)^2 + \lambda (I)
< \infty,
\end{align*}
i.e., $\fd \in \Bd$. Finally,
\begin{align*}
\Jc(\fc) \oplus \Jd(\fd)
=
\left\{\begin{array}{ll}
f & \text{on}~\Ccal \\
\exp \left( \Scal_\lambda(f) \right) \, \frac{f}{\exp \left( \Scal_\lambda(f) \right)}  & \text{on}~\Dcal
\end{array}
\right\}
= f.
\end{align*}
As we already showed that $\Bl$ and $\Bd$ form an orthogonal decomposition of $\B$ in (a) -- (c), the representation of $f$ by $\fc$ and $\fd$ is unique and thus $\Bd$ is the orthogonal complement of $\Bl$ in $\B$.
\qedhere
\end{enumerate}
\end{proof}

\begin{prop}\label{thm_decomposition_L2}
Defining all measures and sets as in Proposition~\ref{thm_decomposition_bayes}, the orthogonal complement of $\Lnl = L_0^2\left( \Ccal, \mathfrak{B}, \lambda \right)$ in $\Ln = L^2_0 \left( I, \mathfrak{B}, \mu\right)$ is $\Lnd = L_0^2\left( \Dcal^\bullet, \Pcal \left( \Dcal^\bullet \right), \delta^\bullet \right)$
with respect to the embeddings
\begin{align*}
&\Jtc : \Lnl \hookrightarrow \Ln && \ftc \mapsto
\begin{cases}
\ftc & \mathrm{on}~ \Ccal \\
0 & \mathrm{on}~ \Dcal
\end{cases} \\
&\Jtd: \Lnd \hookrightarrow \Ln && \ftd \mapsto
\begin{cases}
\ftd \left( t_{D+1} \right) & \mathrm{on}~ \Ccal \\
\ftd & \mathrm{on}~ \Dcal
\end{cases} \, .
\end{align*}
The decomposition is equivalent to the one in Proposition~\ref{thm_decomposition_bayes}, i.e., for all $\fc \in \Bl$ and all $\fd \in \Bd$ we have $\Jtc \left(\clr_{\lambda} \left[ \fc\right] \right) = \clr_\mu \left[ \Jc \left( \fc \right) \right]$ and $\Jtd \left(\clr_{\delta^\bullet} \left[ \fd\right] \right) = \clr_\mu \left[ \Jd \left( \fd \right) \right]$.
Moreover, the representation of $\ft \in \Ln$ as $\ft = \Jtc(\ftc) + \Jtd (\ftd)$ with unique functions $\ftc \in \Lnl, \ftd \in \Lnd$ given by
\begin{align}
&\ftc: \Ccal \ra \Rbb && t \mapsto \ft (t) - \frac{1}{\lambda(\Ccal)} \int_\Ccal \ft \, \dlamb \, , \notag \\
&\ftd: \Dcal^\bullet \ra \Rbb && t \mapsto
\begin{cases}
\frac{1}{\lambda(\Ccal)} \int_\Ccal \ft \, \dlamb & , t = t_{D + 1} \\
\ft (t) & , t \in \Dcal
\end{cases} \, , \label{proof_decomposition_L2_decomposition}
\end{align}
is equivalent to the unique representation of $f \in \B$ as $f = \Jc(\fc) \oplus \Jd (\fd)$, see~\eqref{proof_decomposition_decomposition}, 
via clr transformations.
This means, for $\ft = \clr_\mu [f] \in \Ln$
we have $\ftc = \clr_\lambda [\fc] \in \Lnl$ and $\ftd = \clr_{\delta^\bullet} [\fd] \in \Lnd$.
\end{prop}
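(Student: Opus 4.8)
The plan is to deduce Proposition~\ref{thm_decomposition_L2} from Proposition~\ref{thm_decomposition_bayes} by transporting the Bayes space decomposition along the clr isomorphisms, rather than redoing the computation from scratch. First I would verify that $\Jtc$ and $\Jtd$ are well-defined isometric embeddings. For $\ftc \in \Lnl$ we get $\int_I \Jtc(\ftc)\,\dmu = \int_\Ccal \ftc\,\dlamb + \int_\Dcal 0\,\ddel = 0$ and $\Vert \Jtc(\ftc)\Vert^2_{L^2(\mu)} = \int_\Ccal \ftc^2\,\dlamb = \Vert \ftc\Vert_{\Lnl}^2$. For $\ftd \in \Lnd$, the defining property $\int_{\Dcal^\bullet}\ftd\,\ddelb = 0$ reads $w_{D+1}\,\ftd(t_{D+1}) + \sum_{d=1}^{D}w_d\,\ftd(t_d) = 0$, which (since $w_{D+1} = \lambda(I)$) is exactly $\int_I \Jtd(\ftd)\,\dmu = \lambda(I)\,\ftd(t_{D+1}) + \int_\Dcal \ftd\,\ddel = 0$; the same splitting of the integral gives $\Vert \Jtd(\ftd)\Vert^2_{L^2(\mu)} = \Vert \ftd\Vert^2_{\Lnd}$. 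Orthogonality is immediate: $\langle \Jtc(\ftc), \Jtd(\ftd)\rangle_{L^2(\mu)} = \ftd(t_{D+1})\int_\Ccal \ftc\,\dlamb + \int_\Dcal 0\cdot\ftd\,\ddel = 0$.

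The central step is to establish the two commuting squares $\clr_\mu\circ\Jc = \Jtc\circ\clr_\lambda$ and $\clr_\mu\circ\Jd = \Jtd\circ\clr_{\delta^\bullet}$. For the first I would invoke $\Scal_\mu(\Jc(\fc)) = \Scal_\lambda(\fc)$, which is~\eqref{proof_subcoh_S_embedding} of Proposition~\ref{thm_subcompositional_coherence} applied with $\Tcal_0 = \Ccal$ (note that $\mu$ and $\lambda$ agree on $\Ccal$, and $\Jc$ is the embedding $\iota$ of that proposition). Then $\clr_\mu[\Jc(\fc)]$ equals $\log\fc - \Scal_\lambda(\fc) = \clr_\lambda[\fc]$ on $\Ccal$ and $\Scal_\lambda(\fc) - \Scal_\lambda(\fc) = 0$ on $\Dcal$, i.e.\ it equals $\Jtc(\clr_\lambda[\fc])$. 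For the second I would use $\Scal_\mu(\Jd(\fd)) = \Scal_{\delta^\bullet}(\fd)$ from~\eqref{S2} in the proof of Proposition~\ref{thm_decomposition_bayes}: then $\clr_\mu[\Jd(\fd)]$ equals $\log\fd(t_{D+1}) - \Scal_{\delta^\bullet}(\fd) = \clr_{\delta^\bullet}[\fd](t_{D+1})$ on $\Ccal$ and $\log\fd - \Scal_{\delta^\bullet}(\fd) = \clr_{\delta^\bullet}[\fd]$ on $\Dcal$, i.e.\ $\Jtd(\clr_{\delta^\bullet}[\fd])$. Since $\clr_\mu,\clr_\lambda,\clr_{\delta^\bullet}$ are isometric isomorphisms (Proposition~\ref{clr_isomorphism} together with the isometry noted before Theorem~\ref{bayes_hilbert_space_thm}) and $\clr_\lambda,\clr_{\delta^\bullet}$ are onto, these squares yield $\clr_\mu(\Jc(\Bl)) = \Jtc(\Lnl)$ and $\clr_\mu(\Jd(\Bd)) = \Jtd(\Lnd)$. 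Consequently the orthogonal direct sum $\B = \Jc(\Bl)\oplus\Jd(\Bd)$ of Proposition~\ref{thm_decomposition_bayes} is carried by the isometry $\clr_\mu$ onto $\Ln = \Jtc(\Lnl)\oplus\Jtd(\Lnd)$, so $\Lnd$ is the orthogonal complement of $\Lnl$ in $\Ln$ with respect to these embeddings.

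It then remains to identify the components. Given $\ft\in\Ln$, write $\ft = \clr_\mu[f]$ and let $\fc\in\Bl$, $\fd\in\Bd$ be the unique functions from~\eqref{proof_decomposition_decomposition}. By linearity of clr and the commuting squares, $\ft = \clr_\mu[\Jc(\fc)\oplus\Jd(\fd)] = \Jtc(\clr_\lambda[\fc]) + \Jtd(\clr_{\delta^\bullet}[\fd])$, and uniqueness of the orthogonal $L^2$-decomposition forces $\ftc = \clr_\lambda[\fc]$ and $\ftd = \clr_{\delta^\bullet}[\fd]$. Finally I would check that this matches~\eqref{proof_decomposition_L2_decomposition}: substituting $\ft = \log f - \Scal_\mu(f)$ and $\Scal_\lambda(f|_\Ccal) = \tfrac1{\lambda(\Ccal)}\int_\Ccal \log f\,\dlamb$, the additive constant $\Scal_\mu(f)$ cancels, giving $\clr_\lambda[\fc] = \log f - \Scal_\lambda(f|_\Ccal) = \ft - \tfrac1{\lambda(\Ccal)}\int_\Ccal\ft\,\dlamb$ on $\Ccal$, and a short computation (using $\lambda(\Ccal) = \lambda(I)$ since $\Dcal$ is $\lambda$-null, and $\mu(I) = \delta(\Dcal) + \lambda(\Ccal)$) shows that $\clr_{\delta^\bullet}[\fd]$ equals the stated piecewise formula for $\ftd$ on $\Dcal^\bullet$.

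I do not expect a deep obstacle; the part requiring most care is the bookkeeping of the three mean-logarithmic-integral constants $\Scal_\mu$, $\Scal_\lambda$, $\Scal_{\delta^\bullet}$, both in the commuting squares and in the final matching of~\eqref{proof_decomposition_L2_decomposition} --- equivalently, seeing why $\Jtc$ is not mere restriction-by-zero but carries the extra re-centering by the average over $\Ccal$, which is precisely the $L^2$ shadow of the geometric-mean correction $\exp\Scal_\lambda$ appearing in $\Jc$ and $\Jd$.
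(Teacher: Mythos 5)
Your proposal is correct, and its first half coincides with the paper's: both hinge on the two commuting identities $\Scal_\mu(\Jc(\fc)) = \Scal_\lambda(\fc)$ (i.e.\ \eqref{proof_subcoh_S_embedding} applied with $\Tcal_0 = \Ccal$, where $\mu$ and $\lambda$ agree) and $\Scal_\mu(\Jd(\fd)) = \Scal_{\delta^\bullet}(\fd)$ from \eqref{S2}, which give $\clr_\mu\circ\Jc = \Jtc\circ\clr_\lambda$ and $\clr_\mu\circ\Jd = \Jtd\circ\clr_{\delta^\bullet}$ exactly as in the paper's proof. Where you diverge is the second half. The paper works directly in $\Ln$: it takes the explicit formulas \eqref{proof_decomposition_L2_decomposition}, verifies by integral computations that $\ftc \in \Lnl$ and $\ftd \in \Lnd$ and that $\Jtc(\ftc)+\Jtd(\ftd) = \ft$, concludes that $\Lnd$ is the orthogonal complement of $\Lnl$, and only then identifies $\ftc = \clr_\lambda[\fc]$, $\ftd = \clr_{\delta^\bullet}[\fd]$ via uniqueness. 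You instead transport the orthogonal direct sum $\B = \Jc(\Bl)\oplus\Jd(\Bd)$ of Proposition~\ref{thm_decomposition_bayes} through the isometric isomorphism $\clr_\mu$ (using surjectivity of $\clr_\lambda$ and $\clr_{\delta^\bullet}$ to get $\clr_\mu(\Jc(\Bl)) = \Jtc(\Lnl)$ and $\clr_\mu(\Jd(\Bd)) = \Jtd(\Lnd)$), obtaining the orthogonal complement statement abstractly, and then recover the explicit formulas \eqref{proof_decomposition_L2_decomposition} at the end by matching the constants $\Scal_\mu$, $\Scal_\lambda$, $\Scal_{\delta^\bullet}$; that bookkeeping does check out (e.g.\ $\Scal_\lambda(f)-\Scal_\mu(f) = -\Scal_{\delta^\bullet}(\fd)$, which is what makes the value at $t_{D+1}$ come out right). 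Your route reuses Proposition~\ref{thm_decomposition_bayes} more fully and spares the paper's direct square-integrability and integrate-to-zero checks for $\ftc,\ftd$, at the price of invoking surjectivity of the clr maps and doing the constant-matching computation explicitly; the paper's route is more self-contained in $\Ln$ and yields the explicit formulas without detouring through $\B$. Both are complete proofs of the stated claims.
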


\begin{proof}
Linearity, preservation of the norm, and orthogonality are straightforward calculations.
Thus, $\Lnl$ and $\Lnd$ form an orthogonal decomposition of $\Ln$.
To show the equivalence to the decomposition in Proposition~\ref{thm_decomposition_bayes}, consider $\fc \in \Bl$ and $\fd \in \Bd$.
Then, we have
\begin{align*}
\clr_\mu \left[ \Jc \left( \fc \right) \right]
&= \log \Jc \left( \fc \right) - \Scal_{B^2(\Tcal, \Acal, \mu)} \left( \Jc( \fc ) \right)
\overset{\eqref{proof_subcoh_S_embedding}}{=} \log \Jc \left( \fc \right) - \Scal_{B^2(\Ccal, \mathfrak{B} \cap \Ccal, \mu)} (\fc) \\
&= \left\{\begin{array}{ll}
\log \fc - \Sfc & \mathrm{on}~ \Ccal \\
\Sfc - \Sfc  & \mathrm{on}~ \Dcal
\end{array}\right\} 
%
= \Jtc \left(\clr_{\lambda} \left[ \fc\right] \right) ,
\\
%
\clr_\mu \left[ \Jd \left( \fd \right) \right]
&= \log \Jd \left( \fd \right) - \Scal_\mu \left( \Jd ( \fd ) \right)
\overset{(\ref{S2})}{=} \log \Jd \left( \fd \right) - \Scal_{\delta^\bullet} ( \fd ) \\
&= \left\{\begin{array}{ll}
\log \fd (t_{D+1}) - \Scal_{\delta^\bullet} ( \fd ) & \mathrm{on}~ \Ccal \\
\log \fd - \Scal_{\delta^\bullet} ( \fd ) & \mathrm{on}~ \Dcal
\end{array}\right\}
= \Jtd \left(\clr_{\delta^\bullet} \left[ \fd \right] \right).
\end{align*}
For $\ft \in \Ln$ consider $\ftc$ and $\ftd$ as in~\eqref{proof_decomposition_L2_decomposition}.
As $\ft \in \Ln$, we have $\int_{\Dcal} \ft ^2 \, \ddel + \int_I \ft^2 \, \dlamb = \int_\Tcal \ft^2 \, \dmu < \infty$.
Thus, both terms on the left side are finite and in particular, $\ft \in L^2(\lambda) \subset L^1(\lambda)$.
Then,
\begin{align*}
\int_\Ccal \ftc ^2 \, \dlamb
&= \int_\Ccal \left( \ft - \frac{1}{\lambda(\Ccal)} \int_\Ccal \ft \, \dlamb \right)^2 \, \dlamb 
= \int_\Ccal \ft^2 \, \dlamb - \frac{1}{\lambda(\Ccal)} \Bigl( \int_\Ccal \ft \, \dlamb \Bigr)^2
< \infty. 
\end{align*}
It is straightforward to show $\int_\Ccal \ftc \, \dlamb = 0$, i.e., $\ftc \in \Lnl$.
Moreover, we have
\begin{align*}
\int_{\Dcal^\bullet} \ftd^2 \, \ddelb
&= \int_{\Dcal} \ft^2 \, \ddel + \frac{\lambda(I)}{\lambda(\Ccal)^2} \left( \int_\Ccal \ft \, \dlamb \right)^2
< \infty. 
\end{align*}
The same calculation without squares shows
$\int_{\Dcal^\bullet} \ftd \, \ddelb
= \int_{\Dcal} \ft \, \ddel + \int_\Ccal \ft \, \dlamb
= \int_\Tcal \ft \, \dmu$,
which is zero as $\ft \in \Ln$ and thus $\ftd \in \Lnd$.
Furthermore,
\begin{align*}
\Jtc(\ftc) + \Jtd (\ftd)
&= \left\{\begin{array}{ll}
\ft - \frac{1}{\lambda(\Ccal)} \int_\Ccal \ft \, \dlamb + \frac{1}{\lambda(\Ccal)} \int_\Ccal \ft \, \dlamb & \mathrm{on}~ \Ccal \\
0 + \ft & \mathrm{on}~ \Dcal
\end{array}\right\}
= \ft
\end{align*}
and the uniqueness of the representation follows from $\Lnl$ and $\Lnd$ being an orthogonal decomposition of $\Ln$.
This implies that $\Lnd$ is the orthogonal complement of $\Lnl$ in $\Ln$.
Finally, we show the equivalence to the representation $f = \Jc(\fc) \oplus \Jd (\fd)$ of $f \in \B$ with unique functions $\fc \in \Bl$ and $\fd \in \Bd$.
Consider $\ft = \clr_\mu [f] \in \Ln$.
With the equivalence of the decompositions and linearity of $\clr_\mu$ we get
\begin{align*}
\Jtc(\ftc) + \Jtd (\ftd)
= \ft
=\clr_\mu [f]
= \clr_\mu \left[ \Jc(\fc) \oplus \Jd (\fd) \right]
= \Jtc \left( \clr_\lambda[\fc] \right) + \Jtd \left( \clr_{\delta^\bullet} [\fd] \right)
\end{align*}
and uniqueness of the representation yields $\ftc = \clr_\lambda[\fc]$ and $\ftd = \clr_{\delta^\bullet} [\fd]$.
\end{proof}

\section{Transforming a vector from \texorpdfstring{$\Ltwo^{K_Y + 1}$ to $\Ln^{K_Y}$}{L2 to L2 with integrate-to-zero constraint}}\label{appendix_constraint}
The approach described in this section is motivated by the inclusion of the sum-to-zero constraint in  functional linear array models, compare~(3), 
described in the online appendix~A of~\citet{brh2015}.
It is based on the implementation of linear constraints \citep[Section~1.8.1]{wood2017}.
For a vector $\bar{\bfe}_Y = (\bar{b}_{Y, 1} , \ldots , \bar{b}_{Y, K_Y + 1}) \in \Ltwo^{K_Y + 1}$, consider
\begin{align*}
\Cf := \Bigl( \int_\Tcal \bar{b}_{Y, 1} \, \dmu , \ldots , \int_\Tcal \bar{b}_{Y, K_Y + 1} \, \dmu \Bigr) \in \Rbb^{1 \times K_Y + 1}.
\end{align*}
Determining the QR decomposition of $\Cf^\top$ yields
\begin{align*}
\Cf^\top = [ \Qf : \Zf ] \left[ \begin{array}{c}
R \\
\mathbf{0}_{K_Y}
\end{array} \right] ,
\end{align*}
where $[ \Qf : \Zf ]$ is a $(K_Y + 1) \times (K_Y + 1)$ orthogonal matrix, $R$ is a $1 \times 1$ (upper triangular) matrix and $\mathbf{0}_{K_Y}$ is the vector of length $K_Y$ containing zeros in every component.
The matrix $\Zf = (z_{ij})_{i = 1, \ldots , K_Y + 1, j = 1, \ldots , K_Y}$ is the desired transformation matrix.
We obtain the transformed vector $\tilde{\bfe}_Y = (\tilde{b}_{Y, 1} , \ldots , \tilde{b}_{Y, K_Y})$ by the linear combinations of each column of $\Zf$ with the vector $\bar{\bfe}_Y$:
\begin{align*}
\tilde{b}_{Y, m} := \sum_{i=1}^{K_Y + 1} \bar{b}_{Y, i} \, z_{im} && m = 1, \ldots , K_Y
\end{align*}
Then, we have
\begin{align*}
\Bigl(\int_\Tcal \tilde{b}_{Y, 1} \, \dmu , \ldots , \int_\Tcal \tilde{b}_{Y, K_Y} \, \dmu \Bigr)
&= \Bigl(\int_\Tcal \sum_{i=1}^{K_Y + 1} \bar{b}_{Y, i} \, z_{i1} \, \dmu , \ldots , \int_\Tcal \sum_{i=1}^{K_Y + 1} \bar{b}_{Y, i} \, z_{iK_Y} \, \dmu \Bigr) \\
&= \Bigl(\sum_{i=1}^{K_Y + 1} \int_\Tcal  \bar{b}_{Y, i} \, \dmu \, z_{i1}, \ldots , \sum_{i=1}^{K_Y + 1} \int_\Tcal \bar{b}_{Y, i} \, \dmu \, z_{iK_Y} \Bigr) \\
&= \Cf \Zf
= [ R : \mathbf{0}_{K_Y}^\top ] \left[ \begin{array}{c}
\Qf^\top \\
\Zf^\top
\end{array} \right]
\Zf  \\
&= [ R : \mathbf{0}_{K_Y}^\top ] \left[ \begin{array}{c}
\mathbf{0}_{K_Y}^\top \\
\Id_{K_Y}
\end{array} \right]
= \mathbf{0}_{K_Y}^\top,
\end{align*}
i.e., $\tilde{\bfe}_Y \in \Ln^{K_Y}$.
Now let $\bar{\bfe}_Y \in \Ltwo^{K_Y + 1}$ be a vector of basis functions with penalty matrix $\bar{\Pf}_Y \in \Rbb^{(K_Y + 1) \times (K_Y + 1)}$.
Then, the penalty matrix $\tilde{\Pf}_Y \in \Rbb^{K_Y \times K_Y}$ for the transformed basis $\tilde{\bfe}_Y \in \Ln^{K_Y}$ is obtained by transforming the original penalty matrix: $\tilde{\Pf}_Y = \Zf^\top \bar{\Pf}_Y \Zf$.

\section{Equivalence of Boosting in \texorpdfstring{$\B$ and $\Ln$}{B2 and L2 with integrate-to-zero constraint}}\label{appendix_boosting_equivalence}
To explain the equivalence of boosting in $\B$ and boosting in $\Ln$, we briefly summarize how the gradient boosting algorithm in $\B$ as described in Section~2.3 
is adapted for boosting in $\Ln$.
Obviously, all functions that are elements of $\B$ in the original model and algorithm are considered elements of $\Ln$ for this purpose.
In the following, we denote the latter functions with a tilde to distinguish them from the former ones.
Furthermore, the Bayes Hilbert space operations $\oplus, \odot$ and $\ootimes$ are replaced by their $\Ln$-counterparts $+, \cdot$ and $\otimes$.
%

We take a closer look at the second and third steps of the algorithm, which are crucial for the equivalence of the two algorithms.
In $\Ln$, they translate to:
\begin{enumerate}
\setcounter{enumi}{1}
\item
Calculate the negative gradient (with respect to the Fr\'{e}chet differential) of the empirical risk
\begin{align}
\tilde{U}_i :=
- \nabla \rho_{\tilde{y}_i} (\tilde{f}) \big|_{{\tilde{f} = \widetilde{\hh^{[m]}(\xf_i)}}}
= 2 \, \left(\tilde{y}_i - \widetilde{\hh^{[m]}(\xf_i)} \right) \in \Ln,\label{gradient_clr}
\end{align}
where $\widetilde{\hh^{[m]}(\xf_i)}
= \sum_{j=1}^J \left( \bfe_j(\xf_i)^\top \otimes \tilde{\bfe}_Y^\top \right) \thetaf_j^{[m]} \in \Ln$ and $\rho_{\tilde{y}_i} : \Ln \ra \Rbb, \tilde{f} \mapsto \Vert \tilde{y}_i - \tilde{f} \Vert_{\Ltwo}^2$ is the quadratic loss functional on $\Ln$.
For $j = 1, \ldots , J$, fit the base-learners
\begin{align}
\zetafh_j
&= \argmin_{\zetaf \in \Rbb^{K_j \, K_Y}} \sum_{i=1}^N \left\Vert \tilde{U}_i - \left( \bfe_j(\xf_i)^\top \otimes \tilde{\bfe}_Y ^\top \right) \zetaf \right\Vert_{\Ltwo}^2 + \zetaf^\top \Pf_{jY} \zetaf \label{fit_baselearner_clr}
\end{align}
and select the best base-learner
\begin{align}
j^\bigstar
= \argmin_{j = 1, \ldots , J} \sum_{i=1}^N \left\Vert \tilde{U}_i - \left( \bfe_j(\xf_i)^\top \otimes \tilde{\bfe}_Y ^\top \right) \zetafh_j \right\Vert_{\Ltwo}^2.  \label{select_baselearner_clr}
\end{align}
\item
The coefficient vector corresponding to the best base-learner is updated, the others stay the same: $\thetaf_{j^\bigstar}^{[m+1]} := \thetaf_{j^\bigstar}^{[m]} + \kappa \, \gammafh_{j^\bigstar}, ~\thetaf_j^{[m+1]} := \thetaf_j^{[m]} \quad \text{for}~j \neq j^\bigstar$.
\end{enumerate}
The proof of the existence of the gradient and the equality in Equation~(\ref{gradient_clr}) is analogous to the respective proof for the original algorithm, which is provided in appendix~\ref{appendix_proofs}.

Now we compare the estimation of the original model~(2) 
applying the algorithm described in Section~2.3 
with estimation of the clr transformed model
\begin{align}
\clr [y_i]
= \clr [h (\xf_i)]  + \clr [\varepsilon_i]
= \sum_{j=1}^J \clr[h_j (\xf_i)] + \clr [\varepsilon_i]. \label{modelgleichung_clr}
\end{align}
applying the adapted algorithm.
Let $\bfe_Y = (b_{Y, 1}, \ldots , b_{Y, K_Y})\in \B^{K_Y}$ be the vector of basis functions over $\Tcal$ in the original estimation problem.
On clr transformed level, we choose $\tilde{\bfe}_Y = \clr[\bfe_Y] = (\clr[b_{Y, 1}], \ldots , \clr[b_{Y, K_Y}]) \in \Ln^{K_Y}$ as the corresponding vector of basis functions over $\Tcal$.
Then, the negative gradient of the empirical risk in $\Ln$ equals the clr transformed negative gradient of the empirical risk in $\B$:
Using the linearity of the clr transformation, we get
\begin{align*}
\clr[\hh^{[m]}(\xf_i)]
&= \clr\left[\bigoplus_{j=1}^J \left( \bfe_j(\xf_i)^\top \ootimes \bfe_Y^\top \right) \thetaf_j^{[m]}\right] \\
&= \sum_{j=1}^J \left( \bfe_j(\xf_i)^\top \otimes \clr[\bfe_Y]^\top \right) \thetaf_j^{[m]}
= \widetilde{\hh^{[m]}(\xf_i)},
\end{align*}
and thus
$
\clr[U_i]
= \clr \left[ 2 \odot \left( y_i \ominus \smash{\hh^{[m]}}(\xf_i)\right) \right]
= 2 \, \left(\clr[y_i] - \clr[\hh^{[m]}(\xf_i)] \right)
= \tilde{U}_i$.
Furthermore, for all $i = 1, \ldots , N, \, j = 1, \ldots , J$ and $\gammaf \in \Rbb^{K_j \, K_Y}$, we have
\begin{align*}
\left\Vert U_i \ominus\left( \bfe_j(\xf_i)^\top \ootimes \bfe_Y ^\top \right) \gammaf \right\Vertb^2
&= \left\Vert \clr \left[U_i \ominus\left( \bfe_j(\xf_i)^\top \ootimes \bfe_Y ^\top \right) \gammaf \right] \right\Vert_{\Ltwo}^2 \\
&= \left\Vert \tilde{U}_i - \left( \bfe_j(\xf_i)^\top \otimes \tilde{\bfe}_Y ^\top \right) \gammaf \right\Vert_{\Ltwo}^2.
\end{align*}
Here, we used the isometry of the clr transformation in the first equation and its linearity in the second one.
This implies that the pairs of equations~(7) 
and~(\ref{fit_baselearner_clr}) and~(8) 
and~(\ref{select_baselearner_clr}) yield the same result, i.e., $\gammafh_j = \zetafh_j$ for all $j = 1, \ldots, J$ and $j^\ast = j^\bigstar$, in each iteration of the two algorithms.
This means that the update in the third step of both algorithms is identical as well.
Thus, the resulting estimator of model~(\ref{modelgleichung_clr}) is the clr transformed estimator of~(2): 
\begin{align*}
\widehat{\clr[y_i]}
= \sum_{j=1}^J \widetilde{\hh_j^{[m_{\text{stop}}]} (\xf_i)}
= \sum_{j=1}^J \clr \left[\hh_j^{[m_{\text{stop}}]} (\xf_i)\right]
= \clr \left[\bigoplus_{j=1}^J \hh_j^{[m_{\text{stop}}]} (\xf_i)\right]
= \clr[\yh_i].
\end{align*}

This proves that the algorithms provide equivalent results: We obtain the same estimates by applying the adapted algorithm to the clr transformed model~(\ref{modelgleichung_clr}) in $\Ln$ and retransforming the estimates with $\clr^{-1}$ as by estimating model~(2) 
directly in $\B$.
An advantage of transforming the model is that we can then use and extend implementations for function-on-scalar regression in practice, in particular the \texttt{R} add-on package \texttt{FDboost} \citep{FDboost}, which is based on the package \texttt{mboost} \citep{mboost}.
\ifnum\value{jasa}=1
{Our vignette, which we provide in the supplementary material, illustrates how to use \texttt{FDboost} for the density-on-scalar case.}
\else
{Our enhanced version of the package can be found in the github repository \href{https://github.com/boost-R/FDboost}{\emph{FDboost}}.
The vignette ``density-on-scalar\_birth'' illustrates how to use it for the density-on-scalar case.}
\fi

\section{Further notes and ideas regarding interpretation}\label{appendix_interpretation}
In this section, we first briefly explain the connection of our interpretation presented in Section 3.2 to logistic models (Section~\ref{appendix_sec_logit_model}), before discussing further possibilities of interpreting effects in Sections \ref{appendix_sec_odds_geom_mean} to \ref{appendix_sec_interpretation_constant}.
More precisely, Sections~\ref{appendix_sec_odds_geom_mean} and~\ref{appendix_sec_odds_mixed} extend the ideas of odds (ratios). 
Section~\ref{appendix_sec_interpretation_constant} presents a completely different approach, decomposing the domain $\Tcal$ into two areas where the probability mass of another density increases/decreases under perturbation with this effect.

\subsection{Log odds ratios as family of logistic models}\label{appendix_sec_logit_model}
Due to the connection of our interpretation presented in Section 3.2 to odds ratios (compare Section 3.1), an estimated model can in fact be interpreted along the lines of a scalar-on-scalar logit model for comparing two parts of the female share distribution. Assume for simplicity and illustration, we have obtained a model predictor of the form $\hat{h}(x)(s) = \hat{\beta}_0(s) \oplus \hat{g}(x)(s)$ for a density of $s \in [0,1]$ and some covariate $x$ with an estimate $\hat{\beta}_0$ of the intercept and $\hat{g}$ of a covariate effect. 
Then, for two values $s, t \in [0,1]$, 
$$
\operatorname{logit}(\check{\pi}) = 
\underbrace{
\log\frac{\hat{h}(x)(s)}{\hat{h}(x)(t)}
}_{=:\check{h}(x)} =  
        \underbrace{
            \log \frac{\hat{\beta}_0(s)}{\hat{\beta}_0(t)}}_{=: \check{\beta}_0} + 
        \underbrace{
            \log\frac{\hat{g}(x)(s)}{\hat{g}(x)(t)}
        }_{=: \check{g}(x)} 
$$
yields the predictor of a scalar additive logit model for the (infinitesimal) probability $\check{\pi}$ for $s$ out of $s$ and $t$ (even though estimation is different of course). 
Here, we can also express $\check{\beta}_0 = \operatorname{clr} \hat{\beta}_0(s) - \operatorname{clr} \hat{\beta}_0(t)$ in terms of clr-transforms, and analogously for $\check{g}(x)$.
Hence, the estimated Bayes Hilbert space models can be interpreted as a family of scalar logit models, simultaneously fitted across all values of $s, t$ (in a mixed Bayes Hilbert space including values corresponding to the discrete component with point masses) and thus allowing  borrowing of strength across the domain and simultaneous interpretations for all such pairs.
While these are interesting theoretical considerations, evaluating a density at concrete single values $s, t \in \Tcal$, is however not reasonable from a probabilistic perspective, unless the values correspond to point masses (discrete component).

\subsection{Odds compared to geometric mean}\label{appendix_sec_odds_geom_mean}
The odds ratio defined in Section 3.1, 
can be written as the exponential of the difference of the clr transformed densities $f_1, f_2 \in \B$ evaluated at $s$ and $t$:
\begin{align}
\OR(s,t)
= \frac{f_1(s) \, / \, f_1(t)}{f_2(s) \, / \, f_2(t)}
=\exp\left( \clr [ f_1 ] (s) - \clr [ f_1 ](t) - \left( \clr [ f_2 ](s) - \clr [ f_2 ](t)\right)\right)
. \label{density_odds_ratio}
\end{align}
Similarly, the exponential of a clr transformed density $f \in \B$ at $s$ 
can also be interpreted directly 
via the relation 
\begin{align*}
\exp ( \clr [ f ] (s) )
= \frac{f (s)}{\exp \Scal_\mu (f )}, 
\end{align*}
where $\exp \Scal_\mu (\hh_j )$ is the geometric mean of $\hh_j$, see Footnote 2 (Proposition 3.2). 
Accordingly, the difference of two clr transformed densities $f_1, f_2 \in \B$ evaluated at $s$ corresponds to the log odds ratio of $f_1$ and $f_2$ compared to the geometric mean.
Again, this allows for a ceteris paribus interpretation.

\subsection{Odds for mixed case}\label{appendix_sec_odds_mixed}
For a mixed Bayes Hilbert space $\B$ as defined in Section 2.1, 
we get a special interpretation for the odds (as defined in Section 3.1 
or~\eqref{density_odds_ratio}) of the discrete component $f_{\mathrm{d}} \in \Bd$ obtained from a density $f \in \B$ via (9): 
%
For the odds of a discrete value $t\in \Dcal$ compared to the value $t_{D+1}$ representing the continuous component, we get
\begin{align*}
\frac{f_{\mathrm{d}}(t)}{f_{\mathrm{d}}(t_{D+1})}
&\overset{(9)}{=} \frac{f (t)}{\Scal_\lambda(f)} . 
\end{align*}
Thus, for the discrete component $f_{\mathrm{d}}$ the odds of $t \in \Dcal$ compared to $t_{D+1}$ correspond to the odds of the relative frequency of $t \in \Dcal$ compared to the geometric mean of the continuous component.
It is given by the exponential of the differences of the $\clr_{\delta^\bullet}$ transformed density~$\fd$ 
evaluated at $t$ and $t_{D+1}$.

\subsection{Decomposition of $\Tcal$ depending on constant}\label{appendix_sec_interpretation_constant}
The following statement applies to all Bayes Hilbert spaces $B^2(\Tcal, \Acal, \mu) =\B$, in particular to discrete, continuous and mixed ones. 
It implies that any positive constant $\alpha$ decomposes a density $f_1 \in \B$ 
into an area $I = \{f_1 \geq \alpha\}$, 
where the probability mass of an arbitrary density $f_2 \in \B$ 
increases under perturbation with $f_1$ 
and an area $I^c = \{f_1 < \alpha\}$ 
where the probability mass decreases.
Note that this statement requires $I$ to be the maximal subset with $f_1 \geq \alpha$. 
If we don't presume $f_1 < \alpha$ 
on $I^c$, this is not true in general. \\
Since we are interested in probability masses, we consider probability density functions in the following.

\begin{thm}\label{thm_interpretation_threshold}
Let $f_1, f_2 \in \B$ with $\int_{\Tcal} f_1 \, \dmu = 1 = \int_{\Tcal} f_2 \, \dmu$ 
and $f_1 \geq \alpha$ on $I \in \Acal$ and $f_1 < \alpha$ on $I^c = \Tcal \setminus I$ for $\alpha \in \Rbb^+$. 
Then,
\begin{align}
\int_I f_1 \oplus f_2 \, \dmu \geq \int_I f_2 \, \dmu 
\label{eq_threshold_larger}
\intertext{and}
\int_{I^c} f_1 \oplus f_2 \, \dmu \leq \int_{I^c} f_2 \, \dmu. 
\label{eq_threshold_smaller}
\end{align}
\end{thm}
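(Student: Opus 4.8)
The plan is to reduce both inequalities to an elementary estimate involving the probability density representative of $f \oplus g$. Recall that the perturbation $f \oplus g$ is the equivalence class of the pointwise product $fg$; in the statement, $\int_I f \oplus g \, \dmu$ is to be read with the probability density representative $\widetilde{fg} := fg \, / \, Z$, where $Z := \int_\Tcal fg \, \dmu$, so that $\int_\Tcal \widetilde{fg} \, \dmu = 1$. First I would record that $Z \in (0, \infty)$: positivity is clear since $fg > 0$ holds $\mu$-a.e., and finiteness is exactly the (implicit) requirement that $f \oplus g$ admit a probability representative, so that the left-hand sides of \eqref{eq_threshold_larger} and \eqref{eq_threshold_smaller} are meaningful.

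Next I would set $p := \int_I f \, \dmu$; since $\int_\Tcal f \, \dmu = 1$, this gives $\int_{I^c} f \, \dmu = 1 - p$ with $p \in [0, 1]$, and the degenerate cases $p \in \{0, 1\}$ (equivalently $\mu(I) = 0$ or $\mu(I^c) = 0$, using positivity of $f$) are immediate since then both sides of each inequality coincide. Assuming $0 < p < 1$, the heart of the proof is the pair of pointwise bounds on $g$: on $I$ we have $g \geq \alpha$ $\mu$-a.e., hence $\int_I fg \, \dmu \geq \alpha \int_I f \, \dmu = \alpha p$, while on $I^c$ we have $g < \alpha$ $\mu$-a.e., hence $\int_{I^c} fg \, \dmu \leq \alpha \int_{I^c} f \, \dmu = \alpha (1 - p)$. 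Both halves of the hypothesis on $g$ are used here, which is precisely why $I$ must be the maximal set on which $g \geq \alpha$.

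Then I would observe that $\int_I \widetilde{fg} \, \dmu \geq p$ is, after multiplying through by $Z = \int_I fg \, \dmu + \int_{I^c} fg \, \dmu > 0$, equivalent to $(1 - p) \int_I fg \, \dmu \geq p \int_{I^c} fg \, \dmu$. By the two bounds above, the left-hand side is at least $(1 - p)\alpha p$ and the right-hand side is at most $p\,\alpha(1 - p)$, and these coincide, so the inequality holds; this is \eqref{eq_threshold_larger}. Finally \eqref{eq_threshold_smaller} follows at once, since $\widetilde{fg}$ is a probability density and hence $\int_{I^c} \widetilde{fg} \, \dmu = 1 - \int_I \widetilde{fg} \, \dmu \leq 1 - p = \int_{I^c} f \, \dmu$; alternatively one reruns the symmetric argument with $I$ and $I^c$ interchanged. (Equivalently, the whole statement is the assertion $\mathrm{Cov}_{f\,\dmu}(g, \mathbbm{1}_I) \geq 0$, the nonnegative correlation of $g$ with the monotone functional $\mathbbm{1}_{\{g \geq \alpha\}}$.)

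The step I expect to be the main obstacle is not the inequality itself — which is the two-line estimate above — but the bookkeeping around normalization: making precise that $\int_I f \oplus g \, \dmu$ refers to the probability density representative and that its normalizing constant $Z$ is finite and positive in the setting at hand. Once that convention is pinned down, the rest is routine; note in particular that the hypothesis $\int_\Tcal g \, \dmu = 1$ is not actually needed for the estimate, only $\int_\Tcal f \, \dmu = 1$, positivity of $f$ and $g$, and the two-sided threshold condition on $g$.
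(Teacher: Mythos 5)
Your proof is correct and follows essentially the same route as the paper's: both work with the normalized representative $fg / \int_\Tcal fg\,\dmu$, use the two threshold bounds $\int_I fg\,\dmu \geq \alpha\int_I f\,\dmu$ and $\int_{I^c} fg\,\dmu \leq \alpha\int_{I^c} f\,\dmu$, treat the degenerate cases $\int_I f\,\dmu \in \{0,1\}$ separately, and deduce \eqref{eq_threshold_smaller} from \eqref{eq_threshold_larger} by complementarity. The only cosmetic difference is that you argue directly while the paper phrases the same estimate as a proof by contradiction; your explicit remark on the finiteness of the normalizing constant and the dispensability of $\int_\Tcal g\,\dmu = 1$ is accurate.
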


\begin{proof}
We have
\begin{align*}
\int_I f_1 \oplus f_2 \, \dmu 
= \frac{\int_I f_1 \cdot f_2 \, \dmu}{\int_{\Tcal} f_1 \cdot f_2 \, \dmu}
= \frac{\int_I f_1 \cdot f_2 \, \dmu}{\int_{I} f_1 \cdot f_2 \, \dmu + \int_{I^c} f_1 \cdot f_2 \, \dmu}
\intertext{and analogously}
\int_{I^c} f_1 \oplus f_2 \, \dmu 
= \frac{\int_{I^c} f_1 \cdot f_2 \, \dmu}{\int_{I} f_1 \cdot f_2 \, \dmu + \int_{I^c} f_1 \cdot f_2 \, \dmu}
.
\end{align*}
Consider
\begin{align*}
a := \int_I f_2 \, \dmu, &&
b := \int_I f_1 \cdot f_2 \, \dmu, &&
c := \int_{I^c} f_1 \cdot f_2 \, \dmu 
.
\end{align*}
Since $f_1 \geq \alpha$ on $I$ and $f_1 < \alpha$ on $I^c$, we have 
\begin{itemize}
\item[$(I)$]
$b \geq \alpha \cdot a$ 
\item[$(II)$]
$c < \alpha \cdot (1-a) = \alpha - \alpha \cdot a$ 
\end{itemize} 
Note that $a \in [0, 1]$ and $b, c \geq 0$ with $b + c > 0$.
If $a = 1$, we have $I = \Tcal$ and $I^c = \emptyset$.
Then, equality is reached in both \eqref{eq_threshold_larger} and \eqref{eq_threshold_smaller}, since both sides are $1$ and $0$, respectively.
If $a = 0$, we have $I = \emptyset$ and $I^c = \Tcal$ and again \eqref{eq_threshold_larger} and \eqref{eq_threshold_smaller} hold with equality reached.
Now, consider $a \in (0, 1)$.
Assume \eqref{eq_threshold_larger} is not true, i.e., $\frac{b}{b+c} < a$.
Then, we have
\begin{align*}
\frac{b}{b+c} < a &&
\Lera && b < a \cdot (b + c) &&
\overset{(II)}{\Lra} && b < a \cdot (b + \alpha - \alpha \cdot a) \\
&& \Lera && b \cdot (1 - a) < \alpha \cdot a \cdot (1 - a) &&
\overset{a < 1}{\Llra} && b < \alpha \cdot a,
\end{align*}
which is a contradiction to $(I)$. Thus,
$\frac{b}{b+c} \geq a$, which shows \eqref{eq_threshold_larger}. 
%
This also implies $\frac{c}{b+c} = 1 - \frac{b}{b+c} \leq 1- a$, which shows~\eqref{eq_threshold_smaller}.\footnote{
Note that using a similar approach as above, starting with the assumption $\frac{c}{b+c} \geq 1 - a$ and using (I) to obtain a contradiction to $(II)$, one can even show the strict inequality $\frac{c}{b+c} < 1- a$ for $a \in (0, 1)$.}
\end{proof}

\pagebreak

\section{Application: Women's income share}\label{appendix_soep}

We use data from the German Socio-Economic Panel (SOEP) from 1984 to 2016 (version 33, doi:10.5684/soep.v33, see \citealp{soep2019}), with data for \emph{East} Germany being available only from 1991 onward.

\subsection{
Overview of regions}\label{appendix_soep_regions}

\ifnum\value{aoas}=1
{
\begin{table}[H]
\caption{German federal states with their ISO 3166-2 codes and the variables \emph{region} and \emph{West\_East} assigned in our application.}
\begin{center}
\begin{tabular}{l|l|l|l}
\hline
\textbf{Federal state} & \textbf{ISO 3166-2 code} & \textbf{\emph{region}} & \textbf{\emph{West\_East}} \\
\hline
Schleswig-Holstein & SH & \multirow{4}{*}{\emph{northwest}} & \multirow{10}{*}{\emph{West} (Germany)} \\
Bremen & HB & & \\
Hamburg & HH & & \\
Lower Saxony & NI & & \\
\cline{1-3}
North Rhine-Westphalia & NW & \emph{west} & \\
\cline{1-3}
Hesse & HE & \multirow{3}{*}{\emph{southwest}} & \\
Rhineland-Palatinate & RP & & \\
Saarland & SL & & \\
\cline{1-3}
Bavaria & BY & \multirow{2}{*}{\emph{south}} & \\
Baden-W\"{u}rttemberg & BW & & \\
\hline
Saxony-Anhalt & ST & \multirow{3}{*}{\emph{east}} & \multirow{6}{*}{\emph{East} (Germany)} \\
Thuringia & TH & & \\
Saxony & SN & & \\
\cline{1-3}
Berlin & BE & \multirow{3}{*}{\emph{northeast}} & \\
Brandenburg & BB & & \\
Mecklenburg-West Pomerania & MV & & \\ \hline
\end{tabular}
\end{center}
\end{table}
} \else
{
\begin{table}[H]
\begin{center}
\begin{tabular}{l|l|l|l}
\textbf{Federal state} & \textbf{ISO 3166-2 code} & \textbf{\emph{region}} & \textbf{\emph{West\_East}} \\
\hline
Schleswig-Holstein & SH & \multirow{4}{*}{\emph{northwest}} & \multirow{10}{*}{\emph{West} (Germany)} \\
Bremen & HB & & \\
Hamburg & HH & & \\
Lower Saxony & NI & & \\
\cline{1-3}
North Rhine-Westphalia & NW & \emph{west} & \\
\cline{1-3}
Hesse & HE & \multirow{3}{*}{\emph{southwest}} & \\
Rhineland-Palatinate & RP & & \\
Saarland & SL & & \\
\cline{1-3}
Bavaria & BY & \multirow{2}{*}{\emph{south}} & \\
Baden-W\"{u}rttemberg & BW & & \\
\hline
Saxony-Anhalt & ST & \multirow{3}{*}{\emph{east}} & \multirow{6}{*}{\emph{East} (Germany)} \\
Thuringia & TH & & \\
Saxony & SN & & \\
\cline{1-3}
Berlin & BE & \multirow{3}{*}{\emph{northeast}} & \\
Brandenburg & BB & & \\
Mecklenburg-West Pomerania & MV & & \\ \hline
\end{tabular}
\caption{German federal states with their ISO 3166-2 codes and the variables \emph{region} and \emph{West\_East} assigned in our application.}
\vspace{-0.5cm}
\end{center}
\end{table}
} \fi

\subsection{Barplots of share frequencies}\label{appendix_soep_barplots}
\begin{figure}[H]
\includegraphics[width=0.32\textwidth]{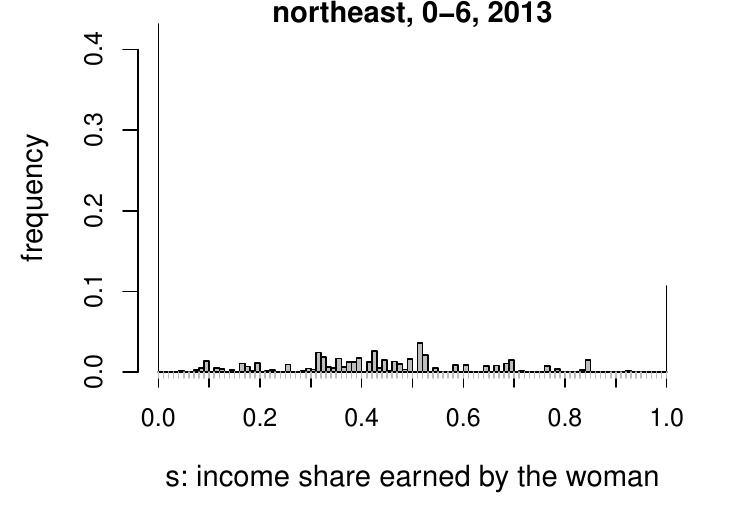}
\includegraphics[width=0.32\textwidth]{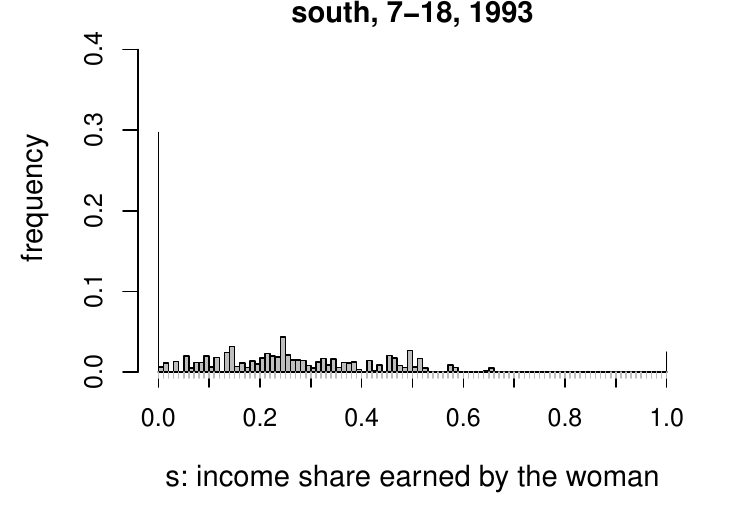}
\includegraphics[width=0.32\textwidth]{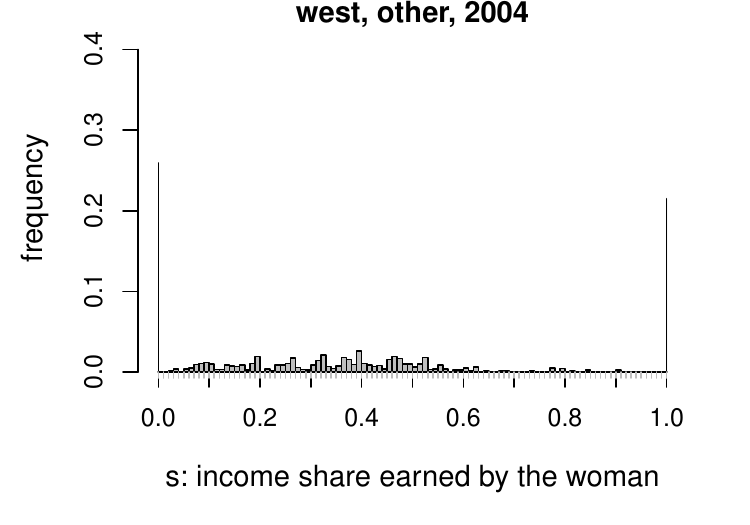}
\caption{Three barplots of share frequencies for different combinations of \emph{region}, \emph{c\_age}, and \emph{year}. The outmost bars have width zero, the ones in between width~$0.01$. \label{histograms}}
\end{figure}

\subsection{Estimation of the response densities}\label{appendix_kernel_density_estmation}

In practice, density functions often have to be estimated from individual observations.
We focus on densities with bounded support $\Tcal$, which is predetermined by the application framework.
Without loss of generality, we assume  $\Tcal = [0, 1]$ as support of the unknown density~$f$, which has to be estimated.

A common approach to estimate densities is kernel density estimation.
The usual kernel density estimator for weighted observations is 
\begin{align}
\hat{f}_b(t) := \sum_{l=1}^N w_l \, K_b (t - t_l ) , \label{kernel_standard}
\end{align}
where $t_1 \leq \ldots \leq t_N$ is a random sample of a random variable $T$ with (unknown) density~$f$, $w_1, \ldots, w_N$ with $\sum_{l = 1}^N w_l = 1$ are corresponding nonnegative weighting coefficients (sampling weights in our application to ensure representativeness of the survey) and $K_{b}$ is a kernel function depending on a bandwidth $b \in \Rbb$.
Usually, kernel functions fulfill~$K_b(t) = K \left(\frac{t}{b}\right)$, where $K$ is chosen as a density function that is symmetric around zero.
However, this is not suitable, when the bounded support $\Tcal$ of the estimator is predetermined:
If the support of $K$ is unbounded, which is the case for, e.g., the Gaussian kernel, the support of the estimator is unbounded as well.
If the support of $K$ is bounded, i.e., $[-a, a]$ for an $a > 0$, the support of the estimator is $\left[\frac{t_1 - a}{b}, \frac{t_N + a}{b}\right]$ (assuming $t_l - t_{l-1} < 2a$ for all $l = 1, \ldots , N$).
Thus, it is not fixed, but depends on the sample $t_1, \ldots , t_N$ and doesn't necessarily yield the predefined $\Tcal = [0, 1]$.
%

To accommodate this, there are several possibilities.
\citet{petersen2016} propose a new kernel density estimator based on symmetric kernels. Outside of the predetermined interval, the value is set to $0$.
Normalization ensures that the estimator integrates to $1$ and a so-called weight function, which depends on $t$, the bandwidth, and the kernel and is unequal to 1 only in $[0, b)$ and $(1-b, 1]$, is multiplied with the kernel to remove boundary bias.
Another possibility is to use the usual kernel density estimator, but with asymmetric kernels, which are defined on the predetermined interval.
Two appropriate choices are beta-kernels introduced by \citet{chen1999} and Gaussian copula kernels presented by \citet{jones2007}.
The former are also recommended by \citet{petersen2016} as alternative to their own estimation approach.
Both kernels are illustrated in Figure~\ref{kernel-functions} for bandwidths $0.02$ and $0.1$.
Besides obviously different scaling of the bandwidth parameter, the two kernels show very different behavior near the boundaries of the interval $[0,1]$.
\vspace{-1cm}
\begin{figure}[H]
\begin{center}
\includegraphics[width=0.8\textwidth]{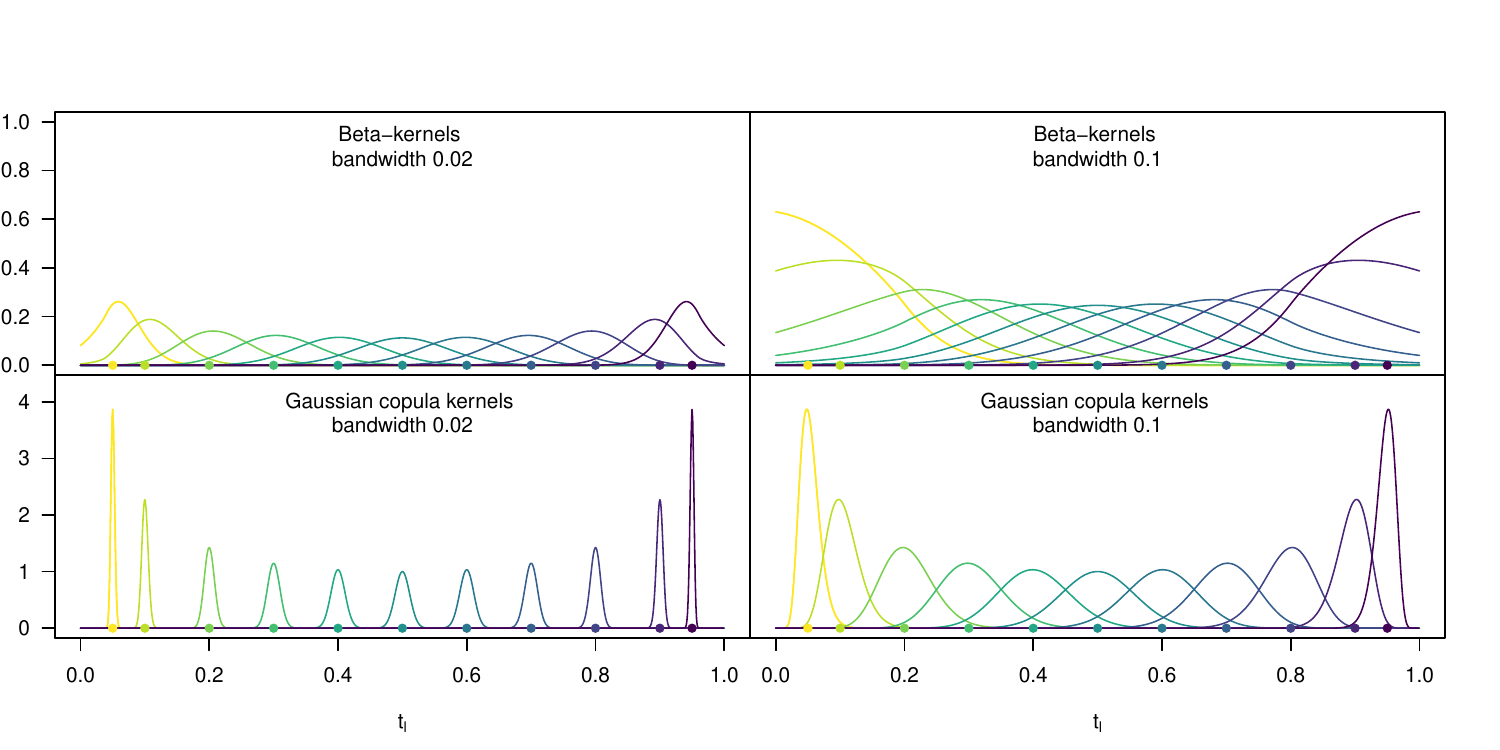}
\end{center}
\vspace{-0.7cm}
\caption{Beta-kernels [top] and Gaussian copula kernels [bottom] for the bandwidths $0.02$ [left] and $0.1$ [right] for different values of $t_l$.\label{kernel-functions}}
\end{figure}

In our application we use beta-kernels due to better results. 
\citet{chen1999} actually presents two versions of beta-kernels, of whom we use the second one, which is also the one depicted in Figure~\ref{kernel-functions}.
It has reduced bias compared to the first and is defined as 
\begin{align}
\hat{f}^\ast_b(t) := \sum_{l=1}^N w_l \, K^\ast_{t, b} (t_l) \label{kernel_chen_integral_not_one}
\end{align}
for $t \in [0, 1]$ with kernel functions
\begin{align*}
K^\ast_{t, b} (x) &:= \begin{cases}
K_{\rho(t, \, b)\, ,\, \nicefrac{(1-t)}{b}}(x), & t \in [0, 2 b) \\
K_{\nicefrac{t}{b}\, ,\, \nicefrac{(1-t)}{b}}(x), & t \in [2 b, 1- 2 b] \\
K_{\nicefrac{t}{b}\, ,\, \rho(1-t, \, b)}(x), & t \in (1- 2 b, 1],
\end{cases}
\end{align*}
where $\rho(t, \, b) := 2 b^2 + 2.5 - \sqrt{4 b^4 + 6 b^2 + 2.25 - t^2 - \nicefrac{t}{b}}$
and $K_{p, \, q}$ denotes the density function of a $\mathrm{Beta} (p, q)$-distribution.
We slightly modified the original definition of the estimator $\hat{f}^\ast_b$ by including weighting coefficients $w_l$
to match the setting in our application.
\citet{chen1999} uses equal weights, i.e., $w_l = \frac1{N}$ for all $l = 1, \ldots , N$.
Note that the resulting estimator usually does not integrate to one as the functions $K^\ast_{t, b} (x)$ are only probability density functions in $x$ but not in~$t$. Therefore, a normalization is necessary to get the estimated density\footnote{As $\hat{f}^\ast_b$ and $\hat{f}_b$ are proportional, they are $\propto$-
equivalent $\lambda$-densities with $\lambda$ denoting the Lebesgue measure. But in accordance to usual probability density functions, we use the density as representative that integrates to one.}:
\begin{align}
\hat{f}_b(t) := \frac{\hat{f}^\ast_b(t)}{\int_0^1 \hat{f}^\ast_b(t) \, \dt} \,. \label{kernel_chen}
\end{align}

The optimal bandwidth $b$ can be chosen with unbiased cross-validation (e.g., \citealp{scott2015}).
This is also the default to choose the bandwidth for asymmetric kernels in the \texttt{R} package \texttt{kdensity} \citep{kdensity}, where both beta-kernels and Gaussian copula kernels are implemented, amongst others.

In our application, for each unique combination of covariate values we compute a density $f_{(0, 1)}: (0, 1) \ra \Rbb^+$ using beta-kernels based on dual-earner households.
To determine the bandwidth, we calculate the optimal bandwidth for each of the $552$ densities with unbiased cross-validation and choose the minimal resulting bandwidth as final bandwidth for all densities, yielding a value of $0.02$.
Selecting a smaller bandwidth prevents us from over-smoothing, which may disguise possible effects.
Furthermore, a small bandwidth allows for steep gradients, which indicate a possible discontinuity\footnote{\citet{bertrand2015} consider the share of the wife's income in a couple's total income for married couples in the U.S. and infer that there is a sharp discontinuous drop to the right of~$0.5$. This is in general not confirmed by our data, but we chose a small bandwidth to ensure flexibility of density estimation to capture such a decline.}.
Using the estimated densities $f_{(0, 1)}$ on $(0,1)$, we obtain the response densities on $[0,1]$ as
\begin{align}
&f: [0,1] \ra \Rbb^+ & & s \mapsto \begin{cases}
p_0 , & s=0\\
p_{(0, 1)} \, f_{(0, 1)}(s) , & s \in (0, 1)\\
p_1, & s= 1,
\end{cases}  \label{estimated_densities}
\end{align}
where $p_0$ and $p_1$ are the relative frequencies for a share of $0$ and $1$, respectively, and~$p_{(0, 1)} = 1 - p_0- p_1$ is the relative frequency for a share in $(0, 1)$.


\subsection{Sensitivity Check for varying base-learner degrees of freedom}\label{appendix_soep_sensitivity_check}
In this section, we give some insights leading to the decision to use a model which is theoretically unfair regarding base-learner selection.
First, we perform a sensitivity check comparing it with a model that is fair in the sense that the 
\emph{West\_East} effect base-learner does have the same number of degrees of freedom as other base-learners in the model.
Afterwards, we compare the resulting predictions with the response densities, revealing that the unfair model shows a better fit to the data than the fair one.
Note that both models are estimated with the \texttt{R} package \texttt{FDboost}, which uses effect coding. 
To improve interpretability, we converted those to reference coding for the application.
However, base-learner selection is performed by \texttt{FDboost} on effect coded level, thus we consider effect coding in the following.
For simplicity, we still use the denotation $\betah_{\ldots}, \hat{g}_{\ldots} (year)$ even though these effects are not identical to the reference coded effects denoted like this in the remaining paper.

To ensure a fair selection process within the gradient boosting algorithm, each base-learner should ideally have the same number of degrees of freedom.
In our model~(10), 
this is not possible for the covariate effects, as the flexible nonlinear effects need a minimum of 2 degrees of freedom, while the intercept $\beta_0$ and $\beta_{\text{\emph{West\_East}}}$ only allow for a maximal value of 1. 
Regarding base-learner selection, $\beta_{\text{\emph{West\_East}}}$ thus is theoretically at a disadvantage compared to the other main effects.
To study the severity of this disadvantage, we compare our model with another model, which is fair regarding base-learner selection. 
This is reached by dividing the degrees of freedom in direction of the share in half for all effects but $\beta_0$ and $\beta_{\text{\emph{West\_East}}}$, in both, the continuous and discrete model.
Apart from that, the models are specified identically to the ones presented in the main manuscript.
Again, we determine the stopping iterations based on $25$ bootstrap samples, respectively, resulting in $490$ for the continuous and $735$ for the discrete model.
For simplicity, we refer to the resulting models as \emph{fair} models in contrast to the \emph{unfair} models of choice in the following.
In our sensitivity check, we first compare the selection frequencies, the crucial parameter for the fairness of a model.
For further insights, we also consider the in-sample risk reduction and the estimated effects for $\beta_{\text{\emph{West\_East}}}$ in the fair vs. the unfair models. 

\begin{figure}[H]
\begin{center}
\includegraphics[width=0.49\textwidth]{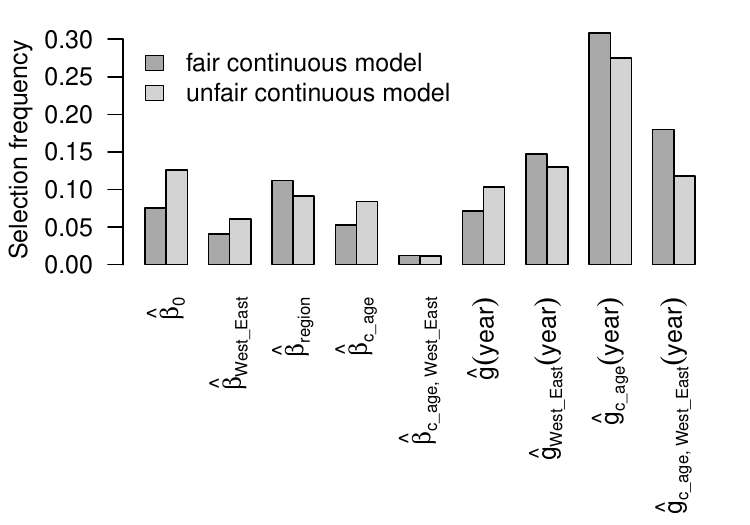}
\includegraphics[width=0.49\textwidth]{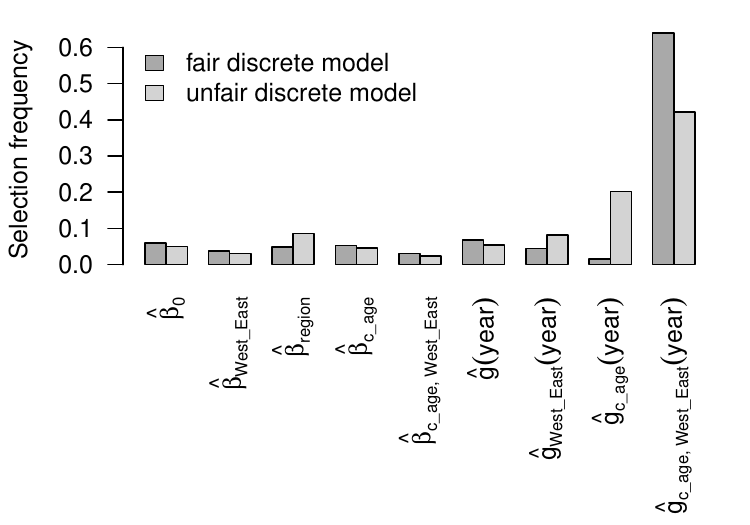}
\end{center}
\caption{Selection frequencies of the different (effect coded) effects for fair vs. unfair models for continuous [left] and discrete [right].\label{selfreq}}
\end{figure}
Figure~\ref{selfreq} shows the selection frequencies of each effect in the continuous and discrete models comparing the fair with the unfair models, respectively.
The left side shows the continuous models.
Here, $\beta_{\text{\emph{West\_East}}}$ gets selected even more often in the unfair model -- where it is theoretically disadvantaged -- than in the fair model.
Considering the discrete models (right), $\beta_{\text{\emph{West\_East}}}$ is selected slightly less often than in the fair model, but the difference does not seem severe.



\vspace{-0.3cm}
\begin{figure}[H]
\begin{center}
\includegraphics[width=0.49\textwidth]{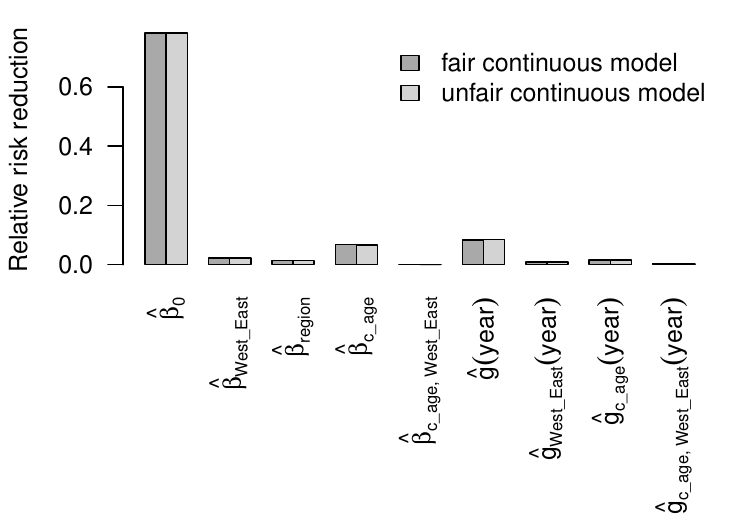}
\includegraphics[width=0.49\textwidth]{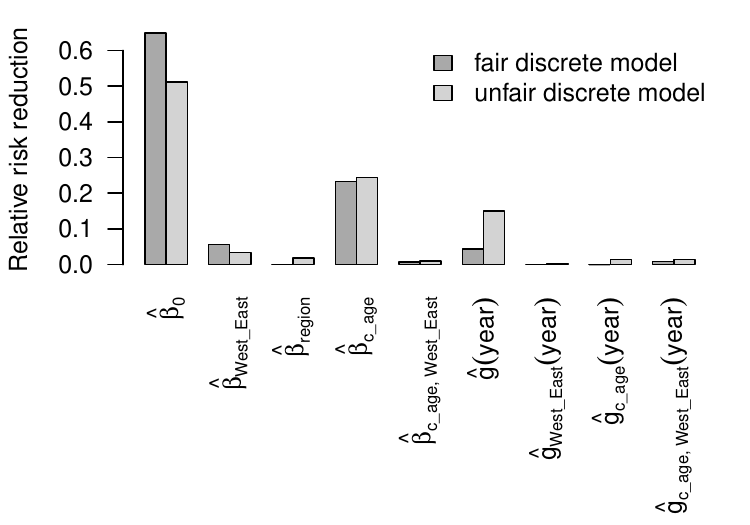}
\end{center}
\vspace{-0.8cm}
\caption{Relative in-sample risk reduction of the different (effect coded) effects for fair vs. unfair models for continuous [left] and discrete [right].\label{relrisk}}
\end{figure}

The relative in-sample risk reduction of the effects in the different models is illustrated in Figure~\ref{relrisk}.
For the continuous models (left), the risk reductions per effect are almost identical in both models, which indicates that there is no disadvantage for $\beta_{\text{\emph{West\_East}}}$ in the unfair model.
For the discrete models (right), $\beta_{\text{\emph{West\_East}}}$ again deems more important in the fair model than in the unfair one.


\begin{figure}[H]
\begin{center}
\includegraphics[width=0.49\textwidth]{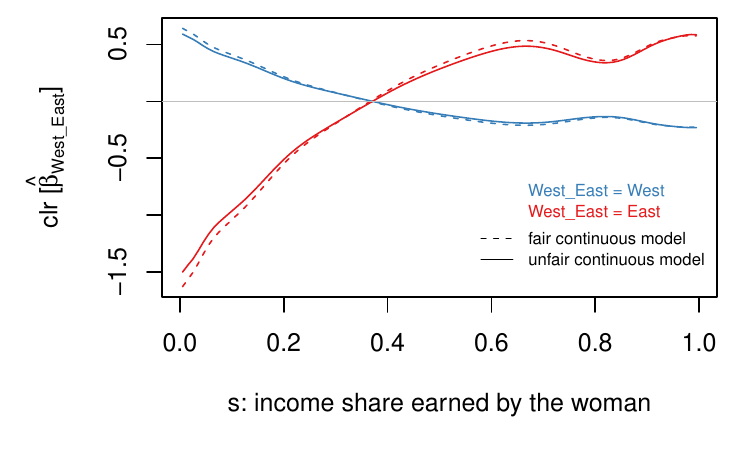}
\includegraphics[width=0.49\textwidth]{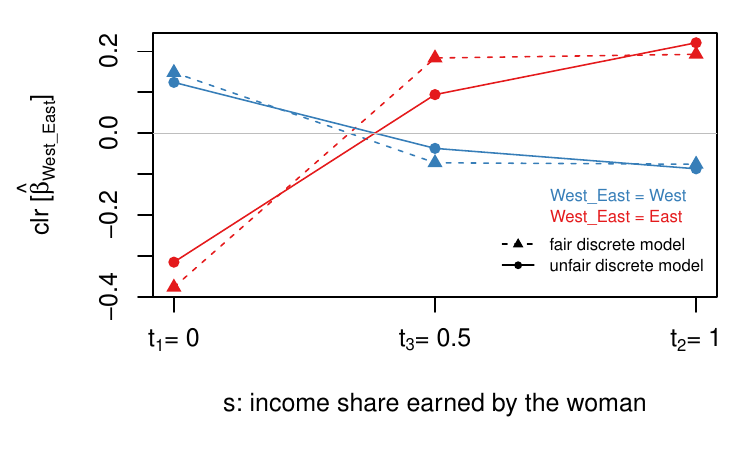}
\end{center}
\vspace{-0.8cm}
\caption{Clr transformed estimated (effect coded) effects of \emph{West\_East} for fair vs. unfair models for continuous [left] and discrete [right].\label{effects}}
\end{figure}

Finally, we compare the clr transformed estimated effects of $\beta_{\text{\emph{West\_East}}}$ in the different models in Figure~\ref{effects}. 
While this does not allow conclusions about the fairness of the models, it might be disconcerting, if the estimated effects were completely different.
However, this is not the case.
We obtain very similar effects in the continuous models (left).
Regarding the discrete models (right), the values differ more (relatively), but the trend is the same.

In summary, we observe almost no differences in the continuous models between a fair and unfair model specification.
In contrast, there are slight differences in the discrete models.
However, they are not too severe, so that $\beta_{\text{\emph{West\_East}}}$ does not seem to be at a large disadvantage. 


We decided to prefer the unfair model to the fair one because of the fit to the data.
Figure~\ref{figure_predictions_fair} shows the predicted densities resulting from the fair model, Figure~\ref{figure_densities} the response densities, and Figure~\ref{figure_predictions} the predicted densities resulting from the unfair model.
All three figures are structured as follows.
In the upper part, they illustrate the respective densities for all six \emph{regions} and all three \emph{c\_age} groups. 
The densities are shown in one panel for all \emph{years}, respectively, with a color gradient and different line types indicating the \emph{year}.
The density values at the boundaries $0$ and~$1$ 
are represented as dashes, shifted slightly outwards for better visibility.
The lower part of the figures show their development over time more clearly.
For the response densities (Figure~\ref{figure_densities}), they are represented as dashes again (green and red, respectively), while the relative frequency 
of dual-earner households is illustrated via blue circles.
For the predicted densities (Figures~\ref{figure_predictions_fair} and \ref{figure_predictions}), the smooth trend over time is shown by different types of lines, but using the same colors as for the response densities.

First, we compare the predictions from the fair model, i.e., Figure~\ref{figure_predictions_fair}, with the response densities, i.e., Figure~\ref{figure_densities}.
In general, the shapes of the predicted densities for $s \in (0, 1)$ match the ones of the response densities for the different \emph{regions} and values of \emph{c\_age} (upper parts of the figures):
The densities corresponding to \emph{regions} in \emph{West} Germany (\emph{northeast}, \emph{west}, \emph{southwest}, \emph{south}) show more probability mass at smaller income shares for couples with minor children (\emph{0-6} and~\emph{7-18}) compared to couples without minor children (\emph{other}), while the densities for \emph{East} Germany (\emph{east}, \emph{northeast}) show more symmetric distributions regardless of the age of the youngest child.
However, the absolute values of the predicted densities resulting from the fair model are at the same level for couples with children aged 0-6 years as for couples with children aged 7-18 years.
Regarding the response densities, this is not the case.
Here, the absolute values of the densities corresponding to \emph{0-6} are lower than the ones for \emph{7-18}.
Furthermore, the trend over the years is not covered well, especially in the discrete model, which shows in the relative frequencies (lower part of the figures):
For the predicted densities resulting from the fair models, we expect an increase of non-working women ($p_0$) and a decrease of dual-earner households ($p_{(0, 1)}$) with time in all regions and for all values of \emph{c\_age}.
For the response densities, these developments are the other way around: $p_0$ tends to decrease, while $p_{(0, 1)}$ tends to increase!
In contrast, comparing the predicted densities resulting from the unfair model (Figure~\ref{figure_predictions}) with the response densities (Figure~\ref{figure_densities}), these issues do not appear, while the shapes of the predicted densities in $s \in (0, 1)$ are still matched nicely.
Finally, we consider the sum of squared errors (SSE) as defined in~(4) 
for both models.
It also leads to the decision to prefer the unfair model as its SSE is only $1436$ and thus smaller than the SSE of the fair model, which is $1704$.

%

Apparently, the fair model is not flexible enough to fit the data well due to the reduced degrees of freedom for the basis over $(0,1)$ for the continuous model and over $\{0, 1, 0.5\}$ for the discrete one.
Thus, we decided to discard the fair model and keep the unfair one instead.

\vspace{-1cm}
\begin{figure}[H]
\begin{minipage}{0.93\textwidth}
\includegraphics[width=\textwidth]{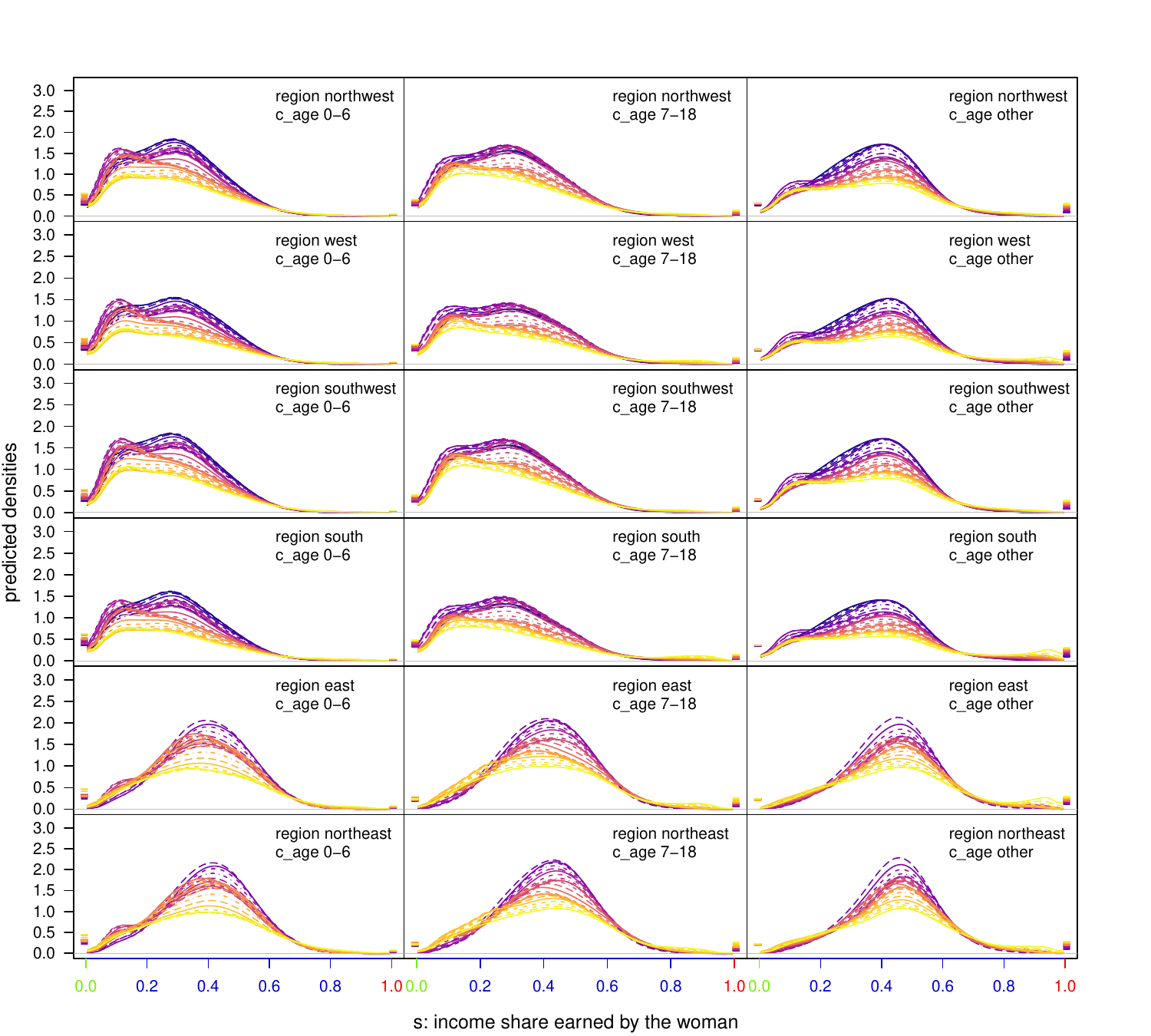} 
\end{minipage}
\begin{minipage}{0.059\textwidth}
\includegraphics[width=1.2\textwidth]{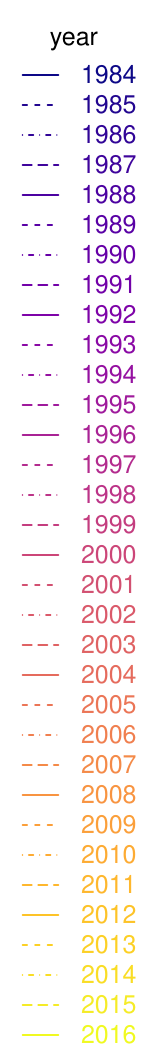}
\end{minipage}
\begin{minipage}{0.93\textwidth}
\vspace*{2mm}
\hspace*{0.8cm}
\includegraphics[width=0.87\textwidth]{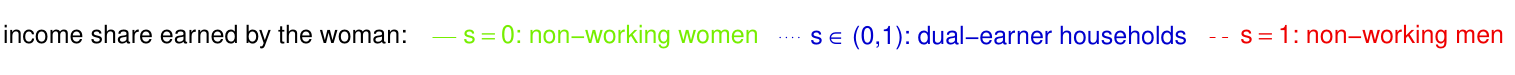} \\[-9mm]
\includegraphics[width=\textwidth]{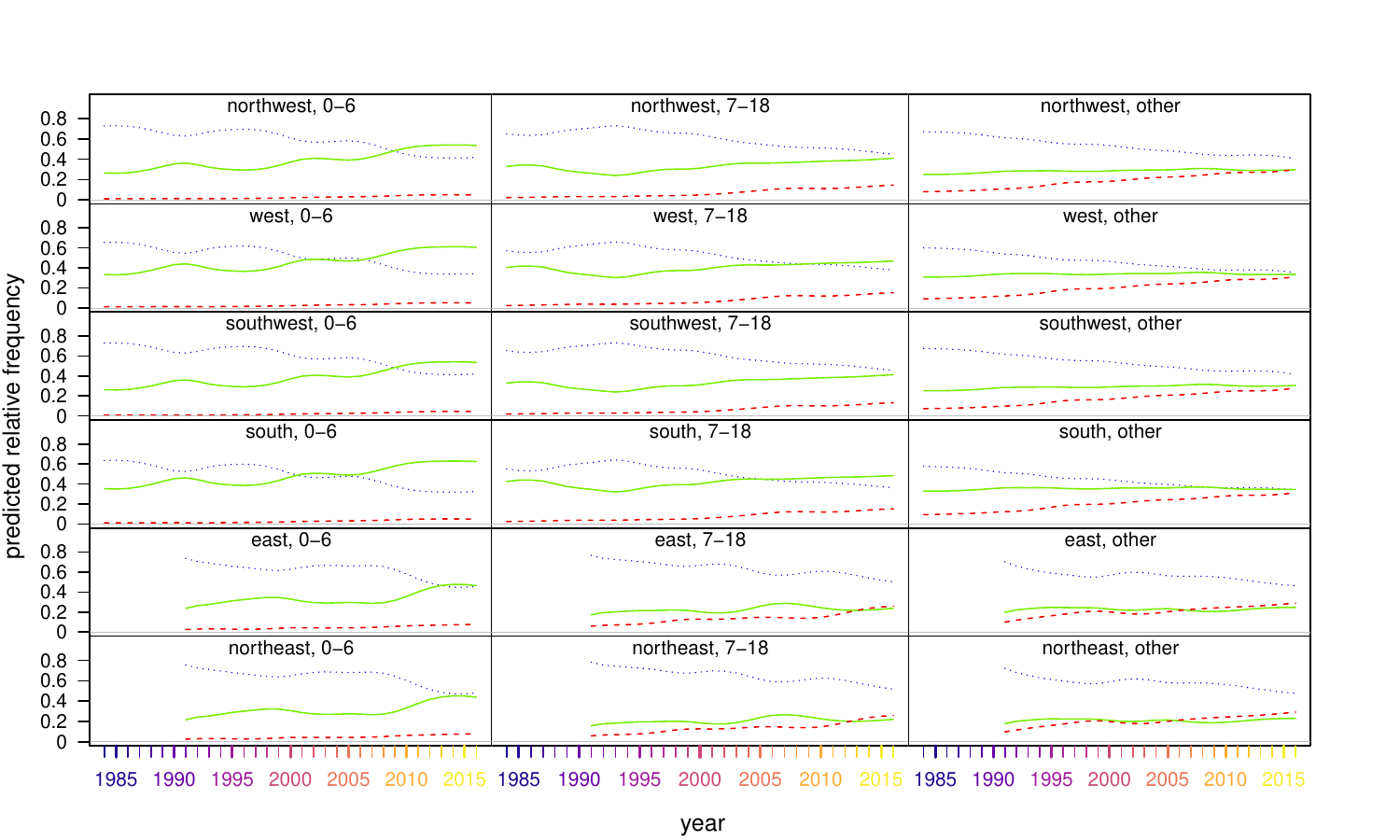}
\end{minipage}
\begin{minipage}{0.059\textwidth}
~
\end{minipage}
\caption{Predicted densities [upper $6 \times 3$ panels] and corresponding relative frequencies [lower $6 \times 3$ panels] resulting from finally discarded fair models for all \emph{regions} [rows] for all three values of \emph{c\_age} [columns].\label{figure_predictions_fair}}
\end{figure}


\vspace{-1cm}
\begin{figure}[H]
\begin{minipage}{0.93\textwidth}
\includegraphics[width=\textwidth]{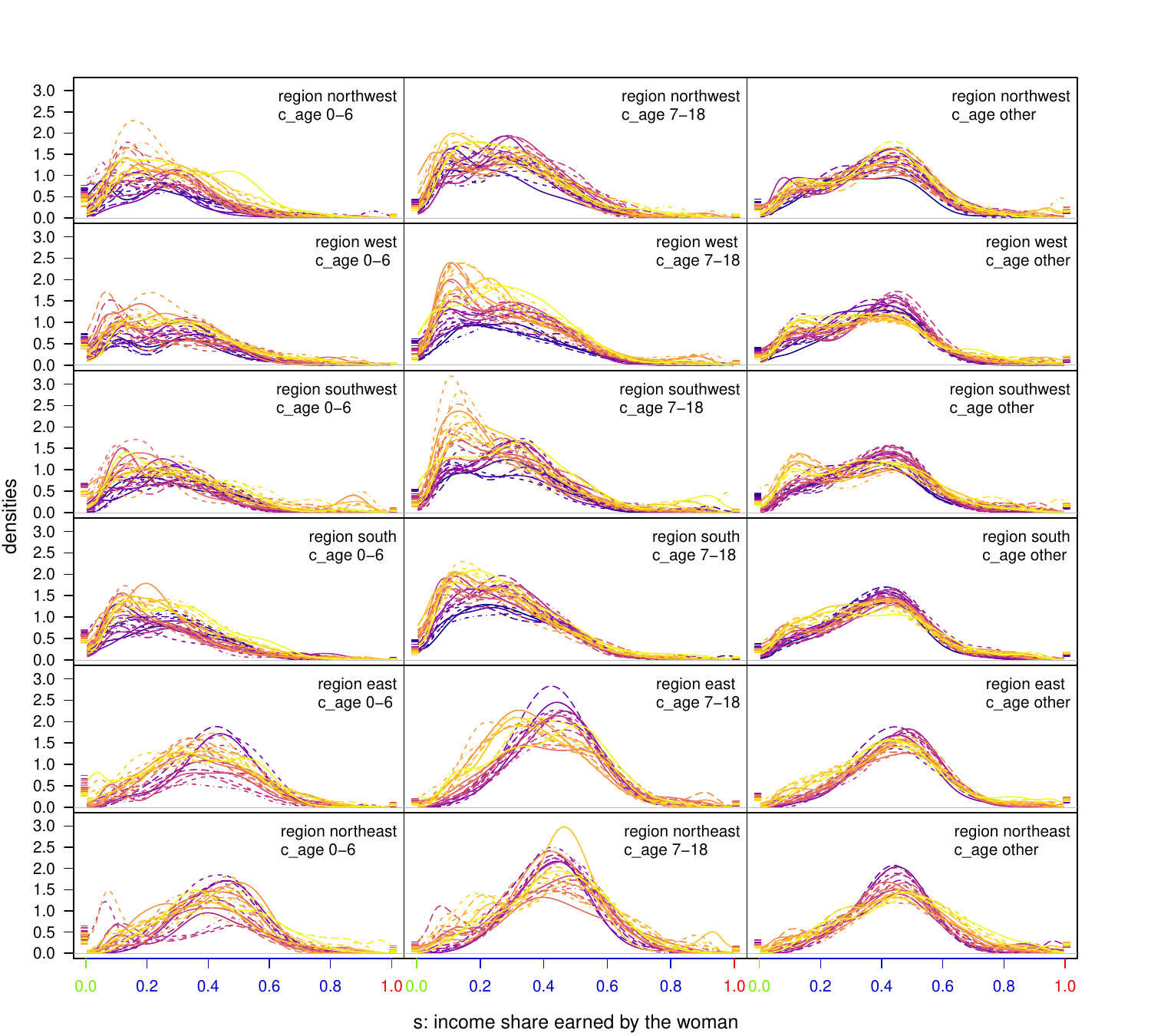} 
\end{minipage}
\begin{minipage}{0.059\textwidth}
\includegraphics[width=1.2\textwidth]{Images/year_legend_v_1.pdf}
\end{minipage}
\begin{minipage}{0.93\textwidth}
\vspace*{2mm}
\hspace*{0.8cm}
\includegraphics[width=0.87\textwidth]{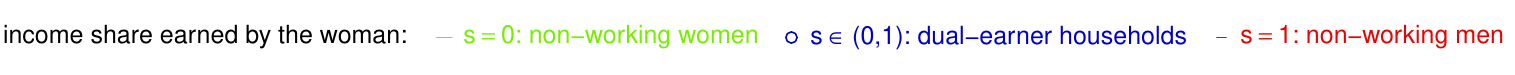} \\[-9mm]
\includegraphics[width=\textwidth]{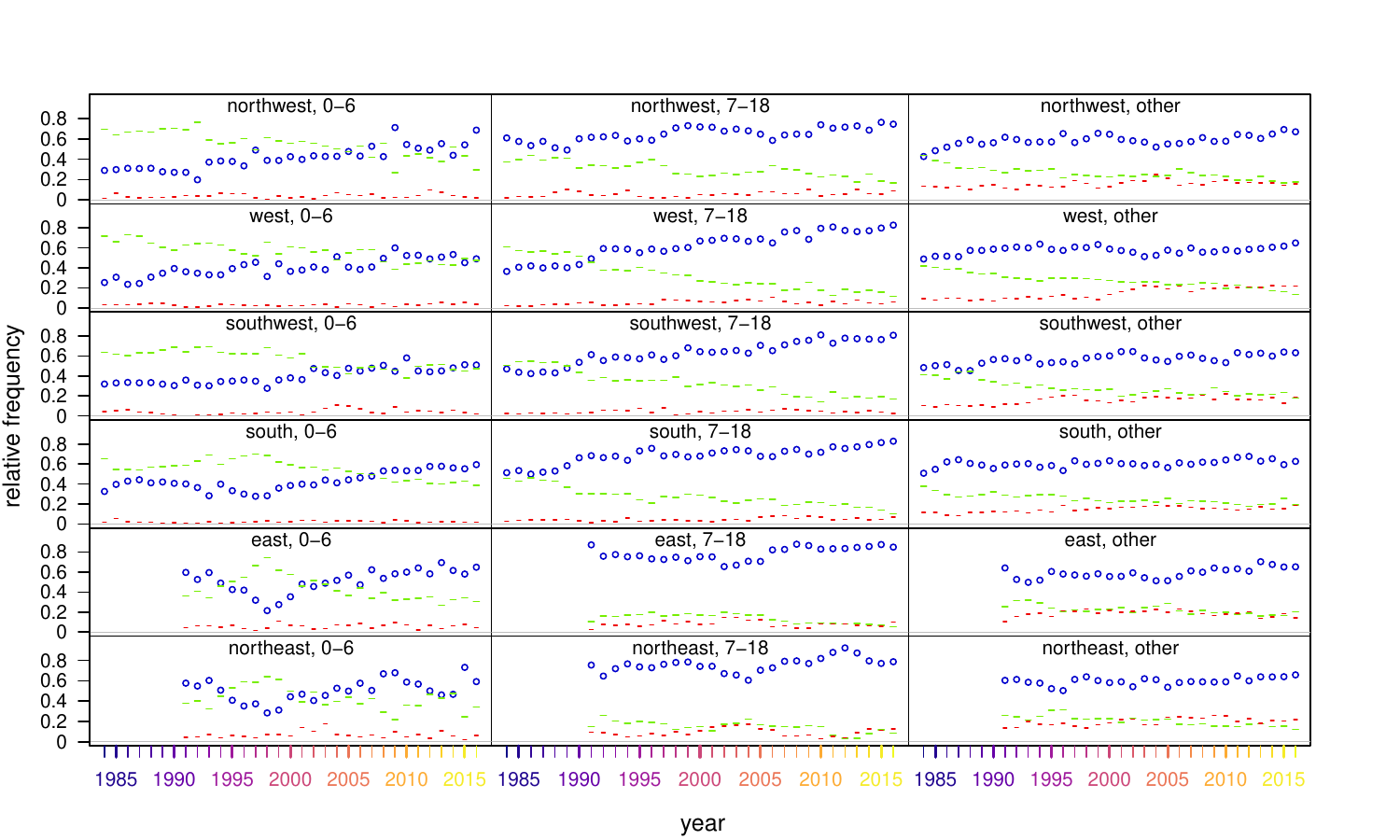}
\end{minipage}
\begin{minipage}{0.059\textwidth}
~
\end{minipage}
\caption{Response densities [upper $6 \times 3$ panels] and corresponding relative frequencies [lower $6 \times 3$ panels] for all \emph{regions} [rows] for all three values of \emph{c\_age} [columns].\label{figure_densities}}
\end{figure}

\vspace{-1cm}
\begin{figure}[H]
\begin{minipage}{0.93\textwidth}
\includegraphics[width=\textwidth]{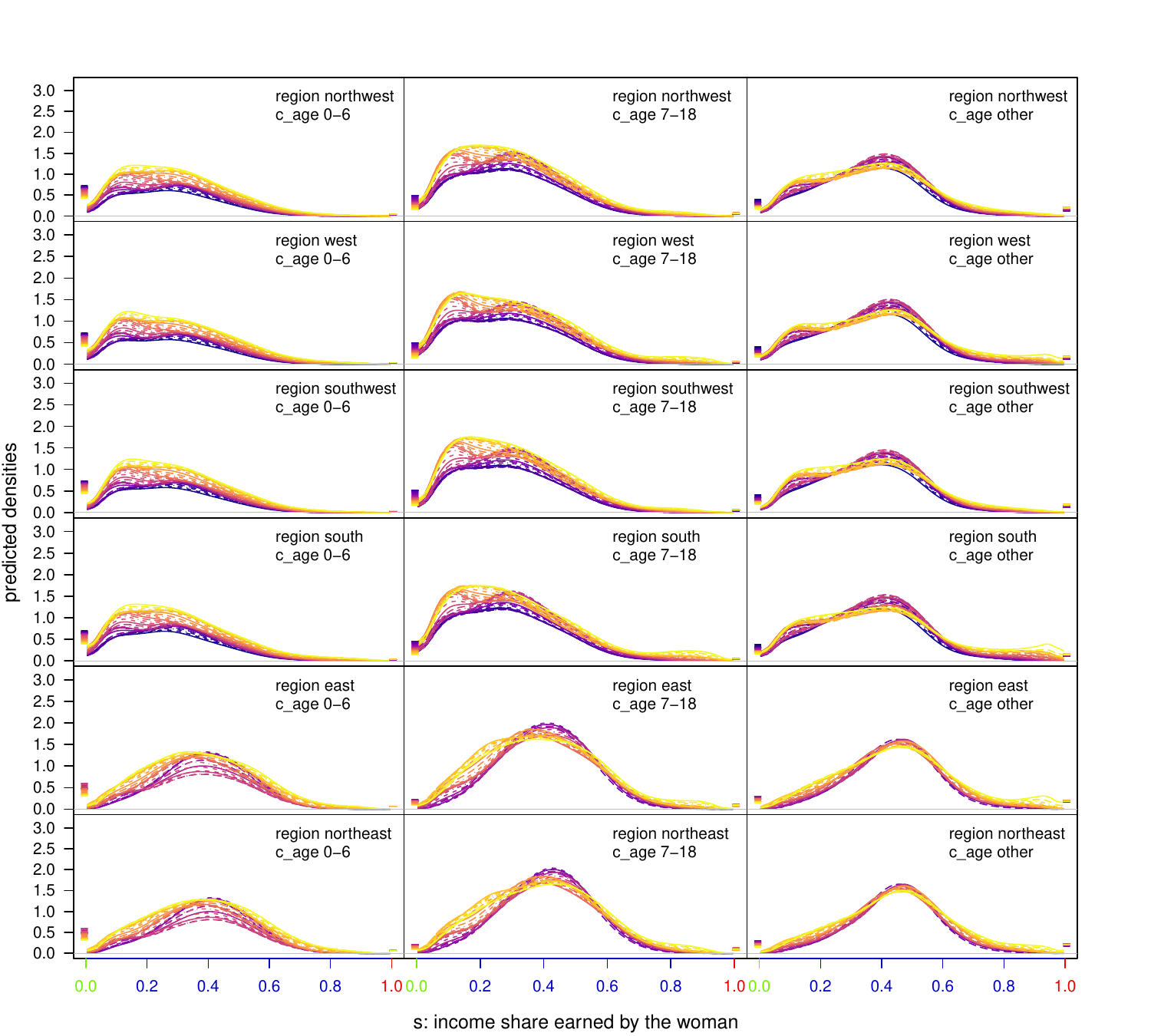} 
\end{minipage}
\begin{minipage}{0.059\textwidth}
\includegraphics[width=1.2\textwidth]{Images/year_legend_v_1.pdf}
\end{minipage}
\begin{minipage}{0.93\textwidth}
\vspace*{2mm}
\hspace*{0.8cm}
\includegraphics[width=0.87\textwidth]{Images/legend_pred_hist_years_h_long.pdf} \\[-9mm]
\includegraphics[width=\textwidth]{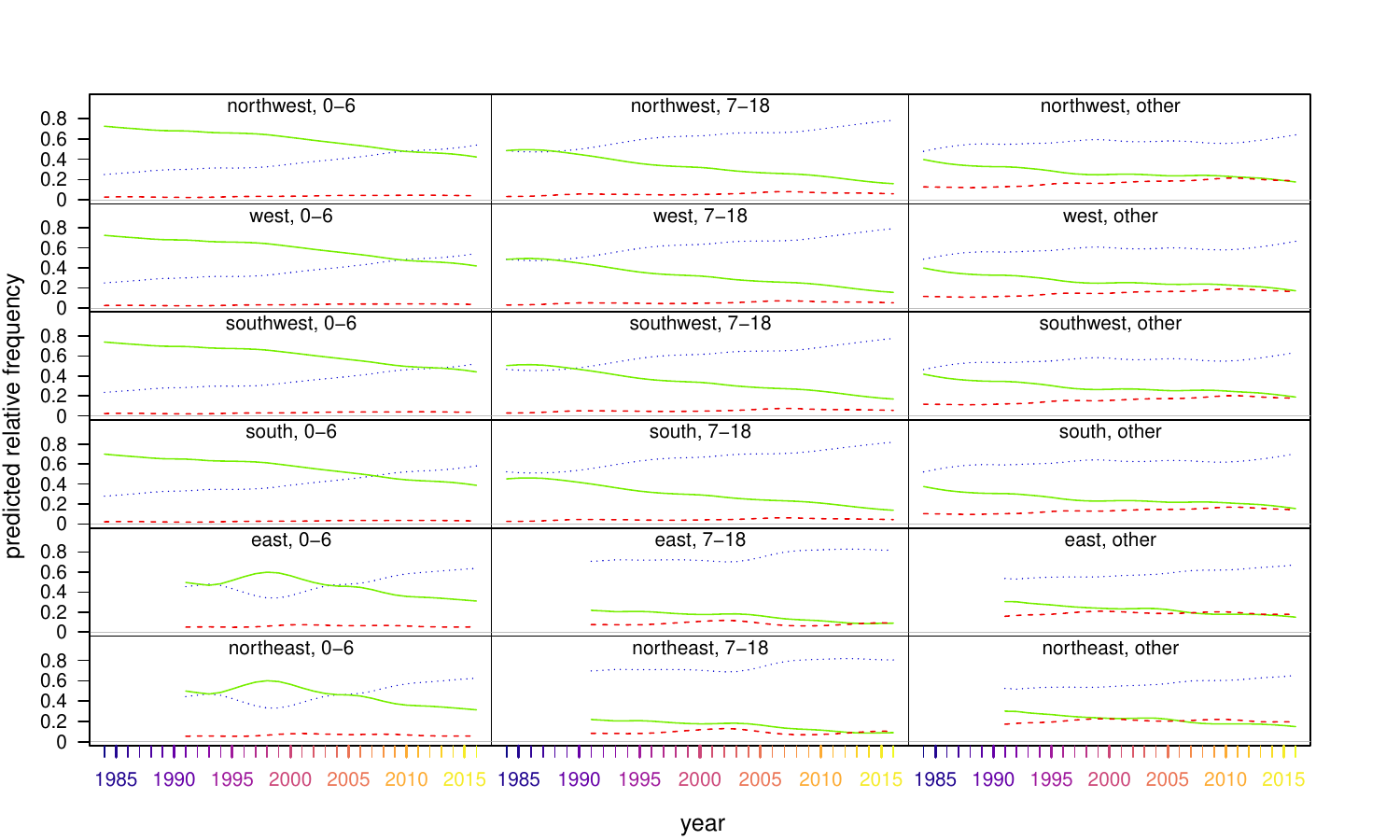}
\end{minipage}
\begin{minipage}{0.059\textwidth}
~
\end{minipage}
\caption{Predicted densities [upper $6 \times 3$ panels] and corresponding relative frequencies [lower $6 \times 3$ panels] resulting from finally used unfair models for all \emph{regions} [rows] for all three values of \emph{c\_age} [columns].\label{figure_predictions}}
\end{figure}

\subsection{Estimated Effects}\label{appendix_estimated_effects}
This section shows all estimated effects of model~(10) 
with  Figures~\ref{appendix_estimated_old_new}-\ref{appendix_estimated_year_old_new_cgroup} structured similar to 
Figure~2. 
The left side shows the perturbation of the intercept with the respective effect and other reasonable effects (e.g., the main effects for interaction effects).
The circles at $0.5$ correspond to the Lebesgue integral of the respective function, i.e., the expected relative frequency of dual-earner households.
On the right side, we illustrate the clr transformed effects to easily allow their interpretation via (log) odds ratios as described in Section~3.2. 
Example interpretations are given for Figures~\ref{appendix_estimated_old_new} and~\ref{appendix_estimated_year}.

\begin{figure}[H]
\begin{center}
\includegraphics[width=0.49\textwidth]{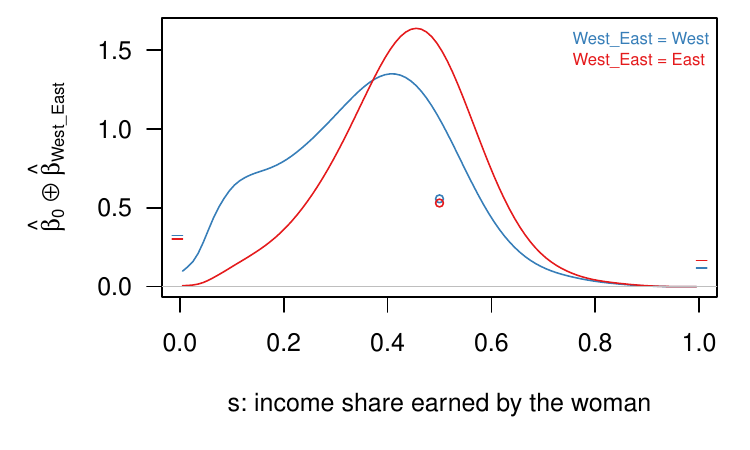}
\includegraphics[width=0.49\textwidth]{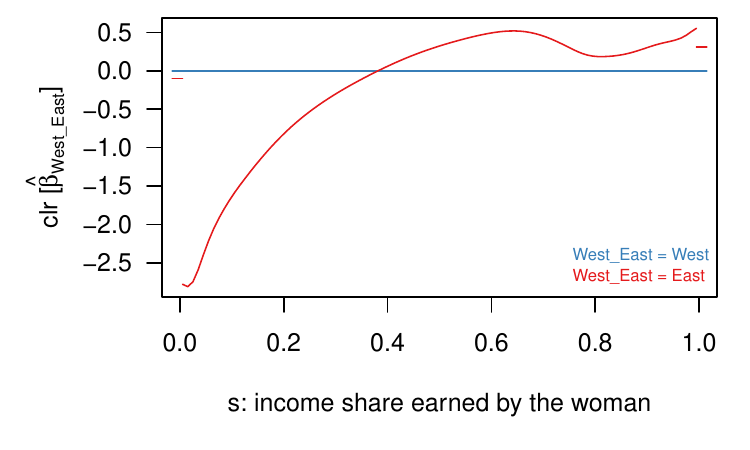}
\end{center}
\vspace{-0.5cm}
\caption{Expected densities for couples without minor children in 1991 for \emph{West} vs. \emph{East} Germany [left] and clr transformed estimated effects of \emph{West\_East} [right]. \label{appendix_estimated_old_new}}
\end{figure}

Figure~\ref{appendix_estimated_old_new} illustrates the estimated effect of \emph{West\_East}.
As \emph{West} is the reference category, we have $\betah_0 \oplus \betah_{West} = \betah_0$ and $\clr [\betah_{West}] = 0$.
The left part of the figure shows the expected densities for couples living in \emph{West} 
versus \emph{East} Germany for the reference, i.e., couples without minor children 
in 1991.
For \emph{West} Germany, the expected density over $(0, 1)$ has a smaller mode 
and probability mass shifted to the left compared to \emph{East} Germany.
Non-working women ($s = 0$) are more frequent in \emph{West} than in \emph{East} Germany, while dual-earner households (circles at $s = 0.5$) and single-earner women ($s = 1$) are more frequent in \emph{East} Germany.
Alternatively, we can interpret the log odds ratio of $\betah_{East}$ and $\betah_{West}$ for $s$ compared to~$t$ for any $s, t \in [0, 1]$ of interest (right).
It 
equals the log odds of $\betah_{East}$, i.e., $\clr [\betah_{East}](s) - \clr [\betah_{East}] (t)$, corresponding to vertical differences in the red curve.
First, we compare the boundary values, 
i.e., single-earner households.
The log odds ratio for $s = 1$ compared to $t = 0$ is $0.31 - (-0.44) = 0.75$, which means that the odds for single-earner 
versus non-working women in \emph{East} Germany are $\exp(0.75) \approx 2.12$ times the odds in \emph{West} Germany.
To compare dual-earner households with non-working women, consider the log odds ratio for $s \in (0, 1)$ and $t = 0$, which is negative for $s < 0.23$ and positive otherwise.
E.g., the log odds ratio for $s = 0.5$ compared to $t = 0$ is $ 0.53 -(-0.44) = 0.97$, i.e., the odds for equal earning couples versus non-working women in \emph{East} Germany are $\exp(0.97) \approx 2.64$ times the odds in \emph{West} Germany.
The log odds ratio for $s = 1$ (single-earner women) compared to $t \in (0, 1)$ (dual-earner households) is positive for $t < 0.42$ and negative for larger $t$.
E.g., for $t = 0.5$, the log odds ratio is $0.31 - 0.53 = -0.22$, i.e., the odds for single-earner women versus equal earning couples in \emph{East} Germany are $\exp(-0.22) \approx 0.8$ times the odds in \emph{West} Germany.
%
Within dual-earner households, i.e., for $s, t \in (0, 1)$, the log odds ratio of $\betah_{East}$ and $\betah_{West}$ for $s$ compared to $t$ is mostly positive for $t < s$ as $\clr [\betah_{East}]$ increases monotonically 
(except between $0.7$ and $0.8$).
Thus, the odds for a larger 
versus a smaller income share are larger in \emph{East} than in \emph{West} Germany.

\begin{figure}[H]
\begin{center}
\includegraphics[width=0.49\textwidth]{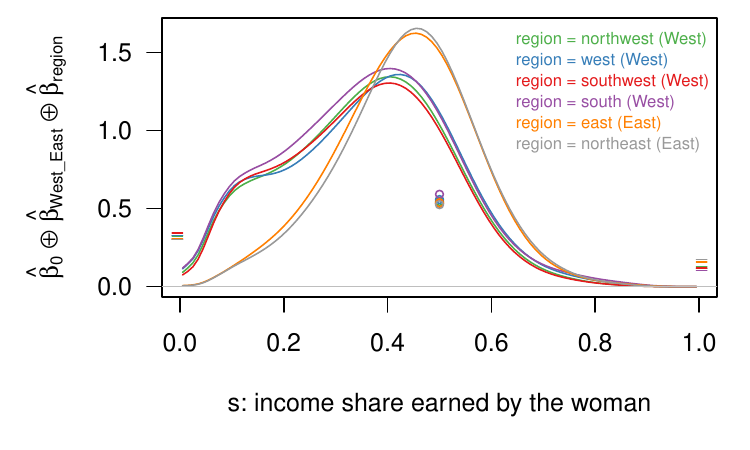}
\includegraphics[width=0.49\textwidth]{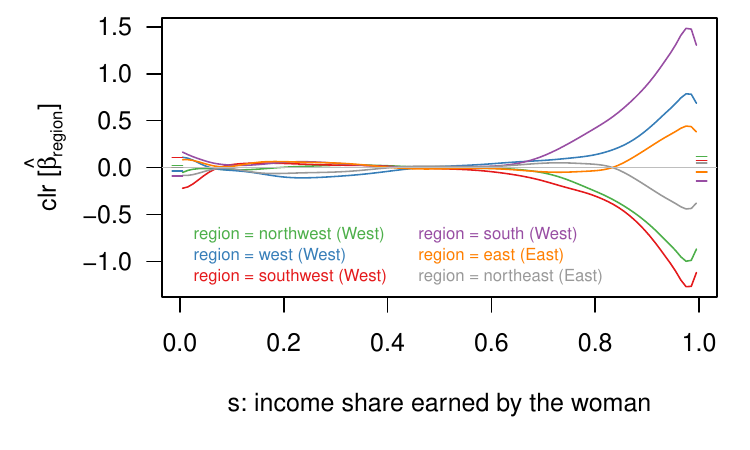}
\end{center}
\vspace{-0.5cm}
\caption{Expected densities for couples without minor children in 1991 living in the different \emph{regions} [left] and clr transformed estimated effects of \emph{region} [right].}
\end{figure}

\begin{figure}[H]
\begin{center}
\includegraphics[width=0.49\textwidth]{Images/estimated_child_group_intercept.pdf}
\includegraphics[width=0.49\textwidth]{Images/estimated_child_group_clr.pdf}
\end{center}
\vspace{-0.5cm}
\caption{Expected densities for couples living in \emph{West} Germany in 1991 for all three values of \emph{c\_age} [left] and clr transformed estimated effects of \emph{c\_age} [right].}
\end{figure}

\begin{figure}[H]
\begin{center}
\includegraphics[width=0.49\textwidth]{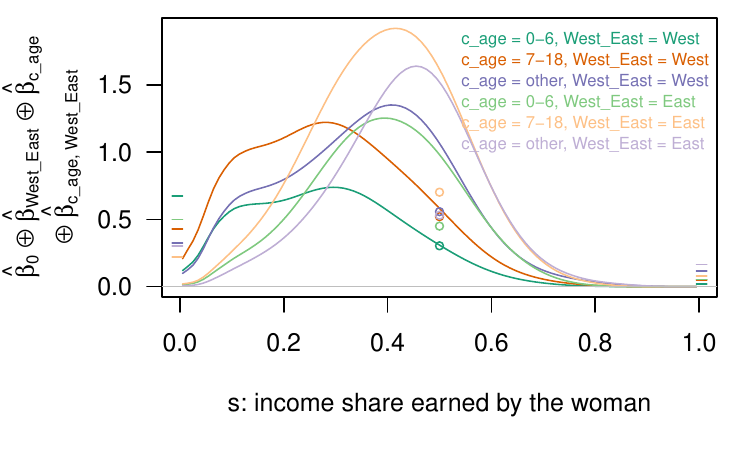}
\includegraphics[width=0.49\textwidth]{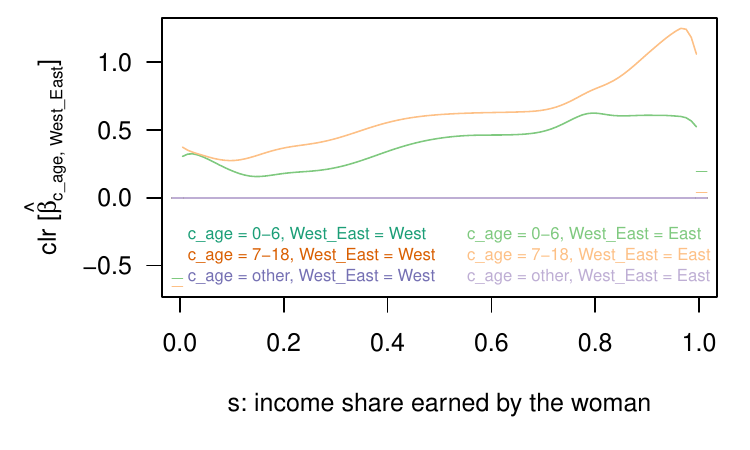}
\end{center}
\vspace{-0.5cm}
\caption{Expected densities for couples in 1991 for all three values of \emph{c\_age} living in \emph{West} vs. \emph{East} Germany [left] and clr transformed estimated interaction effects of \emph{c\_age} and \emph{West\_East} [right].}
\end{figure}

\begin{figure}[H]
\begin{center}
\includegraphics[width=0.49\textwidth]{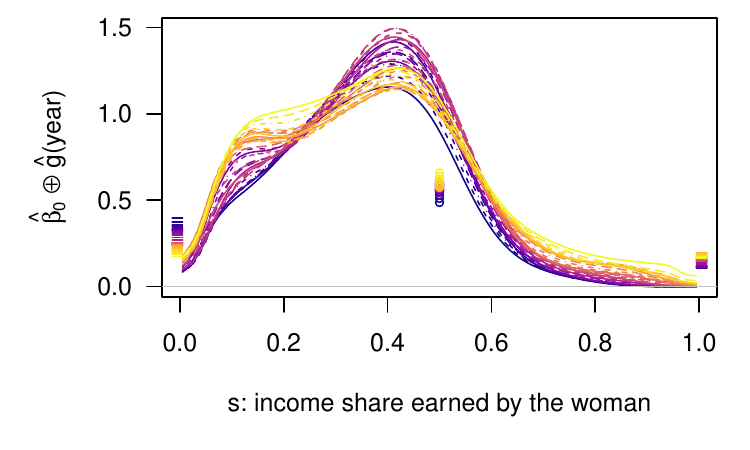}
\includegraphics[width=0.49\textwidth]{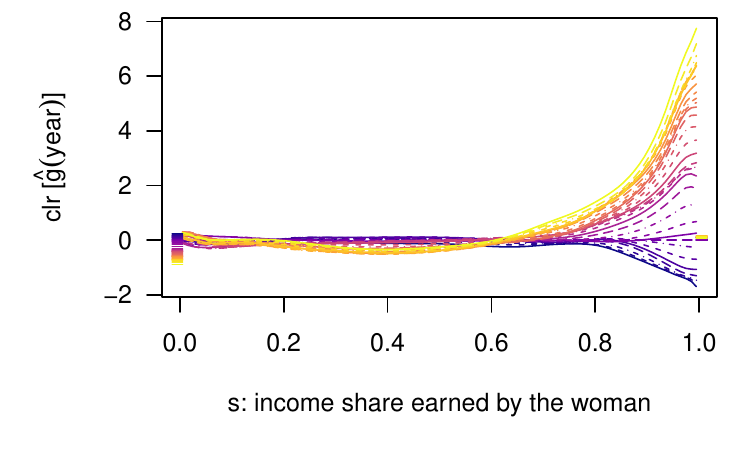}
\includegraphics[width=0.7\textwidth]{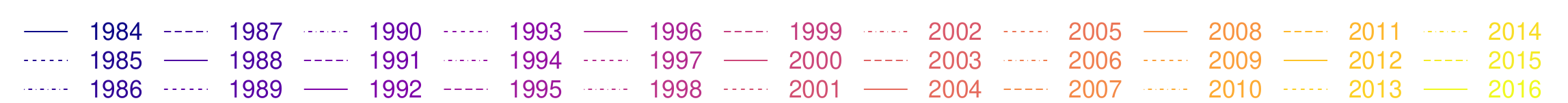}
\end{center}
\vspace{-0.5cm}
\caption{Expected densities for couples without minor children living in \emph{West} Germany over time [left] and clr transformed estimated effects of \emph{year} [right].\label{appendix_estimated_year}}
\end{figure}

Figure~\ref{appendix_estimated_year} shows the flexible nonlinear effect of \emph{year}. 
Here, we observe a clear temporal trend towards more dispersed distributions of shares in~$(0, 1)$.
In the left panel, this is clearly visible.
The mode of the expected densities for couples without minor children living in \emph{West} Germany 
stays approximately the same (about $0.4$) with probability mass shifting outwards over time.
In more recent years, the expected densities tend to have a second maximum further left and a heavier tail on the right.
Furthermore, the expected relative frequency of non-working women ($s = 0$) decreases with time, while the frequency of single-earner women ($s = 1$) increases to now more similar levels.
%
The clr transformed effects (right) support our finding of dispersing densities on~$(0, 1)$.
Before 1991, the clr transformed effects tend to be smaller for low and high income shares (e.g., for $s \in A = (0, 0.3) \cup (0.6, 1)$) than for income shares in between (e.g., for $t \in B = (0.35, 0.45)$).
After 1991, this reverses. 
Thus, using Proposition~3.1 (a), 
the odds of the probabilities for the outer region $A$ versus the more central region $B$ are smaller for earlier \emph{years} than in later \emph{years}. 
We can conclude that the probability of $A$ increases and/or the probability of $B$ decreases with time.
The clr transformed effects get particularly large for high income shares $s < 1$, which is not visible on the level of the original densities, where the absolute values of the corresponding densities in this area are small (left).
This is due to the multiplicative effect structure, for which small (absolute) differences can correspond to large relative differences within the densities.

\begin{figure}[H]
\begin{center}
\includegraphics[width=0.49\textwidth]{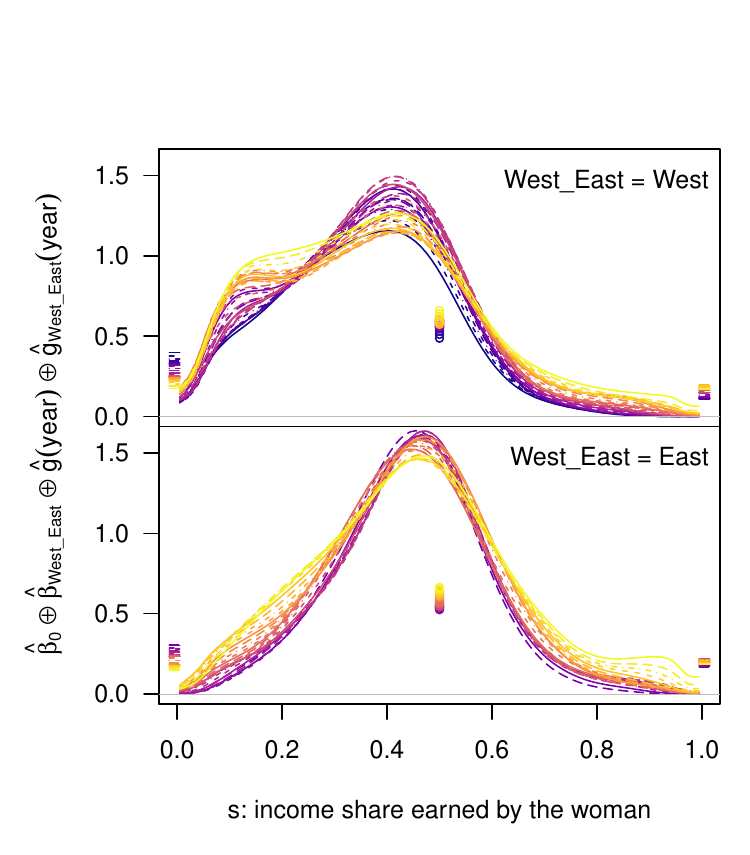}
\includegraphics[width=0.49\textwidth]{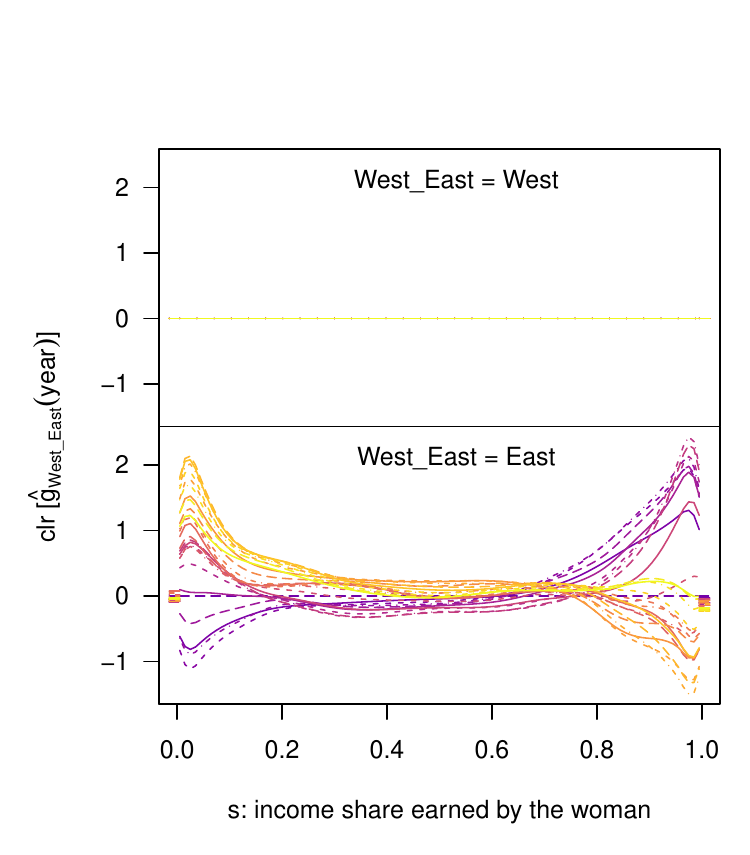}
\includegraphics[width=0.7\textwidth]{Images/year_legend.pdf}
\end{center}
\vspace{-0.5cm}
\caption{Expected densities for couples without minor children living in \emph{West} vs. \emph{East} Germany over time [left] and clr transformed estimated interaction effects of \emph{West\_East} and \emph{year} [right].}
\end{figure}

\vspace{-1cm}
\begin{figure}[H]
\begin{center}
\includegraphics[width=0.49\textwidth]{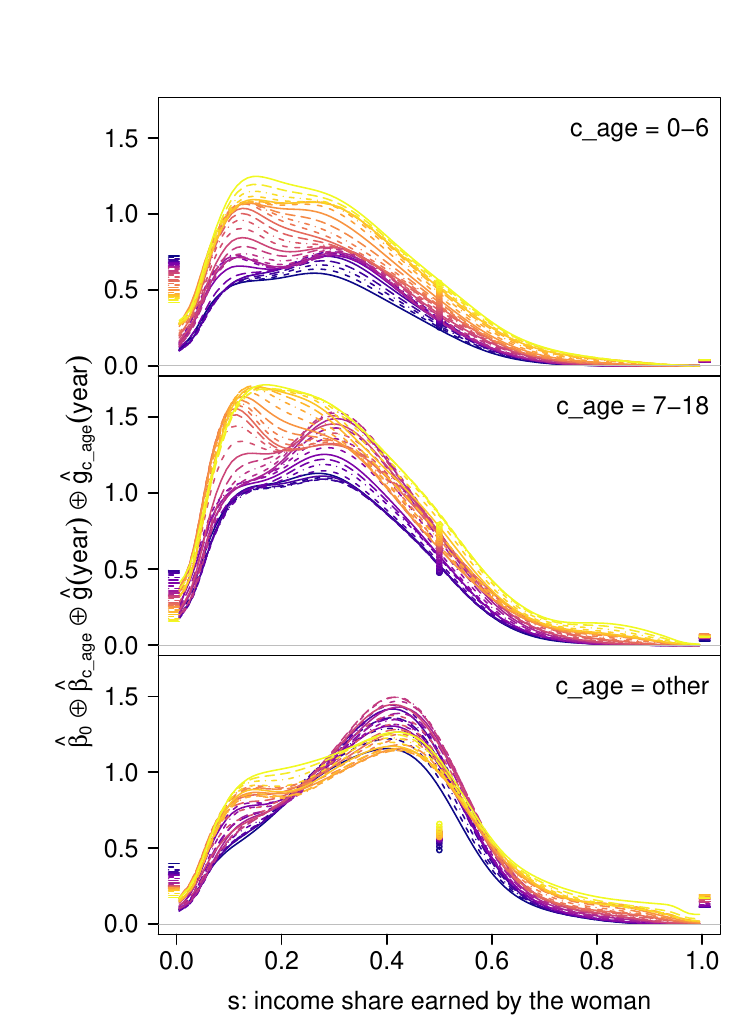}
\includegraphics[width=0.49\textwidth]{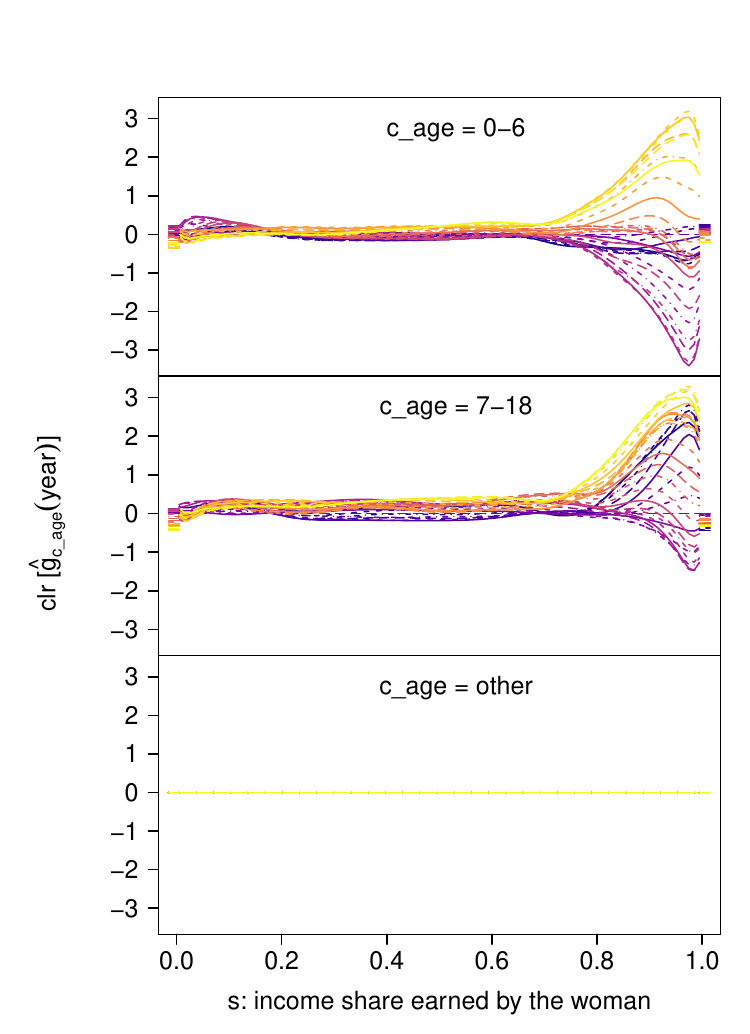}
\includegraphics[width=0.7\textwidth]{Images/year_legend.pdf}
\end{center}
\vspace{-0.5cm}
\caption{Expected densities for couples living in \emph{West} Germany for all three values of \emph{c\_age} over time [left] and clr transformed estimated interaction effects of \emph{c\_age} and \emph{year} [right]. \label{appendix_old_cage_year}}
\end{figure}
%

\vspace{-1cm}
\begin{figure}[H]
\begin{center}
\includegraphics[width=0.49\textwidth]{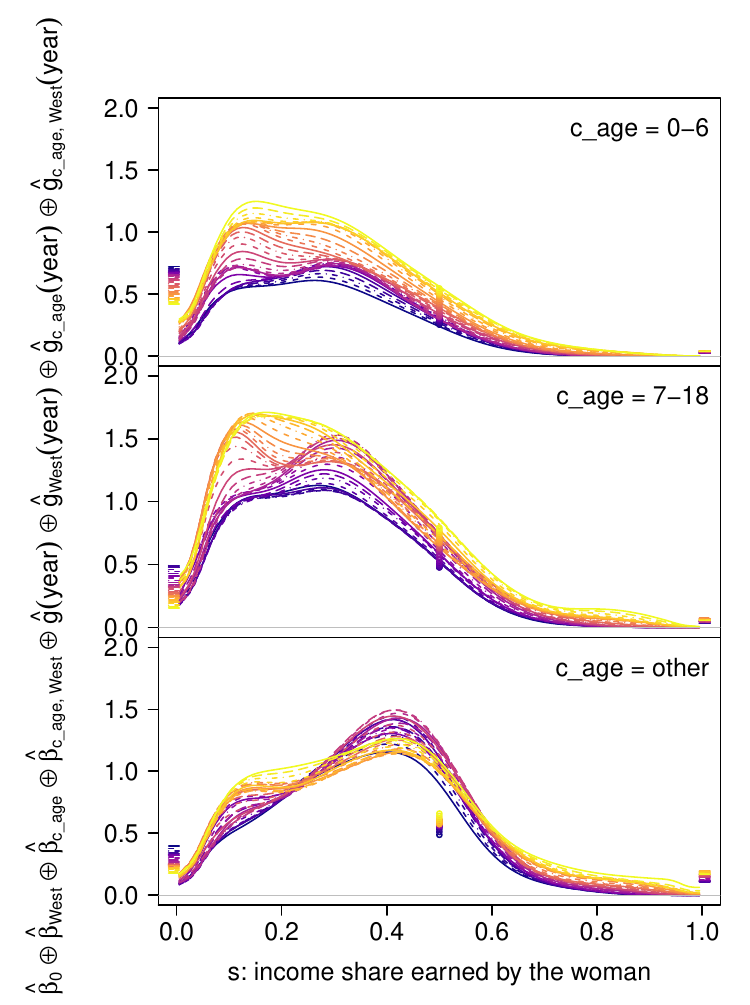}
\includegraphics[width=0.49\textwidth]{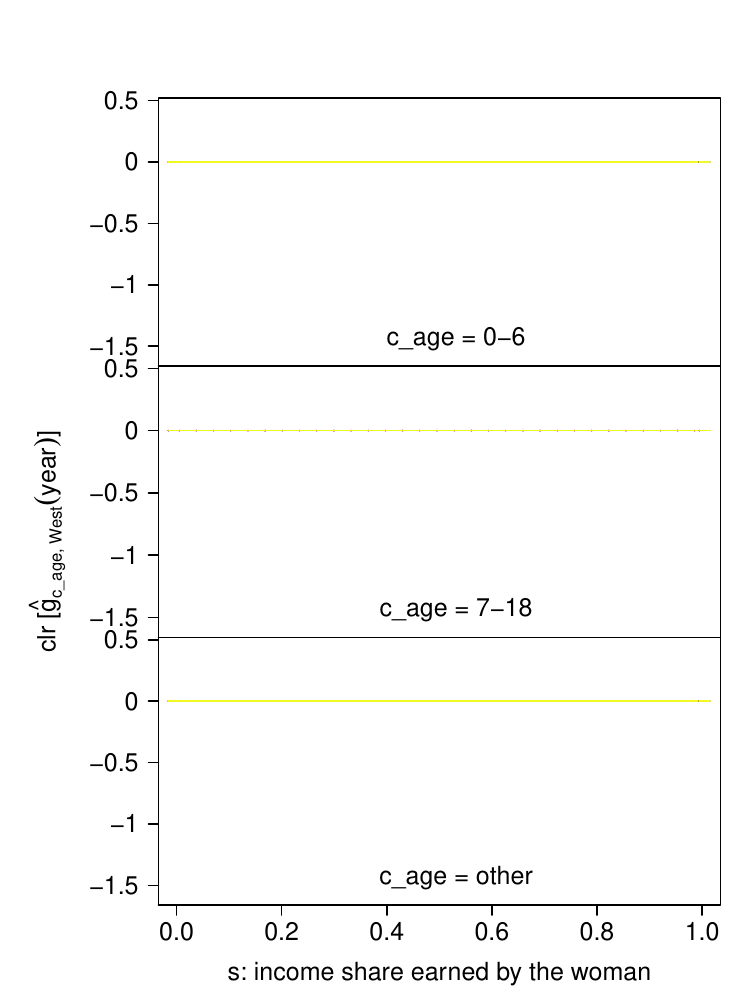}
\includegraphics[width=0.49\textwidth]{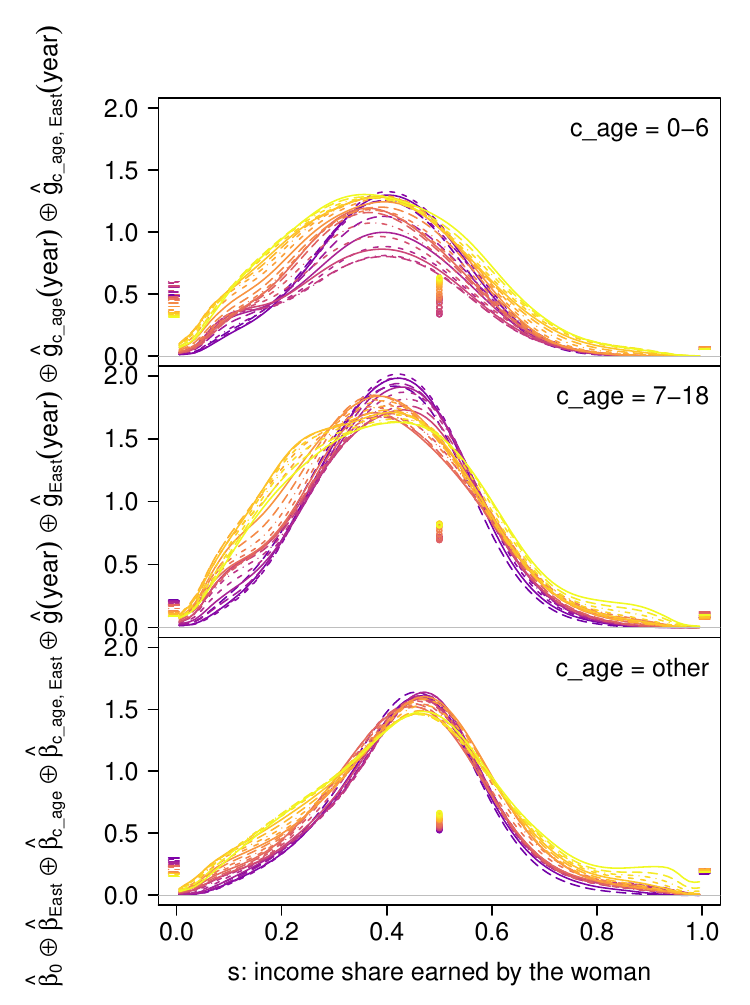}
\includegraphics[width=0.49\textwidth]{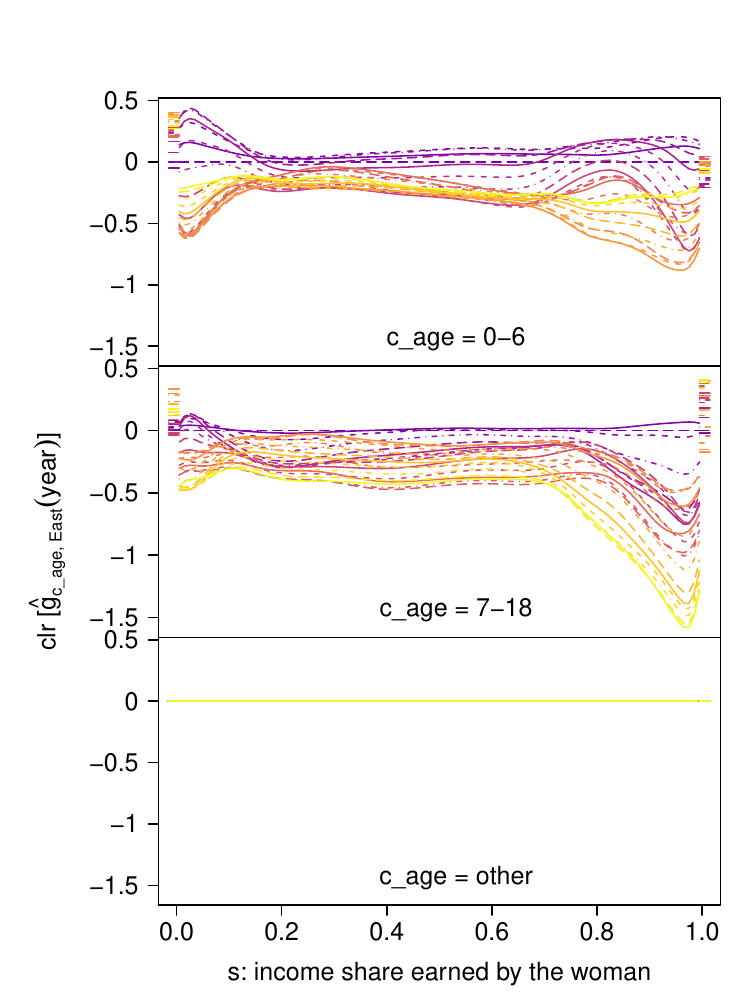}
\includegraphics[width=0.7\textwidth]{Images/year_legend.pdf}
\end{center}
\vspace{-0.5cm}
\caption{Expected densities for couples living in \emph{West} [top] vs. living in \emph{East} Germany [bottom] for all three values of \emph{c\_age} over time [left] and clr transformed estimated interaction effects of \emph{c\_age}, \emph{West\_East} and \emph{year} [right].\label{appendix_estimated_year_old_new_cgroup}}
\end{figure}

\section{Simulation study}\label{appendix_simulation}

\subsection{Definition of relMSE}\label{appendix_simulation_relmse_def}
Consider the setting of our simulation study in Section~5. 
There, we use the relative mean squared error (relMSE) motivated by \citet{brh2015} to evaluate the goodness of the estimation results. 
For predictions and estimated partial effects it is defined as
\begin{align*}
\rMSE (\eh)
:= \frac{\frac{1}{\upsilon (\Ycal)} \, \int_{\Ycal} \Vert E(y) \ominus \eh(y) \Vertb^2 \, \dups (y)}{\frac{1}{\upsilon (\Ycal)} \, \int_{\Ycal} \Vert E(y) \ominus \bar{E} \Vertb^2 \, \dups (y)}
= \frac{\int_{\Ycal} \Vert E(y) \ominus \eh(y) \Vertb^2 \, \dups (y)}{\int_{\Ycal} \Vert E(y) \Vertb^2 \, \dups (y)},
\end{align*}
where $\Ycal$ denotes the set $\{1, \ldots, 552\}$ for predictions, the set of possible values for categorical covariates (group-specific effects), e.g., $\{ \emph{West}, \emph{East} \}$ for the covariate \emph{West\_East}, or the observed range for scalar covariates (linear/flexible effects), e.g., $[1984, 2016]$ for \emph{year}.
For effects depending on more than one covariate, $\Ycal$ is the Cartesian product of the appropriate sets.
The measure $\upsilon$ is the counting measure, the Lebesgue measure, or a product measure thereof, respectively.
The estimated densities are denoted by $\eh (y) \in \B$ for $y \in \Ycal$, corresponding to $\fh_i = \fh (i), i \in \Ycal$ for predictions or $\hh_j(\xf), \xf \in \Ycal$ for estimated effects.
Analogously, the true densities are denoted by $E(y)$.
Their overall mean, $\bar{E} := 1 / \upsilon (\Ycal) 
\int_{\Ycal} \int_\Tcal E(y) \, \dmu \, \dups (y)$, is $0 \in \B$ as a constant.

\subsection{RelMSEs and MSEs for all effects}\label{appendix_simulation_relmse}
Figure~\ref{relimse_effects} shows the complete simulation results.
The left side illustrates the relMSEs (see Section~5) 
for the predictions and all partial effects.
The boxplots on the right correspond to the respective mean squared errors (MSEs), i.e., the numerators of the relMSEs.
Furthermore, the denominators, i.e., the mean squared norms of the true effects, are added in form of a blue ``x''.
The right side shows that larger relMSEs, in particular for $\betah_{region}$, $\betah_{\text{\emph{c\_age, West\_East}}}$, $\gh_{\text{\emph{West\_East}}} (year)$, $\gh_{c\_age} (year)$, and $\gh_{\text{\emph{c\_age, West\_East}}}(year)$, arise from the mean squared norm of the true effects for the respective effects being small.
This means, the relative mean squared errors are large, because the true effects are small but not because the errors are large.
\begin{figure}[H]
\begin{center}
\includegraphics[width=0.49\textwidth]{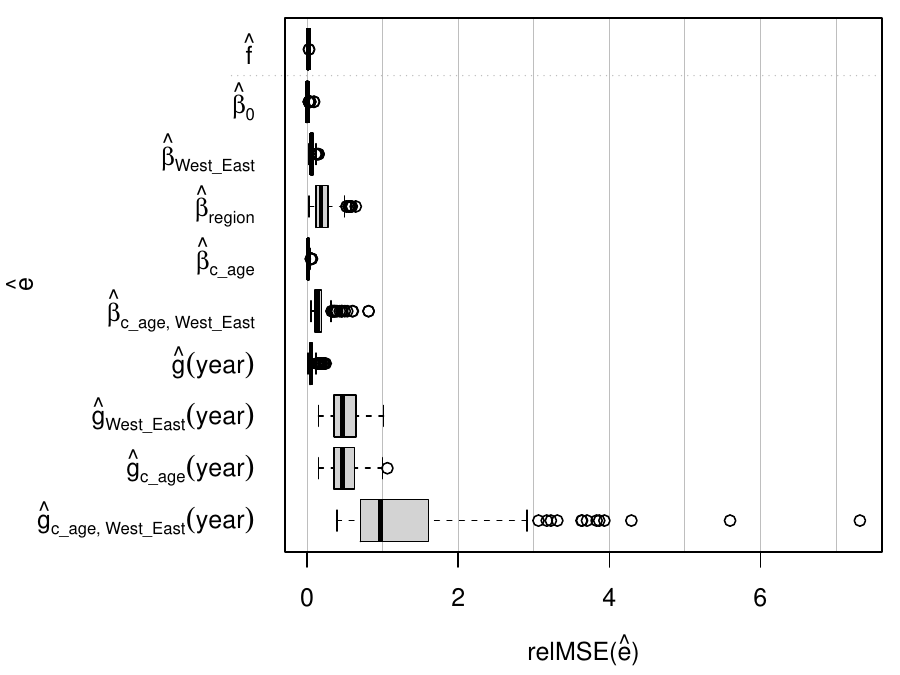}
\includegraphics[width=0.49\textwidth]{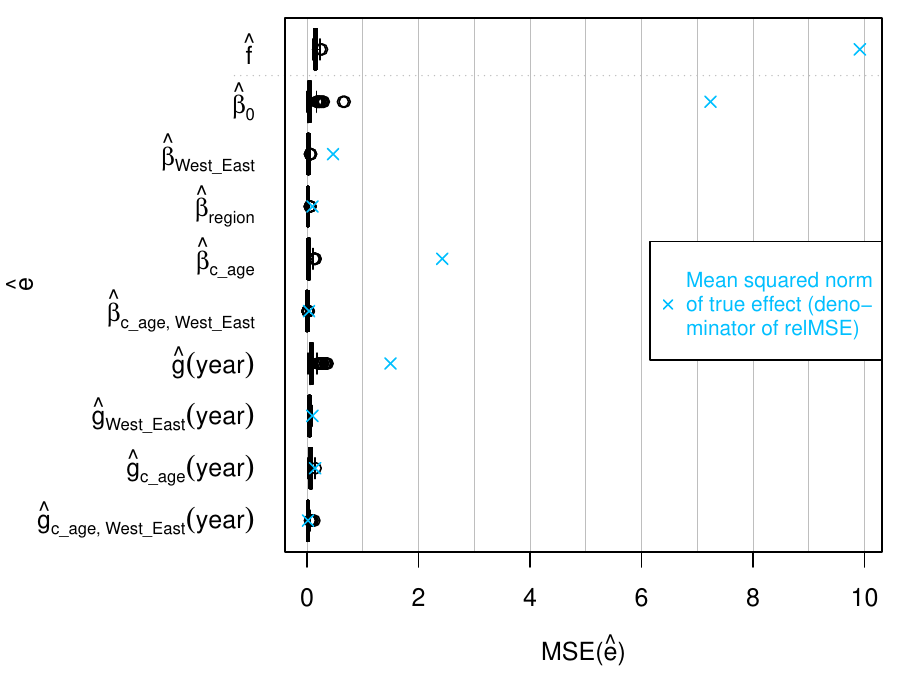}
\end{center}
\vspace{-0.5cm}
\caption{RelMSE [left] and MSE [right] for predictions [top] and all partial effects [bottom]. \label{relimse_effects}}
\end{figure}

\subsection{Model selection}\label{appendix_simulation_selected_effects}
Table~\ref{table_not_selcted_effects} summarizes how many times effects are not selected over the $200$ simulation runs.
It contains the counts for the separately estimated continuous and discrete models, as well as for the final combined model in the last three columns, each of which sums up to 200 (total number of simulation runs).
The rows of the table are grouped by the number of effects that are not selected in a simulation run, ranging from no effects (i.e., all effects are selected) to three effects.
The table contains all effects (second column) that are not selected in at least one simulation run in either the continuous or the discrete model.
These are exactly the four interaction effects.
In particular, the main effects are selected in all simulation runs in both models (continuous and discrete).
Note that as soon as one effect is selected in either the continuous or the discrete model, it is also selected in the combined model.
Or, put differently, for an effect to be not selected in the combined model, it must not be selected in neither the continuous nor the discrete model.
This explains that in the combined model, there are only few simulation runs, where an effect is not selected at all ($4$ in total), while for the separate models the numbers are noticeably higher.
Most remarkably, in the continuous model, $\betah_{\text{\emph{c\_age, West\_East}}}$ is not selected in $131$ simulation runs in total (including simulation runs, where additional effects are not selected).


\ifnum\value{aoas}=1
{
\begin{table}[H]
\caption{Counts of effects not selected over the 200 simulation runs.
\label{table_not_selcted_effects}}
\begin{center}
\begin{tabular}{l|l|c|c|c}
\hline
& \textbf{Effect(s) not selected} & \multicolumn{3}{c}{\textbf{Number of simulation runs}}\\
\cline{3-5}
& & continuous & discrete & combined \\
& & model & model & model \\
\hline
All effects selected & & 60 & 163 & 196 \\
 \hline
One effect not & $\betah_{\text{\emph{c\_age, West\_East}}}$ & 118 & 0 & 0 \\
\hspace{0.1cm} selected & $\gh_{\text{\emph{West\_East}}} (year)$ & 2 & 33 & 1\\
& $\gh_{\text{\emph{c\_age}}} (year)$ & 2 & 1 & 1 \\
& $\gh_{\text{\emph{c\_age, West\_East}}}(year)$ & 5 & 1 & 1 \\
\hline
Two effects not & $\betah_{\text{\emph{c\_age, West\_East}}}$,  & \multirow{2}{*}{1} & \multirow{2}{*}{0} & \multirow{2}{*}{0} \\
\hspace{0.1cm} selected & \hspace{0.1cm} $\gh_{\text{\emph{West\_East}}} (year)$ & & & \\
 & $\betah_{\text{\emph{c\_age, West\_East}}}$, & \multirow{2}{*}{3} & \multirow{2}{*}{0} & \multirow{2}{*}{0} \\
& \hspace{0.1cm} $\gh_{\text{\emph{c\_age}}} (year)$ & & & \\
& $\betah_{\text{\emph{c\_age, West\_East}}}$,  & \multirow{2}{*}{8} & \multirow{2}{*}{0} & \multirow{2}{*}{0} \\
& \hspace{0.1cm} $\gh_{\text{\emph{c\_age, West\_East}}}(year)$ & & & \\
& $\gh_{\text{\emph{West\_East}}} (year)$, & \multirow{2}{*}{0} & \multirow{2}{*}{2} & \multirow{2}{*}{1} \\
& \hspace{0.1cm} $\gh_{\text{\emph{c\_age}}} (year)$ & & & \\
\hline
Three effects not & $\betah_{\text{\emph{c\_age, West\_East}}}$, & \multirow{3}{*}{1} & \multirow{3}{*}{0} & \multirow{3}{*}{0} \\
\hspace{0.1cm} selected & \hspace{0.1cm} $\gh_{\text{\emph{West\_East}}} (year)$, & & & \\
 & \hspace{0.1cm} $\gh_{\text{\emph{c\_age}}} (year)$ & & & \\
\hline
\end{tabular}
\end{center}
\end{table} 
} \else
{
\begin{table}[H]
\begin{center}
\begin{tabular}{l|l|c|c|c}
& \textbf{Effect(s) not selected} & \multicolumn{3}{c}{\textbf{Number of simulation runs}}\\
\cline{3-5}
& & continuous & discrete & combined \\
& & model & model & model \\
\hline
All effects selected & & 60 & 163 & 196 \\
 \hline
One effect not & $\betah_{\text{\emph{c\_age, West\_East}}}$ & 118 & 0 & 0 \\
\hspace{0.1cm} selected & $\gh_{\text{\emph{West\_East}}} (year)$ & 2 & 33 & 1\\
& $\gh_{\text{\emph{c\_age}}} (year)$ & 2 & 1 & 1 \\
& $\gh_{\text{\emph{c\_age, West\_East}}}(year)$ & 5 & 1 & 1 \\
\hline
Two effects not & $\betah_{\text{\emph{c\_age, West\_East}}}$,  & \multirow{2}{*}{1} & \multirow{2}{*}{0} & \multirow{2}{*}{0} \\
\hspace{0.1cm} selected & \hspace{0.1cm} $\gh_{\text{\emph{West\_East}}} (year)$ & & & \\
 & $\betah_{\text{\emph{c\_age, West\_East}}}$, & \multirow{2}{*}{3} & \multirow{2}{*}{0} & \multirow{2}{*}{0} \\
& \hspace{0.1cm} $\gh_{\text{\emph{c\_age}}} (year)$ & & & \\
& $\betah_{\text{\emph{c\_age, West\_East}}}$,  & \multirow{2}{*}{8} & \multirow{2}{*}{0} & \multirow{2}{*}{0} \\
& \hspace{0.1cm} $\gh_{\text{\emph{c\_age, West\_East}}}(year)$ & & & \\
& $\gh_{\text{\emph{West\_East}}} (year)$, & \multirow{2}{*}{0} & \multirow{2}{*}{2} & \multirow{2}{*}{1} \\
& \hspace{0.1cm} $\gh_{\text{\emph{c\_age}}} (year)$ & & & \\
\hline
Three effects not & $\betah_{\text{\emph{c\_age, West\_East}}}$, & \multirow{3}{*}{1} & \multirow{3}{*}{0} & \multirow{3}{*}{0} \\
\hspace{0.1cm} selected & \hspace{0.1cm} $\gh_{\text{\emph{West\_East}}} (year)$, & & & \\
 & \hspace{0.1cm} $\gh_{\text{\emph{c\_age}}} (year)$ & & &
\end{tabular}
\caption{Counts of effects not selected over the 200 simulation runs.
\label{table_not_selcted_effects}}
\vspace{-0.5cm}
\end{center}
\end{table} 
} \fi

\addcontentsline{toc}{section}{References}
\printbibliography[heading=bibliography]
\end{refsection}

\end{document}